\newcommand{\longversion}[1]{\ifthenelse{\boolean{SHORT}}{}{#1}}%
\newcommand{\shortversion}[1]{\ifthenelse{\boolean{SHORT}}{#1}{}}%
\newcommand*\blacklaptop{\includegraphics[height=\heightof{M}]{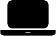}}
\newcommand*\whitelaptop{\includegraphics[height=\heightof{M}]{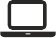}}
\newlist{longenum}{enumerate}{5}
\setlist[longenum,1]{label=\roman*)}
\setlist[longenum,2]{label=\alph*)}
\setlist[longenum,3]{label=\arabic*)}
\setlist[longenum,4]{label=(\roman*)}
\setlist[longenum,5]{label=(\alph*)}
\newtheoremstyle{mystyle}
{}
{}
{\itshape}
{}
{\bfseries}
{.}
{ }
{\thmname{#1}\thmnumber{ #2}\thmnote{ (#3)}}
\theoremstyle{mystyle}
\newtheorem{thm}{Theorem}
\newtheorem{defn}{Definition} 
\DeclareRobustCommand{\shortto}{%
	\mathrel{\mathpalette\short@to\relax}%
}
\newcommand{\short@to}[2]{%
	\mkern2mu
	\clipbox{{.5\width} 0 0 0}{$\m@th#1\vphantom{+}{\shortrightarrow}$}%
}
\newcommand{\toolname}{MOCS}
\tikzset{%
	remember picture with id/.style={%
		remember picture,
		overlay,
		save picture id=#1,
	},
	save picture id/.code={%
		\edef\pgf@temp{#1}%
		\immediate\write\pgfutil@auxout{%
			\noexpand\savepointas{\pgf@temp}{\pgfpictureid}}%
	},
	if picture id/.code args={#1#2#3}{%
		\@ifundefined{save@pt@#1}{%
			\pgfkeysalso{#3}%
		}{
			\pgfkeysalso{#2}%
		}
	}
}
\def\savepointas#1#2{%
	\expandafter\gdef\csname save@pt@#1\endcsname{#2}%
}
\def\tmk@labeldef#1,#2\@nil{%
	\def\tmk@label{#1}%
	\def\tmk@def{#2}%
}
\newcommand\tikzmark[2][]{%
	\tikz[remember picture with id=#2] #1;}
\newcommand\MyBox[4][-1ex]{%
	\tikz[remember picture,overlay,pin distance=0cm]
	{\draw[draw=white,line width=1pt,fill=#4!15,rectangle,rounded corners]
		( $ (pic cs:#2) + (-1ex,2ex) $ ) rectangle ( $ (pic cs:#3) + (1ex,#1) $ );
	}
}
\newcommand\stutt{\mathrel{\overset{\makebox[0pt]{\mbox{\normalfont\tiny\sffamily st}}}{\equiv}}}
\newcommand{\hookrightarrowdbl}{\hookrightarrow\mathrel{\mkern-14mu}\rightarrow}
\newcommand{\xhookrightarrowdbl}[2][]{%
	\xhookrightarrow[#1]{#2}\mathrel{\mkern-14mu}\rightarrow
}
\providecommand{\customgenericname}{}
\newcommand{\newcustomtheorem}[2]{%
	\newenvironment{#1}[1]
	{%
		\renewcommand\customgenericname{#2}%
		\renewcommand\theinnercustomgeneric{##1}%
		\innercustomgeneric
	}
	{\endinnercustomgeneric}
}
\newcommand{\ignore}[1]{}
\newcommand\subtrace{\ooalign{$-$\cr$<$}}
\begin{document}
	\title{Towards Model Checking Real-World Software-Defined Networks\\\small{(version with appendix)}}
	
	\author{Vasileios Klimis \and George Parisis \and  Bernhard Reus }
	\authorrunning{V. Klimis, G. Parisis, and B. Reus}
	\titlerunning{Towards Model Checking Real-World Software-Defined Networks}
	%
	\institute{University of Sussex, UK\\
		\email{\{v.klimis, g.parisis, bernhard\}@sussex.ac.uk}
	}

	\maketitle              
\begin{abstract}

In software-defined networks (SDN), a controller program is in charge of deploying diverse network functionality across a large number of switches, but this comes at a great risk: deploying buggy controller code could result in network and service disruption and security loopholes. The automatic detection of bugs or, even better, verification of their absence is thus most desirable, yet the size of the network and the complexity of the controller makes this a challenging undertaking.
In this paper, we propose \toolname, a highly expressive, optimised SDN model that allows capturing subtle real-world bugs, in a reasonable amount of time. This is achieved by (1) analysing the model for possible partial order reductions, (2) statically pre-computing packet equivalence classes and (3) indexing packets and rules that exist in the model. We demonstrate its superiority compared to the state of the art in terms of expressivity, by providing examples of realistic bugs that a prototype implementation of \toolname\ in {\sc Uppaal} caught, and performance/scalability, by running examples on various sizes of network topologies, highlighting the importance of our abstractions and optimisations.\\\\
\emph{Note: This is an extended version of our paper (with the same name), which appears in CAV 2020.}

\end{abstract}

\section{Introduction}
\label{introduction}

Software-Defined Networking (SDN) \cite{TheRoadToSDN} has brought about a paradigm shift in designing and operating computer networks. A logically centralised controller implements the control logic and `programs' the data plane, which is defined by flow tables installed in network switches. SDN enables the rapid development of advanced and diverse network functionality; e.g. in designing next-generation inter-data centre traffic engineering \cite{DevoFlow}, load balancing \cite{Plug-n-Serve}, firewalls \cite{SDNFirewall}, and Internet exchange points (IXPs) \cite{SDX}. SDN has gained noticeable ground in the industry, with major vendors integrating OpenFlow \cite{Openflow}, the de-facto SDN standard maintained by the Open Networking Forum, in their products. Operators deploy it at scale \cite{Google,Ananta}. SDN presents a unique opportunity for innovation and rapid development of complex network services by enabling all players, not just vendors, to develop and deploy control and data plane functionality in networks. This comes at a great risk; deploying buggy code at the controller could result in problematic flow entries at the data plane and, potentially, service disruption \cite{ERA,Batfish,NetTroub,vision} and security loopholes \cite{OF-RHM,DDoS}. Understanding and fixing such bugs is far from trivial, given the distributed and concurrent nature of computer networks and the complexity of the control plane \cite{shenker}.

With the advent of SDN, a large body of research on verifying network properties has emerged \cite{survey_NV}. Static network analysis approaches \cite{Anteater,HSA,NetSAT,FlowChecker,Flover,Dobrescu} can only verify network properties on a given fixed network configuration but this may be changing very quickly (e.g. as in \cite{hedera}). Another key limitation is the fact that they cannot reason about the controller program, which, itself, is responsible for the changes in the network configuration. Dynamic approaches, such as \cite{VeriFlow,Plotkin,Libra,Deltanet,Netplumber,AtomicPredicates}, are able to reason about network properties as changes happen (i.e. as flow entries in switches' flow tables are being added and deleted), but they cannot reason about the controller program either. As a result, when a property violation is detected, there is no straightforward way to fix the bug in the controller code, as these systems are oblivious of the running code. Identifying bugs in large and complex deployments can be extremely challenging.

Formal verification methods that include the controller code in the model of the network can solve this important problem. Symbolic execution methods, such as \cite{SymNet,Nice,Dobrescu,NetSMC,BUZZ,VeriCon2,SyNET}, evaluate programs using symbolic variables accumulating path-conditions along the way that then can be solved logically. However, they suffer from the path explosion problem caused by loops and function calls which means verification does not scale to larger controller programs (bug finding still works but is limited). Model checking SDNs is a promising area even though only few studies have been undertaken \cite{NetSMC,SDNactors,Nice,Sethi,Kuai,McClurg}. Networks and controller can be naturally modelled as transition systems. State explosion is always a problem but can be mitigated by using abstraction and optimisation techniques (i.e. partial order reductions). At the same time, modern model checkers \cite{Spin,UPPAAL,NuSMV,Alloy,JAVAPathFinder} are very efficient.

{\sc n}et{\sc smc} \cite{NetSMC} uses a bespoke \emph{symbolic} model checking algorithm for checking properties given a subset of computation tree logic that allows quantification only over all paths. As a result, this approach scales relatively well, but the requirement that only one packet can travel through the network at any time is very restrictive and ignores race conditions. {\sc nice} \cite{Nice} employs model checking but only looks at a limited amount of input packets that are extracted through symbolically executing the controller code. As a result, it is a bug-finding tool only. The authors in \cite{Sethi} propose a model checking approach that can deal with dynamic controller updates and an arbitrary number of packets but require manually inserted non-interference lemmas that constrain the set of packets that can appear in the network. This significantly limits its applicability in realistic network deployments. Kuai \cite{Kuai} overcomes this limitation by introducing model-specific partial order reductions (PORs) that result in pruning the state space by avoiding redundant explorations. However, it has limitations explained at the end of this section. 

In this paper, we take a step further towards the full realisation of model checking real-world SDNs by introducing \toolname\ (MOdel Checking for Software defined networks)\footnote{A release of \toolname\ is publicly available at \url{https://tinyurl.com/y95qtv5k}}, a highly expressive, optimised SDN model which we implemented in {\sc Uppaal}\footnote{{\sc Uppaal} has been chosen as future plans include extending the model to timed actions like e.g.\ timeouts. Note that the model can be implemented in any model checker.}
 \cite{UPPAAL}. \toolname, compared to the state of the art in model checking SDNs, can model network behaviour more realistically and verify larger deployments using fewer resources. The main contributions of this paper are:

\noindent\textbf{Model Generality}. The proposed network model is closer to the OpenFlow standard than previous models (e.g.\ \cite{Kuai})
to reflect commonly exhibited behaviour between the controller and network switches.
 More specifically, it allows for race conditions between control messages and includes a significant number of  OpenFlow interactions, including barrier response messages. In our experimentation section, we present families of elusive bugs that can be efficiently captured by \toolname. 
	
\noindent\textbf{Model Checking Optimisations}. To tackle the state explosion problem we propose context-dependent \emph{partial order reductions}  by considering the concrete control program and specification in question. We establish the soundness of the proposed optimisations\shortversion{\footnote{Proofs appear in an extended version \url{http://sro.sussex.ac.uk/id/eprint/87495}.}}. Moreover, we propose \emph{state representation optimisations}, namely packet and rule indexing, identification of packet equivalence classes and bit packing, to improve performance. We evaluate the benefits from all proposed optimisations in \S \ref{experimental-evaluation}.\\
 
Our model has been inspired by Kuai \cite{Kuai}. According to the contributions above, however, we consider \toolname\ to be a considerable improvement. We model more OpenFlow messages and interactions, enabling us to check for bugs that \cite{Kuai} cannot even express (see discussion in \S\ref{subsec:expressivity}). Our context-dependent PORs systematically explore possibilities for optimisation. Our optimisation techniques still allow \toolname\ to run at least as efficiently as Kuai, often with even better performance.

\section{Software-Defined Network Model}
\label{sec:modelSDN}

A key objective of our work is to enable the verification of network-wide properties in real-world SDNs. In order to fulfil this ambition, we present an extended network model to capture complex interactions between the SDN controller and the network. Below we describe the adopted network model, its state and transitions.

\subsection{Formal Model Definition}
\label{semantics}

The formal definition of the proposed SDN model is by means of an action-deterministic transition system. We parameterise the model by the underlying network topology $\lambda$ and the controller program {\sc cp} in use, as explained further below (\S\ref{model-components}).

\begin{defn}
\label{def:SDNmodel}
 An SDN model is a 6-tuple
 $\mathcal{M}_{(\lambda,\textsc{cp})} = (S, s_0, A, \hookrightarrow, AP, L)$, where $S$ is the set of all states the SDN may enter, $s_0$ the initial state, $A$ the set of actions which encode the events the network may engage in, 
$\hookrightarrow \subseteq S \times A \times S$ the transition relation describing which execution steps the system undergoes as it perform actions, 
$AP$ a set of atomic propositions describing relevant state properties, and $L: S \to 2^{AP}$ is a labelling function, which relates to any state $s\in S$ a set $L(s) \in 2^{AP}$ of those atomic propositions that are true for $s$. Such an SDN model is composed of several smaller systems, which model network components (hosts, switches and the controller) that communicate via queues and, combined, give rise to the definition of $\hookrightarrow$. The states of an SDN transition system are 3-tuples $(\pi, \delta, \gamma)$, where $\pi$ represents the state of each host, $\delta$ the state of each switch, and $\gamma$ the controller state. The components are explained in \S\ref{model-components} and the transitions $\hookrightarrow$ in \S\ref{transitions}.
\end{defn}
Figure~\ref{model} illustrates a high-level view of OpenFlow interactions (left side), modelled actions and queues (right side). 

\begin{figure}[h]
	\captionsetup{width=.9\textwidth}
	\begin{center}
		\includegraphics[scale=1]{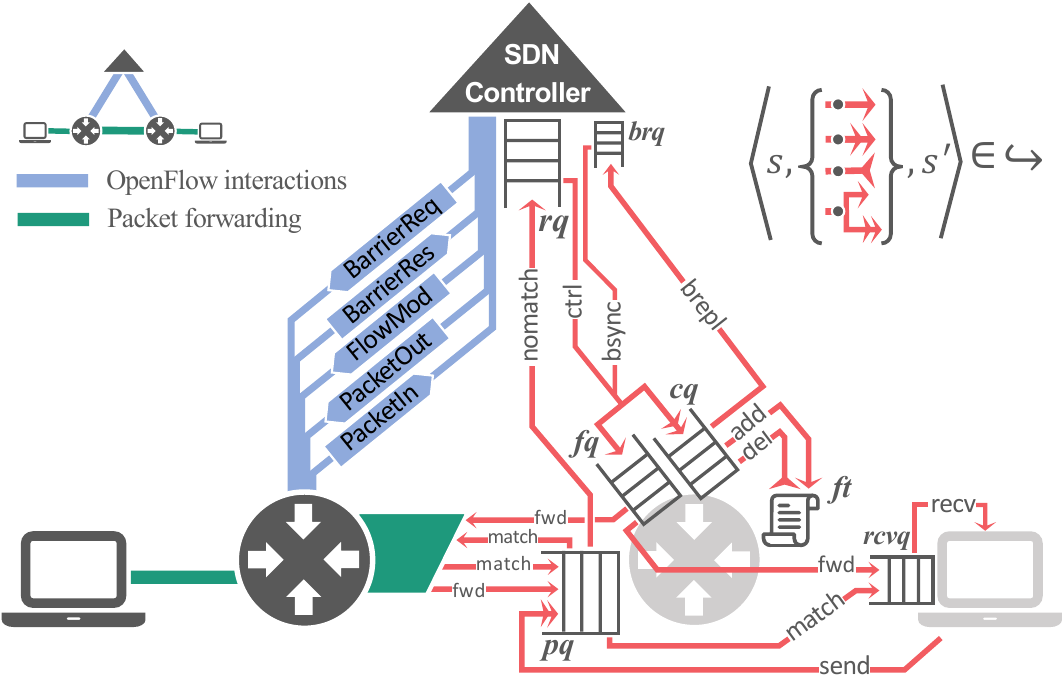}
		\caption{A high-level view of OpenFlow interactions using OpenFlow specification terminology (left half) and the modelled actions (right half). A red solid-line arrow depicts an action which, when fired, (1) dequeues an item from the queue the arrow begins at, and (2) adds an item in the queue the arrowhead points to (or multiple items if the arrow is double-headed). Deleting an item from the target queue is denoted by a reverse arrowhead. A forked arrow denotes multiple targeted queues.}
		\label{model}
	\end{center}
\end{figure}

\subsection{SDN Model Components}
\label{model-components}
Throughout we will use the common `dot-notation' (\texttt{\_.\_}) to refer to components of composite gadgets (tuples), e.g.\ queues of switches, or parts of the state. We use obvious names for the projections functions like $s.\delta.sw.pq$ for the packet queue of the switch \emph{sw} in state $s$. At times we will also use $t_1$ and $t_2$ for the first and second projection of tuple $t$.

\noindent\textbf{Network Topology. }A location $\mathit{(n, pt)}$ is a pair of a node (host or switch) $n$ and a port $pt$. We describe the network topology as a bijective function $\lambda: (\mathit{Switches} \cup \mathit{Hosts}) \times \mathit{Ports} \rightarrow (\mathit{Switches} \cup \mathit{Hosts}) \times \mathit{Ports}$ consisting of a set of directed edges $\langle \mathit{(n, pt), (n', pt')} \rangle$, where $\mathit{pt'}$ is the input port of the switch or host $n'$ that is connected to port $\mathit{pt}$ at host or switch $n$. \emph{Hosts}, \emph{Switches} and \emph{Ports} are the (finite) sets of all hosts, switches and ports in the network, respectively. The topology function is used when a packet needs to be forwarded in the network. The location of the next hop node is decided when a \emph{send}, \emph{match} or \emph{fwd} action (all defined further below) is fired. Every SDN model is w.r.t.\ a fixed topology $\lambda$ that does not change.

\noindent\textbf{Packets. }Packets are modelled as finite bit vectors and transferred in the network by being stored to the queues of the various network components. A $packet \in \mathit{Packets}$ (the set of all packets that can appear in the network) contains bits describing the proof-relevant header information and its location $loc$.

\noindent\textbf{Hosts. }Each $\mathit{host} \in \mathit{Hosts}$, has a packet queue (\emph{rcvq}) and a finite set of ports which are connected to ports of other switches. A host can send a packet to one or more switches it is connected to (\emph{send} action in Figure \ref{model}) or receive a packet from its own $rcvq$ (\emph{recv} action in Figure \ref{model}). Sending occurs repeatedly in a non-deterministic fashion which we model implicitly via the $(0,\infty)$ abstraction at switches' packet queues, as discussed further below.

\noindent\textbf{Switches. }Each $\mathit{switch} \in \mathit{Switches}$, has a flow table $\mathit{(ft)}$, a packet queue $\mathit{(pq)}$, a control queue $\mathit{(cq)}$, a forwarding queue $\mathit{(fq)}$ and one or more ports, through which it is connected to other switches and/or hosts. A flow table $\mathit{ft} \subseteq \mathit{Rules}$ is a set of forwarding rules (with $\mathit{Rules}$ being the set of all rules). Each one consists of a tuple $\mathit{(priority, pattern, ports)}$, where $\mathit{priority} \in \mathbb{N}$ determines the priority of the rule over others, $\mathit{pattern}$ is a proposition over the proof-relevant header of a packet, and $\mathit{ports}$ is a subset of the switch's ports.
 Switches match packets in their packet queues against rules (i.e. their respective $\mathit{pattern}$) in their flow table (\emph{match} action in Figure \ref{model}) and forward packets to a connected device (or final destination), accordingly. Packets that cannot be matched to any rule are sent to the controller's request queue (\emph{rq}) (\emph{nomatch} action in Figure \ref{model}); in OpenFlow, this is done by sending a \emph{PacketIn} message. The forwarding queue \emph{fq} stores packets forwarded by the controller in \emph{PacketOut} messages. The control queue stores messages sent by the controller in \emph{FlowMod} and \emph{BarrierReq} messages. \emph{FlowMod} messages contain instructions to add or delete rules from the flow table (that trigger \emph{add} and \emph{del} actions in Figure~\ref{model}). \emph{BarrierReq} messages contain barriers to synchronise the addition and removal of rules.
 \toolname\ conforms to the OpenFlow specifications and always execute instructions in an interleaved fashion obeying the ordering constraints imposed by barriers.

\noindent\textbf{OpenFlow Controller. }The controller is modelled as a finite state automaton embedded into the overall transition system. 
A controller program {\sc cp}, as used to parametrise an SDN model, consists of $(\mathit{CS},\mathit{pktIn},\mathit{barrierIn})$.
It uses its own local state $\mathit{cs}\in \mathit{CS}$, where $\mathit{CS}$ is the finite set of control program states. Incoming \emph{PacketIn} and \emph{BarrierRes} messages from the SDN model are stored in separate queues ($\mathit{rq}$ and $\mathit{brq}$, respectively) and trigger \emph{ctrl} or \emph{bsync} actions (see Figure~\ref{model}) which are then processed by the controller program in its current state. 
The controller's corresponding handler, \emph{pktIn} for \emph{PacketIn} messages and \emph{barrierIn} for \emph{BarrierRes} messages, responds by potentially changing its local state and sending messages to a subset of \emph{Switches}, as follows. A number of \emph{PacketOut} messages -- pairs of (\emph{pkt, ports}) -- can be sent to a subset of \emph{Switches}. Such a message is stored in a switch's forward queue and instructs it to forward packet \emph{pkt} along the ports \emph{ports}. The controller may also send any number of \emph{FlowMod} and \emph{BarrierReq} messages to the control queue of any subset of \emph{Switches}. A \emph{FlowMod} message may contain an \textit{add} or \textit{delete} rule modification instruction. These are executed in an arbitrary order by switches, and \textit{barriers} are used to synchronise their execution. Barriers are sent by the controller in \emph{BarrierReq} messages. OpenFlow requires that a response message (\emph{BarrierRes}) is sent to the controller by a switch when a barrier is consumed from its control queue so that the controller can synchronise subsequent actions. 
Our model includes a \emph{brepl} action that models the sending of a \emph{BarrierRes} message from a switch to the controller's barrier reply queue (\emph{brq}), and a \emph{bsync} action that enables the controller program to react to barrier responses. 

\noindent\textbf{Queues. }All queues in the network are modelled as \emph{finite} state. Packet queues $\mathit{pq}$ for switches are modelled as multisets, and we adopt $(0,\infty)$ abstraction \cite{01infty}; i.e.\ a packet is assumed to appear either zero or an arbitrary (unbounded) amount of times in the respective multiset. This means that once a packet has arrived at a switch or host, (infinitely) many other packets of the same kind repeatedly arrive at this switch or host. Switches' forwarding queues \emph{fq} are, by contrast, modelled as sets, therefore if multiple identical packets are sent by the controller to a switch, only one will be stored in the queue and eventually forwarded by the switch. The controller's request $\mathit{rq}$ and barrier reply queues \emph{brq} are modelled as sets as well.
Hosts' receive queues $\mathit{rcvq}$ are also modelled as sets. Controller queues $cq$ at switches are modelled as a finite sequence of sets of control messages (representing add and remove rule instructions), interleaved by any number of barriers. As the number of barriers that can appear at any execution is finite, this sequence is finite.

\subsection{Guarded Transitions}
\label{transitions}

Here we provide a detailed breakdown of the transition relation $ s \xhookrightarrow[]{  \alpha(\vec{a})  } s'$ for each action $\alpha(\vec{a}) \in A(s)$, where $A(s)$ the set of all enabled actions in $s$ in the proposed model (see Figure \ref{model}). Transitions are labelled by action names $\alpha$ with arguments $\vec{a}$. The transitions are only enabled in state $s$ if $s$ satisfies certain conditions called \emph{guards} that can refer to the arguments $\vec{a}$. In guards, we make use of predicate $\mathit{bestmatch(sw,r,pkt)}$ that expresses that $r$ is the highest priority rule in $\mathit{sw.ft}$ that matches $\mathit{pkt}$'s header.
Below we list all possible actions with their respective guards.

\renewcommand{\labelitemi}{$\bullet$}

	\noindent $\boldsymbol{\mathit{send(h,pt,pkt)}}$. Guard: $\mathit{true}$. This transition models packets arriving in the network in a non-deterministic fashion. When it is executed, $\mathit{pkt}$ is added to the packet queue of the network switch connected to the port $\mathit{pt}$ of host $h$ (or, formally, to $\lambda(\mathit{h,pt})_1.\mathit{pq}$, where $\lambda$ is the topology function described above). As described in \S\ref{subsec:state:repr}, only relevant representatives of packets are actually sent by end-hosts. This transition is unguarded, therefore it is always enabled.

	\noindent $\boldsymbol{\mathit{recv(h, pkt)}}$. Guard: $\mathit{pkt} \in \mathit{h.rcvq}$. This transition models hosts receiving (and removing) packets from the network and is enabled if $\mathit{pkt}$ is in $h$'s receive queue. 
	%
	%
	
	\noindent $\boldsymbol{\mathit{match(sw,pkt,r)}}$. Guard: $\mathit{pkt} \in \mathit{sw.pq} \wedge r\in \mathit{sw.ft} \wedge
\mathit{bestmatch(sw,r,pkt)}$.
	This transition models matching and forwarding packet $\mathit{pkt}$ to zero or more next hop nodes (hosts and switches), as a result of highest priority matching of rule $r$ with $\mathit{pkt}$. The packet is then copied to the packet queues of the connected hosts and/or switches, by applying the topology function to the port numbers in the matched rule; i.e. $\lambda(\mathit{sw,pt})_1.\mathit{pq}, \forall \mathit{pt} \in \mathit{r.ports}$. Dropping packets is modelled by having a special `drop' port that can be included in rules. The location of the forwarded packet(s) is updated with the respective destination (switch/host, port) pair; i.e.\ $\lambda(\mathit{sw,pt})$. Due to the $(0,\infty)$ abstraction, the packet is not removed from $\mathit{sw.pq}$.

	\noindent $\boldsymbol{\mathit{nomatch(sw,pkt)}}$. Guard: $\mathit{pkt} \in \mathit{sw.pq}  \wedge \nexists r\in \mathit{sw.ft}~.~
\mathit{bestmatch(sw,r,pkt)}$. This transition models forwarding a packet to the OpenFlow controller when a switch does not have a rule in its forwarding table that can be matched against the packet header. In this case, $\mathit{pkt}$ is added to \emph{rq} for processing. $\mathit{pkt}$ is not removed from $\mathit{sw.pq}$ due to the supported $(0,\infty)$ abstraction.

	\noindent $\boldsymbol{\mathit{ctrl(pkt, cs)}}$. Guard: $\mathit{pkt} \in \mathit{controller.rq}$. This transition models the execution of the packet handler by the controller when packet $\mathit{pkt}$, that was previously sent by switch $\mathit{pkt.loc}_1$, is available in \emph{rq}. The controller's packet handler function $\mathit{pktIn(pkt.loc}_1, \mathit{pkt, cs)}$ is executed which, in turn
	\begin{inparaenum}[(i)]
		\item reads the current controller state \emph{cs} and changes it according to the controller program,
		\item adds a number of rules, interleaved with any number of barriers, into the \emph{cq} of zero or more switches, and 
		\item adds zero or more forwarding messages, each one including a packet along with a set of ports, to the \emph{fq} of zero or more switches.
	\end{inparaenum}

	\noindent $\boldsymbol{\mathit{fwd(sw, pkt, ports)}}$. Guard: $\mathit{(pkt,ports)} \in \mathit{sw.fq}$. This transition models forwarding packet $\mathit{pkt}$ that was previously sent by the controller to $sw$'s forwarding queue $\mathit{sw.fq}$. In this case, $\mathit{pkt}$ is removed from $\mathit{sw.fq}$ (which is modelled as a set), and added to the $\mathit{pq}$ of a number of network nodes (switches and/or hosts), as defined by the topology function $\lambda(\mathit{sw,pt})_1.\mathit{pq}, \forall \mathit{pt} \in \mathit{ports}$. The location of the forwarded packet(s) is updated with the respective destination (switch/host, port) pair; i.e.\ $\lambda(\mathit{n,pt})$.

	\noindent $\boldsymbol{\mathit{FM(sw, r)}}$, where $\mathit{FM}\in \{\mathit{add, del}\}$. Guard: $(\mathit{FM,r}) \in \mathit{head(sw.cq)}$.
	 These transitions model the addition and deletion, respectively, of a rule in the flow table of switch \emph{sw}.
	 They are enabled when one or more \emph{add} and \emph{del} control messages are in the set at the head of the switch's control queue. In this case, $r$ is added to -- or deleted from, respectively -- $\mathit{sw.ft}$ and the control message is deleted from the set at the head of \emph{cq}. If the set at the head of \emph{cq} becomes empty it is removed. If then the next item in \emph{cq} is a barrier, a \emph{brepl} transition becomes enabled (see below).

	\noindent $\boldsymbol{\mathit{brepl(sw, xid)}}$. Guard: $\mathit{b(xid)} = \mathit{head(sw.cq)}$. This transition models a switch sending a barrier response message, upon consuming a barrier from the head of its control queue; i.e.\ if $\mathit{b(xid)}$ is the head of \emph{sw.cq}, where $\mathit{xid} \in \mathbb{N}$ is an identifier for the barrier set by the controller, $\mathit{b(xid)}$ is removed and the barrier reply message $\mathit{br(sw, xid)}$ is added to the controller's \emph{brq}.

	\noindent $\boldsymbol{\mathit{bsync(sw, xid, cs)}}$. Guard: $\mathit{br(sw, xid)} \in \mathit{controller.brq}$.
	 This transition models the execution of the barrier response handler by the controller when a barrier response sent by switch $sw$ is available in \emph{brq}. In this case, $\mathit{br(sw, xid)}$ is removed from the \emph{brq}, and the controller's barrier handler $\mathit{barrierIn(sw,xid,cs)}$ is executed which, in turn
		\begin{inparaenum}[(i)]
		\item reads the current controller state $cs$ and changes it according to the controller program,
		\item adds a number of rules, interleaved with any number of barriers, into the $cq$ of zero or more switches, and 
		\item adds zero or more forwarding messages, each one including a packet along with a set of ports, to the \emph{fq} of zero or more switches.
	\end{inparaenum}

\noindent\textbf{An example run}. In Figure \ref{tab:transitions}, we illustrate a sequence of MOCS transitions through a simple packet forwarding example. The run starts with a \textit{send} transition; packet \textit{p} is copied to the packet queue of the switch in black. Initially, switches' flow tables are empty, therefore \textit{p} is copied to the controller's request queue (\textit{nomatch} transition); note that \textit{p} remains in the packet queue of the switch in black due to the $(0,\infty)$ abstraction. The controller's packet handler is then called (\textit{ctrl} transition) and, as a result, (1) \textit{p} is copied to the forwarding queue of the switch in black, (2) rule $r_1$ is copied to the control queue of the switch in black, and (3) rule $r_2$ is copied to the control queue of the switch in white. Then, the switch in black forwards \textit{p} to the packet queue of the switch in white (\textit{fwd} transition). The switch in white installs $r_2$ in its flow table (\textit{add} transition) and then matches \textit{p} with the newly installed rule and forwards it to the receive queue of the host in white (\textit{match} transition), which removes it from the network (\textit{recv} transition).

\begin{figure}[h]
	\centering
	\vspace{.4cm}
	\includegraphics[width=1\linewidth]{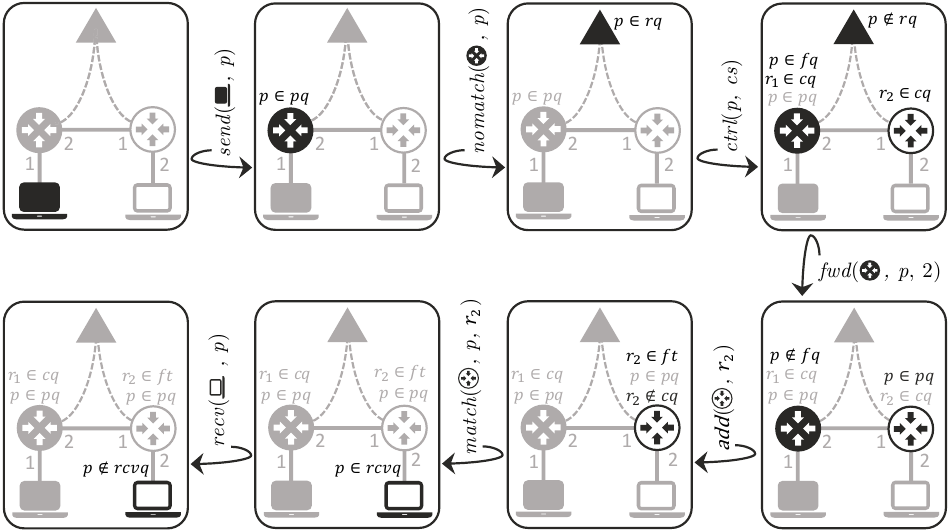}
	\caption{Forwarding $p$ from $\protect\blacklaptop$ to $\protect\whitelaptop$. Non greyed-out icons are the ones whose state changes in the current transition.}
	\label{tab:transitions}
\end{figure}

\ignore{
	
	\begin{table}[h]
		\centering
		\caption{Rules in \toolname\ that are different from Kuai \cite{Kuai}}
		\label{fig:changedrules}
		\scriptsize
		\begin{tabular}{lr}
			\toprule\\[-2mm]
			
			$\small (\pi, \delta, \gamma) \xhookrightarrow[]{nomatch(sw,pkt)} (\pi, \delta', \gamma')$ & \begin{minipage}{6.5cm} \centering where $\gamma'.rq {=} \gamma.rq \cup \{(sw,pkt)\}$ and $\delta'  {=}  delPkt(\delta,\{(sw,pkt)\}) $ \end{minipage}\\[0.2cm]
			\midrule\\[-2mm]
			$\small (\pi, \delta, \gamma) \xhookrightarrow[]{fwd(sw,pkt,pts)} (\pi', \delta', \gamma)$  &  \begin{minipage}{6cm} \centering where $\pi' {=} addPkt(\pi, FwdToC(sw,pkt,pts))$, $\delta''  {=}  delFwdMsg(\delta, sw, (pkt,pts))$ and $\delta' = addPkt(\delta'', FwdToSw(sw,pkt,pts))$ \end{minipage}\\[0.4cm]
			\midrule\\[-2mm]
			$\small (\pi, \delta, \gamma) \xhookrightarrow[]{ctrl(sw,pkt,cs)} (\pi, \delta', \gamma')$ &  \begin{minipage}{6.5cm} \centering where $\gamma'.rq {=} \gamma.rq \setminus \{pkt\}$,  $pktIn(sw, pkt, \gamma.cs)  {=}  (\eta, msg, \gamma'.cs)$, $\delta''  {=}  addFwdMsg(\delta, sw, msg)$, and $\delta'  {=}  addCtrlCmd(\delta'', \eta)$ \end{minipage} \\[0.5cm]
			\midrule\\[-2mm]
			$\small (\pi, \delta, \gamma) \xhookrightarrow[]{brepl(sw,xid)} (\pi, \delta', \gamma')$  
			& \begin{minipage}{6.5cm} \centering where $\delta'  {=}  \delta(sw).cq[0] \ominus \llceil b(xid) \rrceil$ and $\gamma'.brq  {=}  \gamma.brq \cup \{br(sw, xid)\}$ \end{minipage}\\[0.2cm]
			\midrule\\[-2mm]
			$\small (\pi, \delta, \gamma) \xhookrightarrow[]{bsync(sw,xid,cs)} (\pi, \delta', \gamma')$ & \begin{minipage}{6cm} \centering where   $barrierIn(sw, xid, \gamma.cs)  {=}  (\eta, msg, \gamma'.cs)$, $\delta''  {=}  addFwdMsg(\delta, sw, msg)$,  $\delta'  {=}  addCtrlCmd(\delta'', \eta)$ and $\gamma'.brq  {=}  \gamma.brq \setminus \{br(sw, xid)\}$  \end{minipage}\\
			[.5cm] 
			\bottomrule
		\end{tabular}
		\vspace{-.7cm}
	\end{table}
}

\subsection{Specification Language} 
\label{spec}

In order to specify properties of packet flow in the network, we use LTL formulas without ``next-step" operator $\bigcirc$\footnote{This is the largest set of formulae supporting the partial order reductions used in \S\ref{sec:optimisations}, as stutter equivalence does not preserve the truth value of formulae with the $\bigcirc$.}, where atomic formulae denoting properties of states of the transition system, i.e.\ SDN network.
In the case of safety properties, i.e.\ an invariant w.r.t.\ states, the LTL$_{\setminus \{\bigcirc\}}$ formula is of the form $\Box \varphi$, i.e.\ has only an outermost $\Box$ temporal connective. 

Let $P$ denote unary predicates on packets which encode a property of a packet based on its fields.
An atomic \textit{state condition} (proposition) in $\mathit{AP}$ is either of the following: (i) existence of a packet $\mathit{pkt}$ located in a packet queue ($pq$) of a switch or in a receive queue (\emph{rcvq}) of a host that satisfies $P$ (we denote this by
$\exists \mathit{pkt} {\in} \mathit{n.pq}\,.\, \mathit{P(pkt)}$ with $n\in  \mathit{Switches}$, and $\exists \mathit{pkt} {\in} \mathit{h.rcvq}\,.\, \mathit{P(pkt)}$ with $h\in \mathit{Hosts}$)\footnote{Note that these are \emph{atomic} propositions despite the use of the existential quantifier notation.};
(ii) the controller is in a specific \emph{controller} state $q\in \mathit{CS}$, denoted by a unary predicate symbol $Q(q)$ which holds in system state $s \in S$ if $q = s.\gamma.cs$.
The specification logic comprises first-order formula with equality on the finite domains of switches, hosts, rule priorities, and ports which are \emph{state-independent} (and decidable).

For example,
$\exists \mathit{pkt} {\in} \mathit{sw.pq}\,.\, \mathit{P(pkt)}$
represents the fact that the packet predicate $P(\_)$ is true for at least one packet $\mathit{pkt}$ in the \emph{pq} of switch $sw$. 
For every atomic packet proposition $\mathit{P(pkt)}$, also its negation $\neg \mathit{P(pkt)}$ is an atomic proposition for the reason of simplifying syntactic checks of formulae in Table~\ref{tab:safeness} in the next section. Note that universal quantification over packets in a queue is a derived notion. For instance, $\forall \mathit{pkt} {\in} \mathit{n.pq}\,.\, \mathit{P(pkt)}$ can be expressed as $ \nexists \mathit{pk} {\in} \mathit{n.pq}\,.\, \neg \mathit{P(pkt)}$.
Universal and existential quantification over switches or hosts can be expressed by finite iterations of ${\wedge}$ and ${\vee}$, respectively.

In order to be able to express that a condition holds when a certain event happened, we add to our propositions instances of \emph{propositional dynamic logic} \cite{DL1,DL2}.
Given an action $\alpha(\cdot)\in A$ and a proposition $P$ that may refer to any variables in $\vec{x}$, $[\alpha(\vec{x})]P$ is also a proposition and $[\alpha(\vec{x})]P$ is true if, and only if,
after firing transition $\alpha(\vec{a})$ (to get to the current state), $P$ holds with the variables in $\vec{x}$ bound to the corresponding values in the actual arguments $\vec{a}$. 
With the help of those basic modalities one can then also specify that more complex events occurred. For instance, dropping of a packet due to a \emph{match} or \emph{fwd} action can be expressed by $[\mathit{match(sw,pkt,r)}](r.\mathit{fwd\_port}=\texttt{drop}) \land [\mathit{fwd(sw,pkt,pt)}](\mathit{pt}=\texttt{drop})$. Such predicates derived from modalities are used in \S \ref{appendix:CP}-\ref{rq1-alg3}.


The meaning of temporal LTL operators is standard depending on the trace of a transition sequence $s_0 \xhookrightarrow{\alpha_1} s_1 \xhookrightarrow{\alpha_2} \ldots$.
The trace $L(s_0)L(s_1)\ldots L(s_i)\ldots$ is defined as usual. 
For instance, trace $L(s_0)L(s_1)L(s_2)\ldots$ satisfies invariant $\Box \varphi$ if each $L(s_i)$ implies $\varphi$. 

\section{Model Checking}
\label{sec:optimisations}

In order to verify desired properties of an SDN, we use its model as described in Def.~\ref{def:SDNmodel}
and apply model checking. In the following we propose optimisations that significantly improve the performance of model checking. 

\subsection{Contextual Partial-Order Reduction}
\label{subsec:por}

\emph{Partial order reduction} (POR) \cite{representatives} reduces the number of interleavings (traces) one has to check. Here is a reminder of the main result (see \cite{Baier2008}) where we use a stronger condition than the regular (\emph{C4}) to deal with cycles:

\begin{thm}[Correctness of POR]
Given a finite transition system $\mathcal{M} = (S, A, \hookrightarrow, s_0, AP, L)$ that is action-deterministic and without terminal states, let $A(s)$ denote the set of actions in $A$ enabled in state $s\in S$.
Let $\mathit{ample}(s) \subseteq A(s)$ be a set of actions for a state $s\in S$ that satisfies the following conditions:
\begin{enumerate}[label=C\arabic*]
\item (Non)emptiness condition: $\varnothing  \neq \mathit{ample}(s) \subseteq A(s)$.
\item Dependency condition: Let $s \xhookrightarrow[]{\alpha_1} s_1...\xhookrightarrow[]{\alpha_n} s_n \xhookrightarrow[]{\beta} t$ be a run in $\mathcal{M}$. If $\beta \in A  \setminus \mathit{ample}(s)$ depends on $\mathit{ample}(s)$, then $\alpha_i \in \mathit{ample}(s)$ for some $0 < i \leq n$, which means that in every path fragment of $\mathcal{M} $, $\beta$ cannot appear before some transition from $\mathit{ample}(s)$ is executed.
\item Invisibility condition: If $\mathit{ample}(s) \neq A(s)$ (i.e., state $s$ is not fully expanded), then every $\alpha \in \mathit{ample}(s)$ is invisible.
\item Every cycle in $\mathcal{M}^{\mathit{ample}}$ contains a state $s$ such that $\mathit{ample}(s)=A(s)$.
\end{enumerate}
\label{ample}
where $\mathit{\mathcal{M}^{\mathit{ample}} = (S_a, A, \hookrightarrowdbl, s_0, AP, L_a) }$ is the new, optimised, model defined as follows:
let 
$S_a \subseteq S$ be the set of states reachable from the initial state $s_0$ under $\hookrightarrowdbl$, let $L_a(s) = L(s)$ for all $s \in S_a$ and define $\hookrightarrowdbl\,    \subseteq S_a\times A\times S_a$ inductively by the rule
\[
\frac {s~ \xhookrightarrow[]{ \alpha}    ~s'}{ s~ \stackrel{~\alpha~}{\hookrightarrowdbl}    s'  }
\qquad   \textup{if }\  \alpha \in \mathit{ample}(s)  \]
If $\mathit{ample}(s)$ satisfies conditions (C1)-(C4) as outlined above, then for each path in $\mathcal{M} $ there exists a stutter-trace equivalent path in $\mathcal{M}^{\mathit{ample}} $, and vice-versa, denoted $\mathcal{M}   \stutt \mathcal{M}^{\mathit{ample}} $. 	\label{thm1}
\end{thm}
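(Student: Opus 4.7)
The plan is to establish both directions of the stutter-trace equivalence $\mathcal{M} \stutt \mathcal{M}^{\mathit{ample}}$ separately. The easy direction is immediate: since $\hookrightarrowdbl \,\subseteq\, \hookrightarrow$ and $L_a=L|_{S_a}$, every path of $\mathcal{M}^{\mathit{ample}}$ is a path of $\mathcal{M}$ with an identical trace, hence trivially stutter equivalent to itself. The bulk of the argument is the converse: for every infinite path $\pi = s_0 \xhookrightarrow{\alpha_1} s_1 \xhookrightarrow{\alpha_2} \cdots$ in $\mathcal{M}$, exhibit a path $\rho$ in $\mathcal{M}^{\mathit{ample}}$ with $\mathit{trace}(\rho)$ stutter equivalent to $\mathit{trace}(\pi)$.

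I would construct $\rho$ by an inductive scheduling argument, building a sequence of ``hybrid'' paths $\pi^{(0)}, \pi^{(1)}, \ldots$ whose first $i$ actions lie in $\mathit{ample}$-sets (so form a prefix of a valid $\mathcal{M}^{\mathit{ample}}$-path) while the remaining tail is still a genuine $\mathcal{M}$-path stutter equivalent to $\pi$. At the inductive step, given hybrid path $\pi^{(i)} = t_0 \xhookrightarrow{\beta_1} t_1 \cdots \xhookrightarrow{\beta_i} t_i \xhookrightarrow{\gamma_1} u_1 \xhookrightarrow{\gamma_2} \cdots$ with $t_i \in S_a$, I split on cases. If $t_i$ is fully expanded ($\mathit{ample}(t_i)=A(t_i)$), take $\beta_{i+1} := \gamma_1$, which lies in $\mathit{ample}(t_i)$ automatically. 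Otherwise, consider the first index $k$ with $\gamma_k \in \mathit{ample}(t_i)$. Such a $k$ must exist: if no $\gamma_j$ in the tail ever belonged to $\mathit{ample}(t_i)$, then by C2 every $\gamma_j$ would be independent of $\mathit{ample}(t_i)$, and by C3 (invisibility of the ample actions) together with C4 one derives a contradiction with the fairness/cycle structure (see below). Using pairwise independence of $\gamma_1,\ldots,\gamma_{k-1}$ with $\gamma_k$ (a consequence of C2 applied iteratively), commute $\gamma_k$ to the front, obtaining a new tail $\gamma_k \gamma_1 \cdots \gamma_{k-1} \gamma_{k+1} \cdots$, and set $\beta_{i+1}:=\gamma_k$. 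Because $\gamma_1,\ldots,\gamma_{k-1}$ are all invisible (they are independent of the ample action $\gamma_k \in \mathit{ample}(t_i)$ and, by C3, actions in a non-fully-expanded state's ample set are invisible; a short calculation then shows the commuted $\gamma_1,\ldots,\gamma_{k-1}$ can still be executed from the resulting state and produce the same labels up to stuttering), the relabelling of states along the affected segment differs from the original only by stutter steps. Taking the limit $\rho := \lim_i \pi^{(i)}$ (with a diagonalisation to handle the case where finitely many actions are rescheduled infinitely often) yields an infinite path in $\mathcal{M}^{\mathit{ample}}$ stutter equivalent to $\pi$.

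The two delicate points deserving the most care are the existence of $k$ and the genuine infiniteness of $\rho$. For existence of $k$: if some ample action is postponed forever in the tail, then by C2 it remains continuously enabled, so the postponed suffix lies entirely in states that are not fully expanded (otherwise the fully-expanded step would force the ample action in). C4 guarantees every cycle in $\mathcal{M}^{\mathit{ample}}$ passes through a fully expanded state, and combined with finiteness of $S$ (so some state recurs) this yields a contradiction. For infiniteness of $\rho$: the diagonal limit could in principle be finite if we kept deferring the same suffix; using C4 once more shows that after finitely many reshufflings each prefix of $\rho$ stabilises, and the overall length is preserved because every original action is eventually scheduled.

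I expect the main obstacle to be formalising the commutation-plus-invisibility bookkeeping precisely enough to conclude stutter equivalence of traces (not merely of action sequences) while simultaneously ensuring that the inductive construction produces a well-defined infinite limit path. The cleanest route is to reuse the standard lemma (e.g.\ Lemma~8.22 in Baier and Katoen) stating that independence plus invisibility of the commuted actions implies stutter equivalence of the two induced state-label sequences, then layer the C4-based termination/limit argument on top, so that the novel content is merely checking that our action-deterministic transition system $\mathcal{M}_{(\lambda,\textsc{cp})}$ indeed fits the hypotheses of that lemma.
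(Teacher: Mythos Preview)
The paper does not give its own proof of this theorem; it is introduced as ``a reminder of the main result (see \cite{Baier2008})'' and the (commented-out) appendix entry simply defers to Theorem~8.13 of Baier and Katoen. Your sketch is recognisably that standard construction, so in spirit the approaches coincide.

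That said, there is a concrete error in your inductive step. You assert that when $t_i$ is not fully expanded, an index $k$ with $\gamma_k \in \mathit{ample}(t_i)$ \emph{must} exist, and you try to extract a contradiction from C4 if it does not. This is wrong: it is entirely possible that no $\gamma_j$ in the tail lies in $\mathit{ample}(t_i)$. Your contradiction argument conflates the tail (which is a path in $\mathcal{M}$, not in $\mathcal{M}^{\mathit{ample}}$) with the reduced system, and conflates $\mathit{ample}(t_i)$ with $\mathit{ample}(u_j)$: even if some $\beta \in \mathit{ample}(t_i)$ stays enabled at a fully expanded $u_j$, that only gives $\beta \in A(u_j)=\mathit{ample}(u_j)$, not that the actually taken action $\gamma_{j+1}$ belongs to $\mathit{ample}(t_i)$.

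The standard argument treats the ``no such $k$'' case as a third branch rather than ruling it out: choose any $\beta \in \mathit{ample}(t_i)$ (nonempty by C1) and \emph{prepend} it to the tail. Since every $\gamma_j$ is independent of $\beta$ (by C2) and $\beta$ is invisible (by C3), this inserts a pure stutter step and the original tail can be replayed from $\beta(t_i)$. Condition C4 then enters not to forbid this case but to bound how often it can recur consecutively: if it recurred forever, the prefix being built in $\mathcal{M}^{\mathit{ample}}$ would, by finiteness of $S$, eventually trace a cycle consisting only of non-fully-expanded states, contradicting C4. This is precisely what guarantees that every action of $\pi$ is eventually scheduled into $\rho$ and that the limit is stutter equivalent to $\pi$. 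Once you reorganise the case split this way, the rest of your plan (commutation via independence, invisibility for stutter equivalence, diagonal limit) is the textbook proof.
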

%
The intuitive reason for this theorem to hold is the following: Assume an action sequence $\alpha_i . . . \alpha_{i+n}\beta$ that reaches the state $s$, and $\beta$ is \textit{independent} of $\{\alpha_i, . . . \alpha_{i+n}\}$. Then, one can permute $\beta$ with $\alpha_{i+n}$ through $\alpha_i$ successively $n$ times. One can therefore construct the sequence $\beta\alpha_i . . . \alpha_{i+n}$ that also reaches the state $s$. If this shift of $\beta$ does not affect the labelling of the states with atomic propositions ($\beta$ is called \emph{invisible} in this case), then it is not detectable by the property to be shown and the permuted and the original sequence are equivalent w.r.t.\ the property and thus don't have to be checked both. 
One must, however, ensure, that in case of loops (infinite execution traces) the ample sets do not \emph{preclude} some actions to be fired altogether, which is why one needs (\emph{C4}).

%
%

The more actions that are both stutter and provably independent (also referred to as \textit{safe actions} \cite{Peled95}) there are, the smaller the transition system, and the more efficient the model checking.
One of our contributions is that we attempt to identify \emph{as many safe actions as possible} to make PORs more widely applicable to our model.

The PORs in \cite{Kuai} consider only dependency and invisibility of \emph{recv} and \emph{barrier} actions, whereas we explore systematically all possibilities for applications of Theorem~\ref{thm1} to reduce the search space.
When identifying safe actions, we consider (1) the actual controller program {\sc cp}, (2) the topology $\lambda$ and (3) the state formula $\varphi$ to be shown invariant, which we call the \emph{context} {\sc ctx} of actions. 
It turns out that two actions may be dependent in a given context of abstraction while independent in another context, and similarly for invisibility, and we exploit this fact. The argument of the action thus becomes relevant as well.

\begin{defn}[Safe Actions] \label{def:safe} Given a context {\sc ctx} $= (\textsc{cp},\lambda,\varphi)$, and SDN model $\mathcal{M}_{(\lambda,\textsc{cp})} = (S, A, \hookrightarrow, s_0, AP, L)$, an action $\alpha(\cdot)\in A(s)$ is called `safe' if it is independent of any other action in $A$ and invisible for $\varphi$.
We write safe actions $\check{\alpha}(\cdot)$.
\end{defn}
%
\begin{defn}[Order-sensitive Controller Program]
	\label{def:ord-sens}
	A controller program {\sc cp} is order-sensitive if there exists a state $s\in S$ and two actions $\alpha,~\beta$ in $\{ \mathit{ctrl}(\cdot), \mathit{bsync}(\cdot)  \}$ such that $\alpha,\beta \in A(s)$
and $s\xhookrightarrow[]{\alpha} s_1 \xhookrightarrow[]{\beta} s_2$ and $s\xhookrightarrow[]{\beta} s_3 \xhookrightarrow[]{\alpha} s_4$ with $s_2\neq s_4$.
\end{defn}

\begin{defn} \label{def:unobservableAction}
Let $\varphi$ be a state formula. An action $\alpha \in A$ is called `$\varphi$-invariant' if 
$s\models \varphi$ iff $\alpha(s)\models \varphi$ for all $s\in S$ with $\alpha\in A(s)$.
\end{defn}


\begin{lemma}\label{lemma:safe} For transition system $\mathcal{M}_{(\lambda,\textsc{cp})} = (S, A, \hookrightarrow, s_0, AP, L)$ and a formula $\varphi\in \text{LTL}_{ \setminus \{\bigcirc \}}$, 
$\alpha \in A  \text{ is safe   }~\mathrm{iff}~  \bigwedge\nolimits^3_{i=1} \mathit{Safe}_i(\alpha)$,  
	where $\mathit{Safe}_i$, given in Table \ref{tab:safeness}, are per-row. 
	%
\end{lemma}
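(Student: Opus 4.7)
The plan is to prove the biconditional in both directions, relying on the two semantic ingredients of Definition~\ref{def:safe}: independence of $\alpha$ from every other action in $A$, and invisibility w.r.t.\ $\varphi$ (which, for the LTL$_{\setminus\{\bigcirc\}}$ fragment, reduces to $\varphi$-invariance in the sense of Definition~\ref{def:unobservableAction}). I expect the three conditions $\mathit{Safe}_i$ to decompose safety naturally along this split: some of the conditions encode syntactic sufficient-and-necessary conditions for commutativity with every co-enabled $\beta$, while the remaining condition captures $\varphi$-invariance by restricting which state components $\alpha$ may touch relative to the atomic propositions appearing in $\varphi$.

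For the ``if'' direction, I would proceed by case analysis on the action type of $\alpha$ (each row of Table~\ref{tab:safeness}) and, inside each row, on the type of an arbitrary co-enabled action $\beta$. For each pair I would verify the two independence conditions directly from the transition rules of \S\ref{transitions}: (i) firing $\alpha$ does not disable $\beta$ and vice-versa, and (ii) the two execution orders $s \xhookrightarrow{\alpha}\xhookrightarrow{\beta}$ and $s \xhookrightarrow{\beta}\xhookrightarrow{\alpha}$ lead to the same state. Because the components $\pi$, $\delta$ and $\gamma$ of a state are largely disjoint, most cases follow from the fact that $\alpha$ and $\beta$ touch different sub-components; the row-conditions then isolate the remaining subtle cases (e.g. two $\mathit{add}/\mathit{del}$ instructions both landing in the same head set of a $cq$, or two $\mathit{ctrl}/\mathit{bsync}$ actions competing for the controller's local state). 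Having established independence, I use the remaining $\mathit{Safe}_i$ condition to conclude $\varphi$-invariance by inspecting the atomic propositions permitted in $\varphi$ (existential quantification over queue contents, the controller-state predicate $Q(q)$, and dynamic modalities $[\alpha(\vec{x})]P$) and checking that $\alpha$ cannot flip the truth value of any of them. For the ``only if'' direction, I argue by contrapositive: whenever a row-condition fails, I construct a concrete counterexample, exhibiting either a state where $\alpha$ and some $\beta$ do not commute (invoking Definition~\ref{def:ord-sens} when the offending pair involves $\mathit{ctrl}/\mathit{bsync}$), or an atomic proposition of $\varphi$ whose truth value differs between $s$ and $\alpha(s)$.

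The main obstacle I foresee is the combinatorial case analysis over action pairs: with nine action types and non-trivial shared state, several delicate cases arise. In particular, (a) the interaction of the rule-modifying actions $\mathit{add}$ and $\mathit{del}$ with $\mathit{match}$ and $\mathit{nomatch}$, whose enabledness depends on $\mathit{bestmatch}$ over the current flow table; (b) the $(0,\infty)$ abstraction, which makes packet removal from $pq$ essentially a no-op and subtly affects commutativity arguments; (c) controller-handling actions whose safety is conditional on {\sc cp} not being order-sensitive; and (d) atomic propositions of the form $[\alpha(\vec{x})]P$, which refer to the last-fired action and therefore interact non-trivially with $\varphi$-invariance, requiring a careful definition of when such a modal proposition is preserved by an arbitrary interleaving. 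Factoring these cases cleanly through the three-column structure of Table~\ref{tab:safeness} will be the bulk of the proof.
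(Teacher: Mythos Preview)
Your plan for the ``if'' direction is essentially the paper's own argument: a case analysis on the five action types that appear as rows in Table~\ref{tab:safeness} ($\mathit{ctrl}$, $\mathit{bsync}$, $\mathit{fwd}$, $\mathit{brepl}$, $\mathit{recv}$), checking for each that it neither disables nor is disabled by any $\beta$, that the two execution orders agree on the resulting state, and then that only state components outside the reach of the permitted atomic propositions are modified (using $\mathit{Safe}_3$). One correction: $\mathit{add}$, $\mathit{del}$, $\mathit{match}$, $\mathit{nomatch}$, $\mathit{send}$ are \emph{not} rows of the table and are never safe, so your parenthetical about ``two $\mathit{add}/\mathit{del}$ instructions'' does not arise; those actions appear only as the partner $\beta$ in the independence check.

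Two points worth noting. First, despite the ``iff'' in the statement, the paper's proof establishes only sufficiency ($\Leftarrow$): it shows that the three $\mathit{Safe}_i$ conditions imply independence and invisibility, and does not argue the converse. Your contrapositive plan for necessity therefore goes beyond what the paper does; be aware that for the conditional clauses in $\mathit{Safe}_3$ (``if such-and-such occurs in $\varphi$ then $\alpha$ is $\varphi$-invariant'') necessity is not obviously tight, since a vacuously true conditional is a weak requirement. Second, your concern about the dynamic modalities $[\alpha(\vec{x})]P$ is well-placed: the paper's invisibility argument treats only the basic queue-existence and controller-state propositions and does not discuss how an interleaved safe action interacts with a proposition that refers to the \emph{last-fired} transition.
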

\longversion{
\begin{proof} 
\vspace{-0.2cm}	See Appendix \ref{appendix:proofs}.
\end{proof}}
\begin{table}[h]
	\scriptsize
	\begin{center}
		\caption{Safeness Predicates}
		\label{tab:safeness}
		\begin{tabular}{m{2.9cm}|m{4.4cm}|m{4.8cm}} 
			\toprule 
			\multicolumn{1}{c|}{\textbf{Action}} & \multicolumn{1}{c|}{\textbf{Independence} }& \multicolumn{1}{c}{\textbf{Invisibility}}\\
			\multicolumn{1}{c|}{$\mathit{Safe}_1(\alpha)$} & \multicolumn{1}{c|}{$\mathit{Safe}_2(\alpha)$ } & \multicolumn{1}{c}{$\mathit{Safe}_3(\alpha)$} \\
			\midrule 
			$\alpha =\mathit{ctrl(pk, cs)}$ &  {\sc cp} is not order-sensitive 
			& \texttt{if} $Q(q)$ occurs in $\varphi$, where $q \in \mathit{CS}$, \texttt{then}   $\alpha$ is $\varphi$-invariant.\\
			\midrule
			$\alpha = \mathit{bsync(sw, xid, cs)}$ &   {\sc cp} is not order-sensitive   &  \texttt{if}  $Q(q)$ 
			occurs in $\varphi$, where $q \in \mathit{CS}$, \texttt{then}  $\alpha$ is $\varphi$-invariant.\\
			\midrule
			$\mathit{\alpha = fwd(sw, pk, \mathit{ports})}$ & \multicolumn{1}{c|}{$\top$} &  \texttt{if}  $\exists pk {\in} b.q\,.\, P(pk)$ occurs in $\varphi$, for any $b\in \{sw\}\cup \{ \lambda(sw,p)_1\ |\ p\in \mathit{ports} \}$ and $q \in \{\mathit{pq,recvq}\}$, \texttt{then}    $\alpha$ is $\varphi$-invariant.\\
			\midrule
			$\alpha =  \mathit{brepl(sw, xid)}$ & \multicolumn{1}{c|}{$\top$} & \multicolumn{1}{c}{$\top$}\\
			\midrule
			$\alpha = \mathit{recv(h, pk)}$ & \multicolumn{1}{c|}{$\top$} &  \texttt{if}  $\exists pk {\in} \mathit{h.rcvq}\,.\, \mathit{P(pk)}$ occurs in $\varphi$, \texttt{then}   $\alpha$ is  $\varphi$-invariant.\\
			\bottomrule 
		\end{tabular}
	\end{center}
	\vspace{-.3cm}
\end{table}

%
%




\begin{thm}[POR instance for SDN]
\label{thm:POR}
Let $(\textsc{cp}, \lambda, \varphi)$ be a context such that $\mathcal{M}_{(\lambda,\textsc{cp})} = (S, A, \hookrightarrow, s_0, AP, L)$ is an SDN network model from Def.~\ref{def:SDNmodel};
and let safe actions be as in Def.~\ref{def:safe}. Further, let $\mathit{ample}(s)$ be defined by:
\[ 
\mathit{ample}(s)= \left\{ \begin{array}{ll}
\{\alpha\in A(s)\ |\ \alpha \text{ safe } \} & \text{if } \{\alpha\in A(s)\ |\ \alpha \text{ safe } \} \neq \emptyset\\
A(s) &\text{otherwise}
\end{array}
\right.
\]
Then, $\mathit{ample}$ satisfies the criteria of Theorem~\ref{ample} and thus $\mathcal{M}_{(\lambda,\textsc{cp})}       \stutt      \mathcal{M}^{\mathit{ample}}_{(\lambda,\textsc{cp})}$ \footnote{Stutter equivalence here implicitly is defined w.r.t.\  the atomic propositions appearing in $\varphi$,  but this suffices as we are just interested in the validity of $\varphi$.}
\end{thm}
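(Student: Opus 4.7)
The plan is to verify the four POR conditions (C1)–(C4) of Theorem~\ref{ample} for the $\mathit{ample}$ mapping defined above; once these are in hand, Theorem~\ref{ample} applied to $\mathcal{M}_{(\lambda,\textsc{cp})}$ directly yields the claimed stutter equivalence $\mathcal{M}_{(\lambda,\textsc{cp})}\stutt\mathcal{M}^{\mathit{ample}}_{(\lambda,\textsc{cp})}$. Conditions (C1)–(C3) are short corollaries of Lemma~\ref{lemma:safe}; the real work lies in (C4).

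\textbf{(C1)–(C3).} Non-emptiness together with the inclusion $\mathit{ample}(s)\subseteq A(s)$ is built into the definition: either $\mathit{ample}(s)$ is the non-empty set of enabled safe actions, or it equals $A(s)$, which is itself non-empty because $\mathcal{M}_{(\lambda,\textsc{cp})}$ has no terminal states. For (C2) and (C3), assume $\mathit{ample}(s)\neq A(s)$, so every $\alpha\in \mathit{ample}(s)$ is safe. By Lemma~\ref{lemma:safe}, $\alpha$ satisfies $\mathit{Safe}_2$ and $\mathit{Safe}_3$, which by Def.~\ref{def:safe} means $\alpha$ is both independent of every other action in $A$ and $\varphi$-invariant. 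Since independence is symmetric, no $\beta\in A\setminus\mathit{ample}(s)$ can depend on any action in $\mathit{ample}(s)$, making the hypothesis of (C2) vacuous; and $\varphi$-invariance is precisely the invisibility demanded by (C3).

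\textbf{(C4).} This is the main obstacle. The plan is to argue by contradiction: suppose $\mathcal{M}^{\mathit{ample}}$ contains a cycle $s_0\xhookrightarrowdbl{\alpha_1}\cdots\xhookrightarrowdbl{\alpha_k}s_0$ in which no state is fully expanded. Then at every $s_i$, $\mathit{ample}(s_i)$ consists solely of safe actions, so each $\alpha_i$ is safe and the entire cycle is a safe-only run. The goal is to exhibit a well-founded measure $\mu:S\to\mathbb{N}^m$, ordered lexicographically, that strictly decreases under every safe transition, contradicting the existence of such a cycle. A natural candidate prioritises $|rq|$ (which only the non-safe $\mathit{nomatch}$ replenishes, so $\mathit{ctrl}$ strictly decreases it), then a component tracking the barrier/response ecosystem across all switches, and finally $|fq|$ and $|rcvq|$ (consumed by $\mathit{fwd}$ and $\mathit{recv}$ respectively).

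The delicate case, and the one I expect to absorb most of the effort, is the $\mathit{brepl}$/$\mathit{bsync}$ interaction, because $\mathit{brepl}$ feeds fresh $br(sw,\mathit{xid})$ into $\mathit{brq}$ and $\mathit{bsync}$ can in turn append new barriers to some $cq$. The key observation that makes the measure work is that such appended barriers land at the \emph{tail} of $cq$ and can become heads (thus enabling $\mathit{brepl}$ again) only after the non-safe $\mathit{add}/\mathit{del}$ actions have cleared every preceding rule set; hence in a safe-only run the reservoir of head-visible barriers is strictly monotone and cannot sustain the cycle. Combining this monotonicity with the strict decreases supplied by $\mathit{ctrl}$, $\mathit{fwd}$ and $\mathit{recv}$ at lexicographically higher positions produces the required contradiction, completing the verification of (C4) and hence the proof.
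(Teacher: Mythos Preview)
Your treatment of (C1)--(C3) matches the paper's almost verbatim. For (C4) the paper does \emph{not} build a lexicographic measure; it argues by contradiction via a direct case split: first cycles made of a single safe action type (each type strictly shrinks its ``own'' buffer --- $\mathit{ctrl}$ shrinks $\mathit{rq}$, $\mathit{fwd}$ shrinks $\mathit{fq}$, $\mathit{brepl}$ shrinks $\mathit{cq}$, $\mathit{bsync}$ shrinks $\mathit{brq}$, $\mathit{recv}$ shrinks $\mathit{rcvq}$), then mixed cycles, sub-cased on whether $\mathit{ctrl}/\mathit{bsync}$ occur. Your measure approach is a legitimate alternative and is essentially a repackaging of the same monotonicity facts into one ranking function; it buys uniformity at the cost of having to order the components carefully.

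There is, however, a genuine soft spot in your (C4) sketch. Your ``key observation'' asserts that barriers appended by $\mathit{bsync}$ (or $\mathit{ctrl}$) can reach the head of $\mathit{cq}$ only after non-safe $\mathit{add}/\mathit{del}$ actions clear a preceding rule set. That is not true in general: the semantics allow a handler to append a bare barrier (zero rules) to a $\mathit{cq}$ that is currently empty or whose suffix is itself a barrier, in which case the new barrier is immediately head-visible without any $\mathit{add}/\mathit{del}$. So your monotone ``reservoir of head-visible barriers'' can be replenished within a safe-only run, and the lexicographic descent as you describe it does not go through at that coordinate. The paper's own argument at the corresponding point (its Case~2.i) is also loose --- it asserts that ``no safe action of other type than $\mathit{ctrl}$ and $\mathit{bsync}$ accesses $\mathit{rq}$ or $\mathit{brq}$'', which overlooks that $\mathit{brepl}$ writes to $\mathit{brq}$ --- so you are in good company, but if you want your measure argument to be airtight you need a sharper invariant than head-visibility (e.g.\ exploit the FIFO structure of each $\mathit{cq}$ to argue that returning to the identical sequence after removing from the head and appending to the tail forces the appended content to be empty, or else restrict to controller programs that never emit a barrier without a preceding non-empty rule set).
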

\begin{proof}
	$ $\newline
	\vspace{-.4cm}
	\begin{enumerate}[label=\textit{C\arabic*}]
		\item The (non)emptiness condition is trivial since by definition of $\mathit{ample}(s)$ it follows that $\mathit{ample}(s) = \varnothing $ iff $A(s) = \varnothing $.
		\item By assumption $\beta \in A  \setminus \mathit{ample}(s)$ depends on $\mathit{ample}(s)$. But with our definition of $\mathit{ample}(s)$ this is impossible as all actions in
		 $\mathit{ample}(s)$ are safe and by definition independent of all other actions.
		
		%
		\item The validity of the invisibility condition is by definition of $\mathit{ample}$ and safe actions.
		\item 
		We now show that every cycle in $\mathcal{M}_{(\lambda,\textsc{cp})}^{\mathit{ample}}$ contains a fully expanded state $s$, i.e.\  a state $s$ such that $\mathit{ample}(s)=A(s)$.	
		\label{cycle}
 By definition of $\mathit{ample}(s)$ in Thm.~\ref{thm:POR} it is equivalent to show that there is no cycle in $\mathcal{M}_{(\lambda,\textsc{cp})}^{\mathit{ample}}$ consisting of safe actions only.
		We show this by contradiction, assuming such a cycle of only safe actions exists.
		There are five safe action types to consider: \emph{ctrl}, \emph{fwd}, \emph{brepl}, \emph{bsync} and \emph{recv}. 
		Distinguish two cases.
		
	 	 \emph{Case 1. A sequence of safe actions of same type}. Let us consider the different safe actions:
     		\begin{itemize}
			 \item Let $\rho$ an execution of $\mathcal{M}_{(\lambda,\textsc{cp})}^{\mathit{ample}}$ which consists of only one type of \emph{ctrl}-actions: 
				$$\rho = s_1 \xhookrightarrowdbl{\mathit{ctrl}(\mathit{pkt}_1,\mathit{cs}_1)} s_2 \xhookrightarrowdbl{\mathit{ctrl}(\mathit{pkt}_2,\mathit{cs}_2)} ... s_{i-1} \xhookrightarrowdbl{\mathit{ctrl}(\mathit{pkt}_{i-1},\mathit{cs}_{i-1})}  s_i$$
				Suppose $\rho$ is a cycle. 
				According to the \emph{ctrl} semantics, for each transition $s \xhookrightarrowdbl{\mathit{ctrl(pkt,cs)}} s' $, 
				where $s = (\pi, \delta, \gamma)$, $s' = (\pi', \delta', \gamma')$, it holds that $\mathit{\gamma'.rq} = \mathit{\gamma.rq}\setminus \{\mathit{pkt}\}$ as we use sets to represent \emph{rq} buffers. 
				Hence, for the execution $\rho$ it holds $\gamma_i.\mathit{rq} = \gamma_1.\mathit{rq}\setminus \{\mathit{pkt}_1, \mathit{pkt}_2,...\mathit{pkt}_{i-1}\}$ which implies that $s_1 \neq s_i$. Contradiction. 
			\item 		 Let $\rho$ an execution which consists of only one type of \emph{fwd}-actions: similar argument as above since \emph{fq}-s are represented by sets and thus forward messages are removed from \emph{fq}.
			\item 		 Let $\rho$ an execution which consists of only one type of \emph{brepl}-actions: similar argument as above since control messages are removed from \emph{cq}.

			\item 		 Let $\rho$ an execution which consists of only one type of \emph{bsync}-actions: similar argument as above, as barrier reply messages are removed from \emph{brq}-s  that are represented by sets.

			\item 		 Let $\rho$ an execution which consists of only one type of \emph{recv}-actions: similar argument as above, as packets are removed from $\mathit{rcvq}$ buffers that are represented by sets.
      		   \end{itemize}
		
	 	  \emph{Case 2. A sequence of different safe actions}. 
		Suppose there exists a cycle with mixed safe actions starting in $s_1$ and ending in $s_i$. 
		Distinguish the following cases.
		
		\begin{longenum}
			\item There exists at least a \emph{ctrl} and/or a \emph{bsync} action in the cycle. According to the effects of safe transitions, the \emph{ctrl} action will change to a state with smaller \emph{rq} and the \emph{bsync} will always switch to a state with smaller \emph{brq}. It is important here that \emph{ctrl} does not interfere with \emph{bsync} regarding \emph{rq, brq}, and no safe action of other type than \emph{ctrl} and \emph{bsync} accesses \emph{rq} or \emph{brq}. This implies that $s_1 \neq s_i$. Contradiction.
			\item Neither \emph{ctrl}, nor \emph{bsync} actions in the cycle.
			\begin{longenum}
				\item There is a \emph{fwd} and/or \emph{brepl} in the cycle: \emph{fwd} will always switch to a state with smaller \emph{fq} and \emph{brepl} will always switch to a state with smaller \emph{cq} (\emph{brepl} and \emph{recv} do not interfere with \emph{fwd}). This implies that $s_1 \neq s_i$. Contradiction.
				\item There is neither \emph{fwd} nor \emph{brepl} in the cycle. This means that only \emph{recv} is in the cycle which is already covered by the first case.
			\end{longenum}											
	       \end{longenum}
 	\end{enumerate}
\end{proof}

\noindent
Due to the definition of the transition system via ample sets, each safe action is immediately executed after its enabling one. Therefore, one can merge every transition of a safe action with its precursory enabling one. 
Intuitively, the semantics of 
the merged action is defined as the successive execution of 
its constituent actions. This process can be repeated if there is a chain of safe actions; for instance, in the case of ${s \xhookrightarrowdbl[]{\mathit{nomatch(sw,pkt)}}s' \xhookrightarrowdbl[]{\mathit{ctrl(pkt,cs)}} s'' \xhookrightarrowdbl[]{\mathit{fwd(sw,pkt,ports)}} s'''}$
where each transition enables the next and the last two are assumed to be safe. These transitions can be merged into one, yielding a stutter equivalent trace as the intermediate states are invisible (w.r.t.\ the context and thus the property to be shown) by definition of safe actions.

\subsection{State Representation}
\label{subsec:state:repr}

Efficient state representation is crucial for minimising \toolname's memory footprint and enabling it to scale up to relatively large network setups. 

\noindent\textbf{Packet and Rule Indexing. }In \toolname, only a single instance of each packet and rule that can appear in the modelled network is kept in memory. An index is then used to associate queues and flow tables with packets and rules, with a single bit indicating their presence (or absence). This data structure is illustrated in Figure \ref{fig:ds-layout}.
 For a data packet, a value of $1$ in the $pq$ section of the entry indicates that infinite copies of it are stored in the packet queue of the respective switch. A value of $1$ in the \emph{fq} section indicates that a single copy of the packet is stored in the forward queue of the respective switch. A value of $1$ in the $\mathit{rq}$ section indicates that a copy of the packet sent by the respective switch (when a \emph{nomatch} transition is fired) is stored in the controller's request queue. For a rule, a value of $1$ in the \emph{ft} section indicates that the rule is installed in the respective switch's flow table. A value of $1$ in the $\mathit{cq}$ section indicates that the rule is part of a \emph{FlowMod} message in the respective switch's control queue.

\begin{figure}[h] 
	\centering
	\captionsetup{width=1\textwidth}
		\vspace{-.3cm}
	\includegraphics[width=1\linewidth]{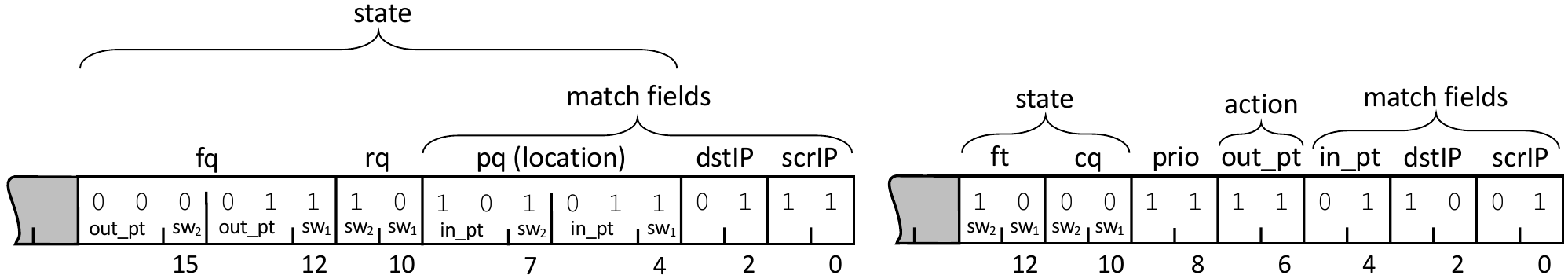}  
	\caption{Packet (left) and rule (right) indices}
	\label{fig:ds-layout}
	\vspace{-.3cm}
\end{figure}

The proposed optimisation enables scaling up the network topology by minimising the required memory footprint. For every switch, \toolname\ only requires a few bits in each packet and rule entry in the index. 

\noindent\textbf{Discovering equivalence classes of packets. }Model checking with all possible packets including all specified fields in the OpenFlow standard would entail a huge state space that would render any approach unusable. Here, we propose the discovery of equivalence classes of packets that are then used for model checking. We first remove all fields that are not referenced in a statement or rule creation or deletion in the controller program. Then, we identify packet classes that would result in the same controller behaviour. Currently, as with the rest of literature, we focus on simple controller programs where such equivalence classes can be easily identified by analysing static constraints and rule manipulation in the controller program. We then generate one representative packet from each class and assign it to all network switches that are directly connected to end-hosts; i.e. modelling clients that can send an arbitrarily large number of packets in a non-deterministic fashion. We use the minimum possible number of bits to represent the identified equivalence classes. For example, if the controller program exerts different behaviour if the destination {\sc tcp} port of a packet is $22$ (i.e.\ destined to an {\sc ssh} server) or not, we only use a 1-bit field to model this behaviour. 

\noindent\textbf{Bit packing. }We reduce the size of each recorded state by employing bit packing using the \texttt{int32} type supported by {\sc Uppaal}, and bit-level operations for the entries in the packet and rule indices, as well as for the packets and rules themselves.

\section{Experimental Evaluation}
\label{experimental-evaluation}

In this section, we experimentally evaluate \toolname\ by comparing it with the state of the art, in terms of performance (verification throughput and memory footprint) and model expressivity. We have implemented \toolname\ in {\sc Uppaal} \cite{UPPAAL} as a network of parallel automata for the controller and network switches, which communicate asynchronously by writing/reading packets to/from queues that are part of the model discussed in \S \ref{sec:modelSDN}. As discussed in \S \ref{sec:optimisations}, this is implemented by directly manipulating the packet and rule indices.

Throughout this section we will be using three examples of network controllers: (1) A \emph{stateless firewall} (\S \ref{appendix:CP}-\ref{rq1-alg11}) requires the controller to install rules to network switches that enable them to decide whether to forward a packet towards its destination or not; this is done in a stateless fashion, i.e. without having to consider any previously seen packets. For example, a controller could configure switches to block all packets whose destination {\sc tcp}  port is {\sc ssh}. (2) A \emph{stateful firewall} (\S \ref{appendix:CP}-\ref{rq1-alg12}) is similar to the stateless one but decisions can take into account previously seen packets. A classic example of this is to allow bi-directional communication between two end-hosts, when one host opens a {\sc tcp} connection to the other. Then, traffic flowing from the other host back to the connection initiator should be allowed to go through the switches on the reverse path. (3) A \emph{MAC learning application} (\S \ref{appendix:CP}-\ref{rq1-alg13}) enables the controller and switches to learn how to forward packets to their destinations (identified with respective MAC addresses). A switch sends a \emph{PacketIn} message to the controller when it receives a packet that it does not know how to forward. By looking at this packet, the controller learns a mapping of a source switch (or host) to a port of the requesting switch. It then installs a rule (by sending a \emph{FlowMod} message) that will allow that switch to forward packets back to the source switch (or host), and asks the requesting switch (by sending a \emph{PacketOut} message) to flood the packet to all its ports except the one it received the packet from. This way, the controller eventually learns all mappings, and network switches receive rules that enable them to forward traffic to their neighbours for all destinations in the network.\\

\subsection{Performance Comparison}
\label{performance}

We measure \toolname's performance, and also compare it against Kuai \cite{Kuai}\footnote{Note that parts of Kuai's source code are not publicly available, therefore we implemented it's model in {\sc Uppaal}.} using the examples described above, and we investigate the behaviour of \toolname\ as we scale up the network (switches and clients/servers). We report three metrics: (1) \emph{verification throughput} in visited states per second, (2) number of visited states, and (3) required memory. We have run all verification experiments on an 18-Core iMac pro, 2.3GHz Intel Xeon W with 128GB DDR4 memory. 

\noindent\textbf{Verification throughput. }We measure the verification throughput when running a single experiment at a time on one {\sc cpu} core and report the average and standard deviation for the first 30 minutes of each run.
In order to assess how \toolname's different optimisations affect its performance, we report results for the following system variants: (1) \toolname, (2) \toolname\ without POR, (3) \toolname\ without any optimisations (neither POR, state representation), and (4) Kuai. Figure~\ref{fig:throughput} shows the measured throughput (with error bars denoting standard deviation).

\begin{figure}[h] 
	\captionsetup{width=.9\textwidth}
	\begin{subfigure}[b]{1\textwidth}
		\captionsetup{width=.8\textwidth}
		\centering
		\includegraphics[width=1\linewidth]{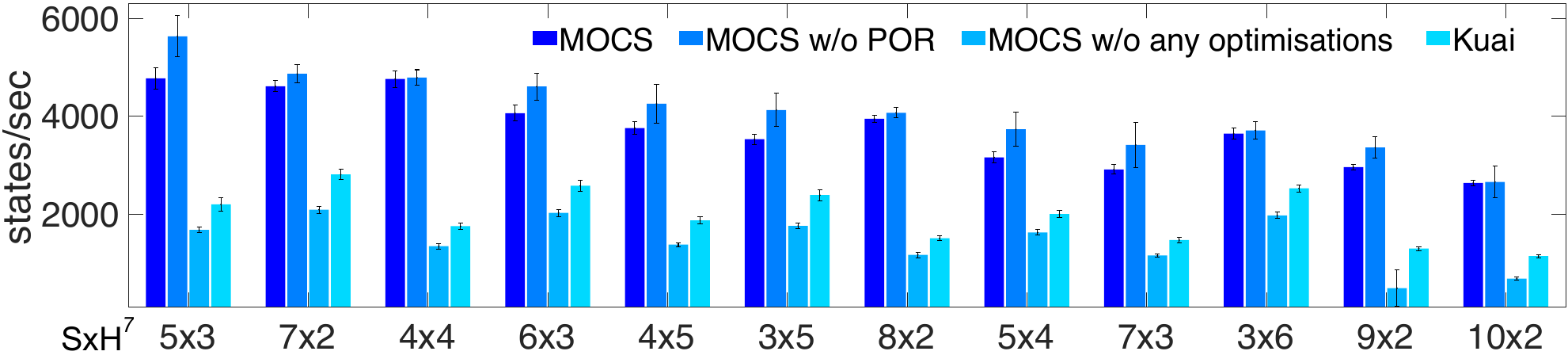}  
		\caption{MAC Learning Switch}
		\label{fig:thpML}
	\end{subfigure}\\\\
	%
	%
	\begin{subfigure}[b]{.61\textwidth}
		\captionsetup{width=1\textwidth}
		\centering
		\includegraphics[width=1\linewidth]{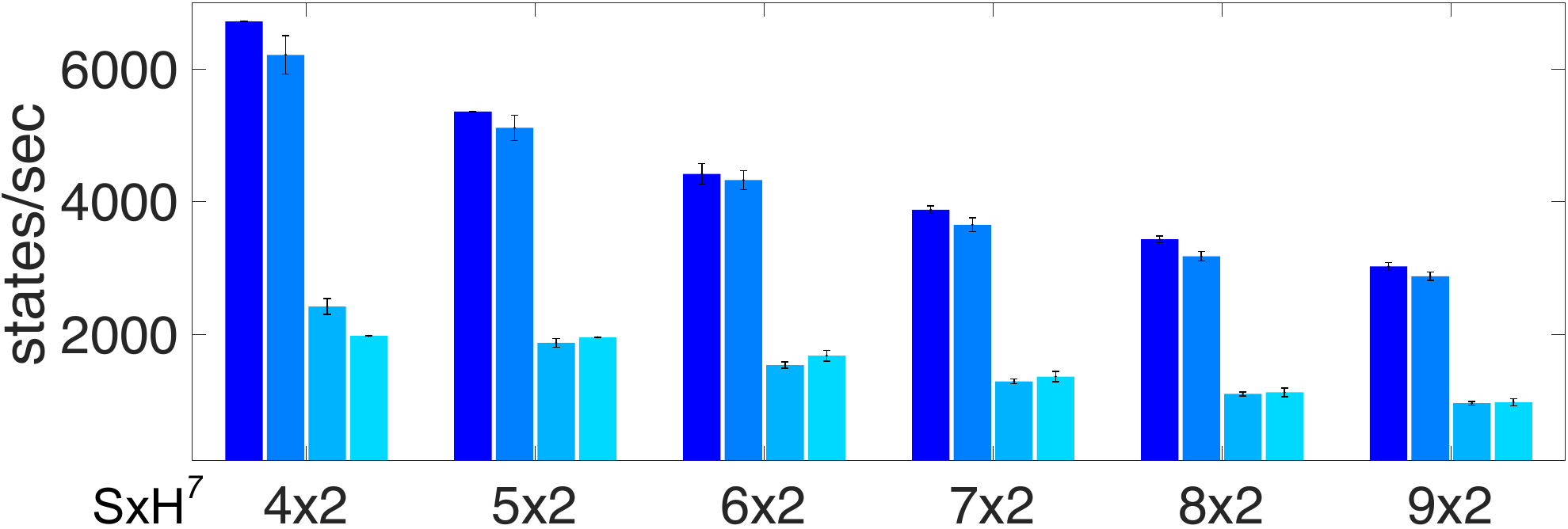}  
		\caption{Stateless Firewall}
		\label{fig:thpSSH}
	\end{subfigure}%
	\begin{subfigure}[b]{.39\textwidth}
		\captionsetup{width=.9\textwidth}
		\centering
		\includegraphics[width=1\linewidth]{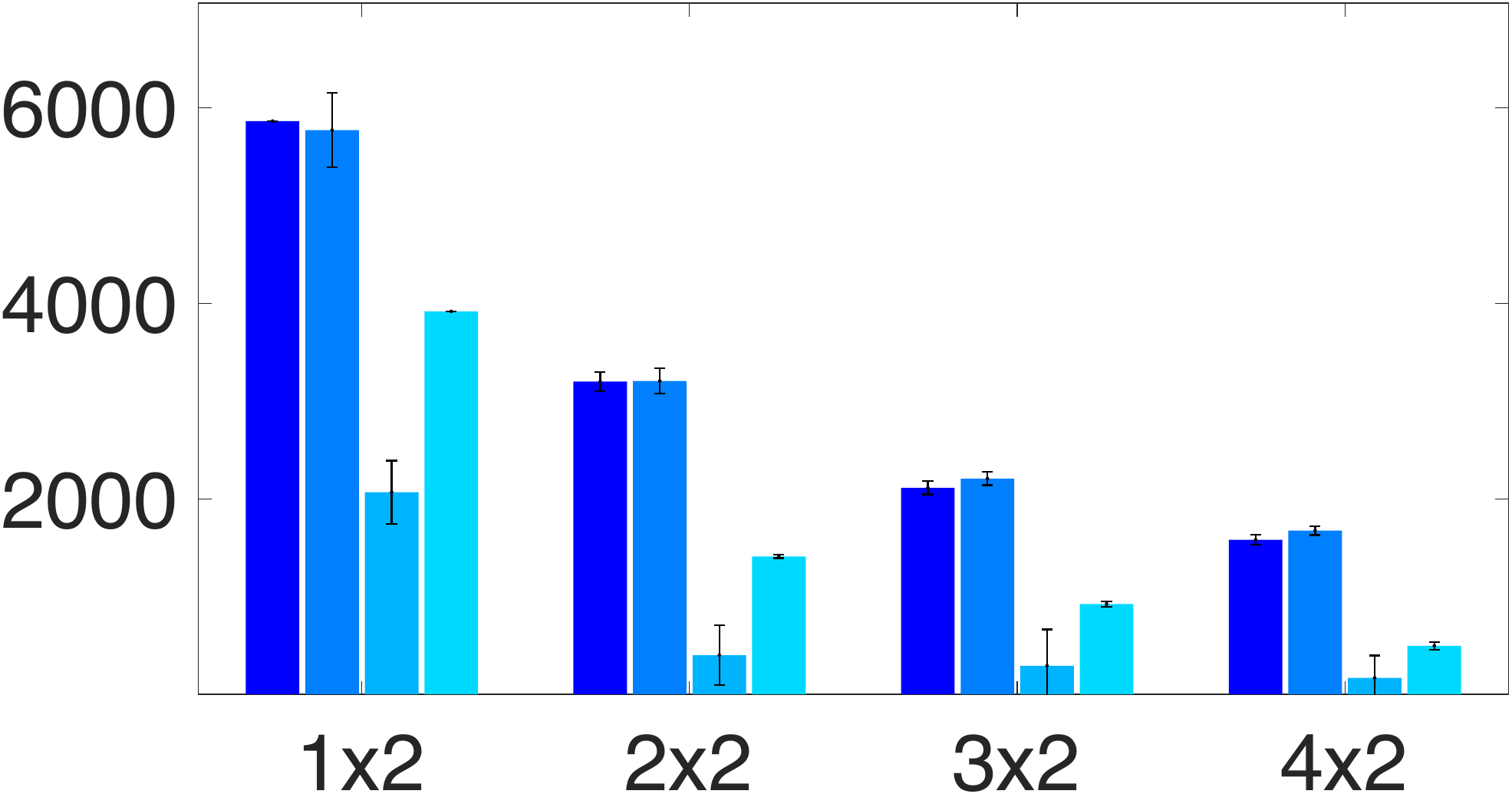}  
		\caption{Stateful Firewall}
		\label{fig:thpFW}
	\end{subfigure}%
	\caption{Performance Comparison -- Verification Throughput}
	\label{fig:throughput}
	\vspace{-.3cm}
\end{figure}

For the MAC learning and stateless firewall applications, we observe that \toolname\ performs significantly better than Kuai for all different network setups and sizes\footnote{\textsf{S}$\times$\textsf{H} in Figures \ref{fig:throughput} to \ref{fig:memory} indicates the number of switches \textsf{S} and hosts \textsf{H}.}, achieving at least double the throughput Kuai does. The throughput performance is much better for the stateful firewall, too. This is despite the fact that, for this application, Kuai employs the unrealistic optimisation where the \emph{barrier} transition forces the immediate update of the forwarding state. In other words, \toolname\ is able to explore significantly more states and identify bugs that Kuai cannot (see \S \ref{subsec:expressivity}).

The computational overhead induced by our proposed PORs is minimal. This overhead occurs when PORs require dynamic checks through the safety predicates described in Table \ref{tab:safeness}. This is shown in Figure~\ref{fig:thpML}, where, in order to decide about the (in)visibility of $\mathit{fwd(sw,pk,pt)}$ actions, a lookup is performed in the history-array of packet \emph{pk}, checking whether the bit which corresponds to switch $sw'$, which is connected with port \emph{pt} of \emph{sw}, is set. On the other hand, if a POR does not require any dynamic checks, no penalty is induced, as shown in Figures~\ref{fig:thpSSH} and \ref{fig:thpFW}, where the throughput when the PORs are disabled is almost identical to the case where PORs are enabled. This is because it has been statically established at a pre-analysis stage that all actions of a particular type are always safe for any argument/state. It is important to note that even when computational overhead is induced, PORs enable \toolname\ to scale up to larger networks because the number of visited states can be significantly reduced, as discussed below.

In order to assess the contribution of the state representation optimisation in \toolname's performance, we measure the throughput when both PORs and state representation optimisations are disabled. It is clear that they contribute significantly to the overall throughput; without these the measured throughput was at least less than half the throughput when they were enabled.

\begin{figure}[h]
	\captionsetup{width=1\textwidth}
	\begin{subfigure}[b]{1\textwidth}
		\captionsetup{width=.8\textwidth}
		\centering
		\includegraphics[width=1\linewidth]{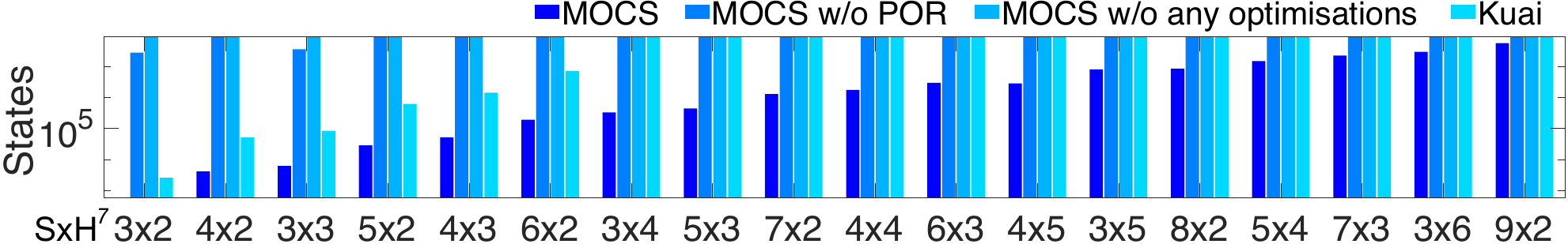}  
		\caption{MAC Learning Switch}
		\label{fig:statesML}
	\end{subfigure}\\\\
	%
	%
	\begin{subfigure}[b]{.75\textwidth}
		\captionsetup{width=1\textwidth}
		\centering
		\includegraphics[width=1\linewidth]{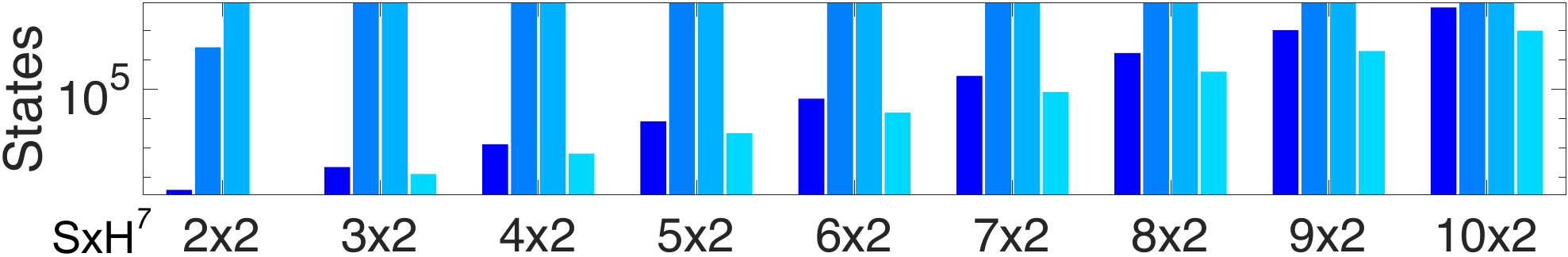}  
		\caption{Stateless Firewall}
		\label{fig:statesSSH}
	\end{subfigure}%
	\begin{subfigure}[b]{.25\textwidth}
		\captionsetup{width=1\textwidth}
		\centering
		\includegraphics[width=1\linewidth]{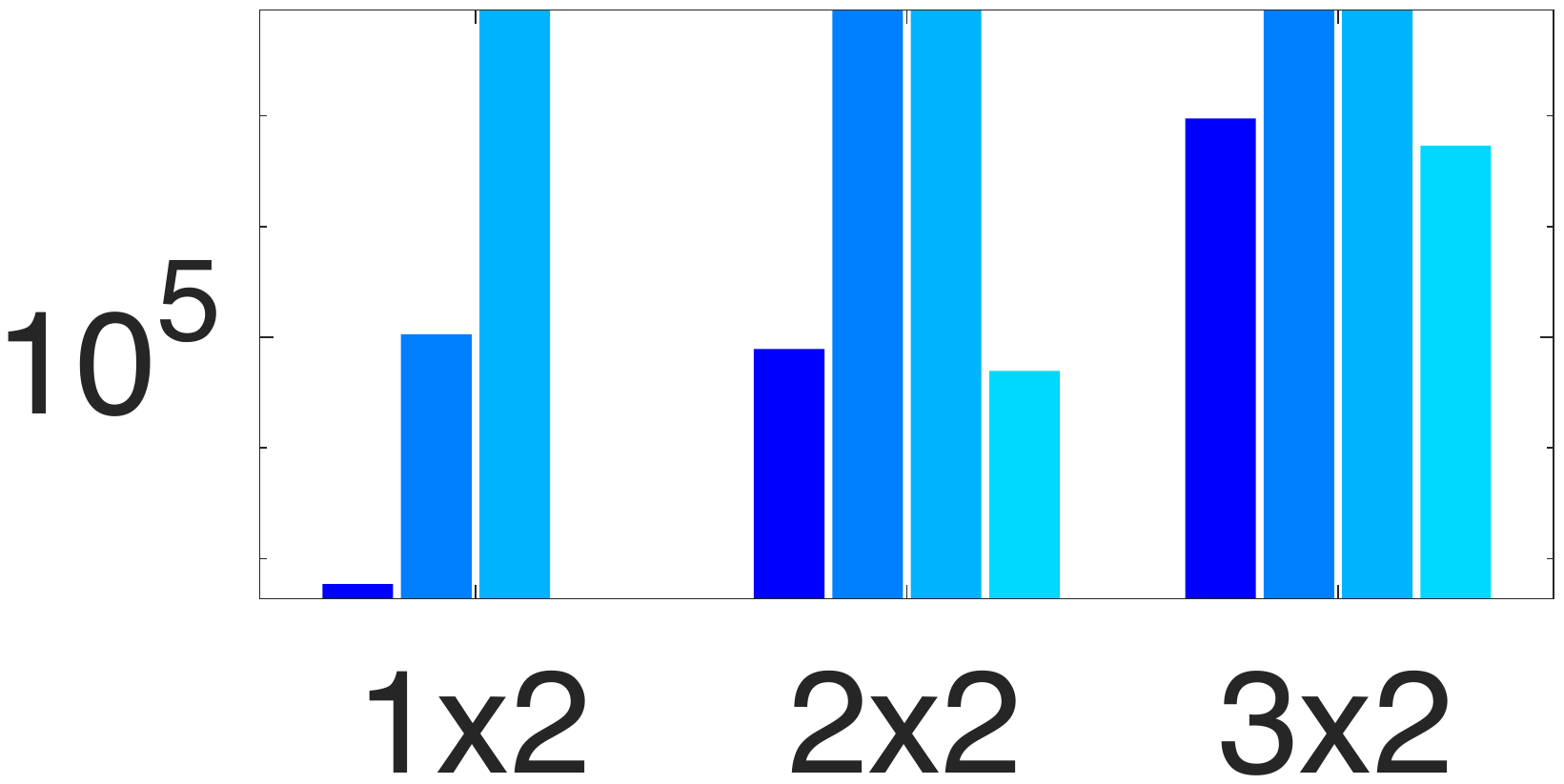}  
		\caption{Stateful Firewall}
		\label{fig:statesFW}
	\end{subfigure}%
	\caption{Performance Comparison -- Visited States (logarithmic scale)}
	\label{fig:states}
	\vspace{.4cm}
\end{figure}

\begin{figure}[h]
	\captionsetup{width=1\textwidth}
	\begin{subfigure}[b]{1\textwidth}
		\captionsetup{width=.8\textwidth}
		\centering
		\includegraphics[width=1\linewidth]{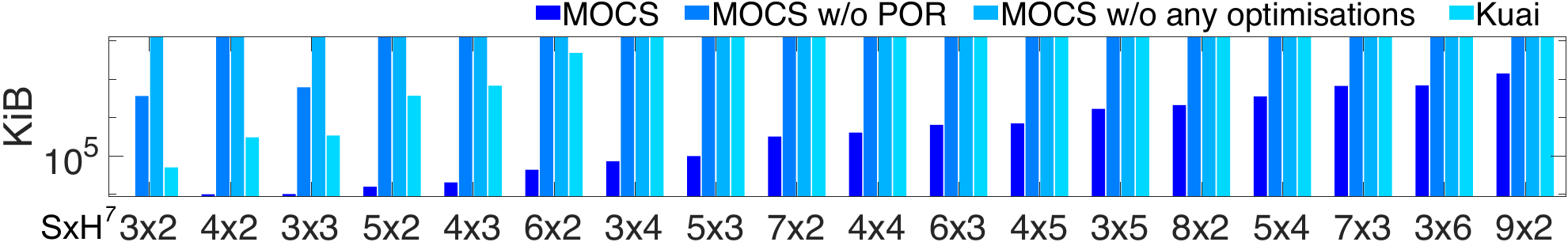}  
		\caption{MAC Learning Switch}
		\label{fig:memML}
	\end{subfigure}\\\\
	%
	%
	\begin{subfigure}[b]{.75\textwidth}
		\captionsetup{width=1\textwidth}
		\centering
		\includegraphics[width=1\linewidth]{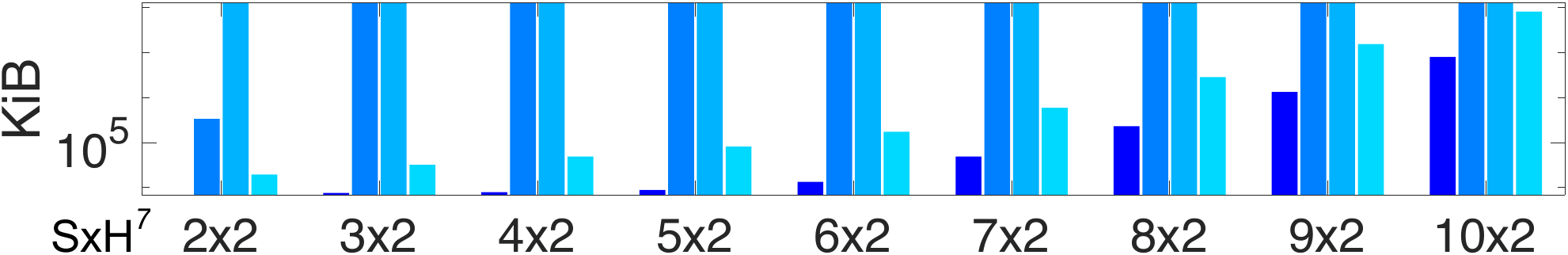}  
		\caption{Stateless Firewall}
		\label{fig:memSSH}
	\end{subfigure}%
	\begin{subfigure}[b]{.25\textwidth}
		\captionsetup{width=1\textwidth}
		\centering
		\includegraphics[width=1\linewidth]{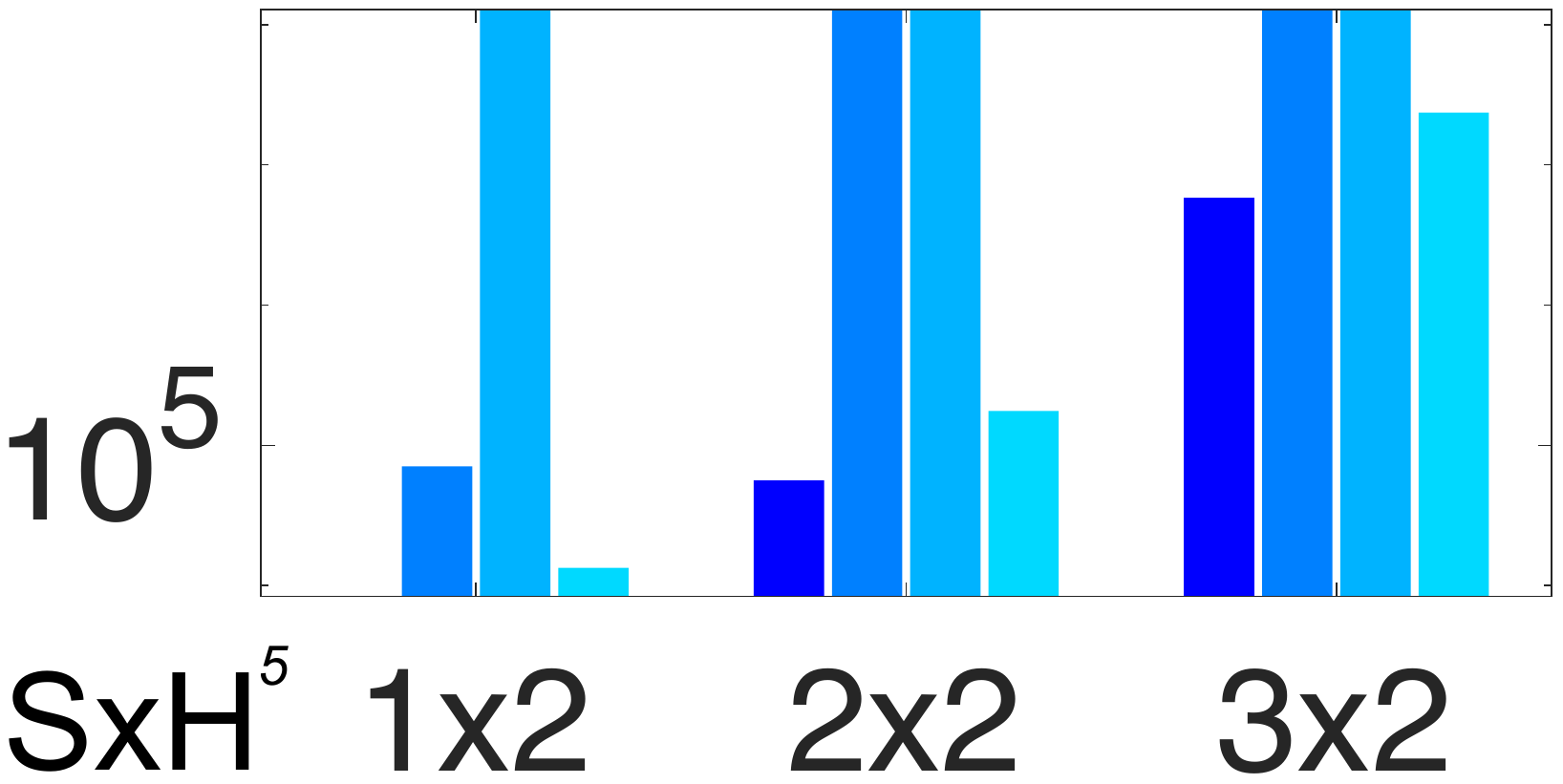}  
		\caption{Stateful Firewall}
		\label{fig:memFW}
	\end{subfigure}%
	\caption{Performance Comparison -- Memory Footprint (logarithmic scale)}
	\label{fig:memory}
	\vspace{-.3cm}
\end{figure}

\noindent\textbf{Number of visited states and required memory. }Minimising the number of visited states and required memory is crucial for scaling up verification to larger networks. The proposed partial order reductions (\S \ref{subsec:por}) and identification of packet equivalent classes aim at the former, while packet/rule indexing and bit packing aim at the latter (\S \ref{subsec:state:repr}). In Figure \ref{fig:states}, we present the results for the various setups and network deployments discussed above. We stopped scaling up the network deployment for each setup when the verification process required more than $24$ hours or started swapping memory to disk. For these cases we killed the process and report a topped-up bar in Figures \ref{fig:states} and \ref{fig:memory}. 

For the MAC learning application, \toolname\ can scale up to larger network deployments compared to Kuai, which could not verify networks consisting of more than $2$ hosts and $6$ switches. For that network deployment, Kuai visited ${\sim}7$m states, whereas \toolname\ visited only ${\sim}193$k states. At the same time, Kuai required around $48$GBs of memory ($7061$ bytes/state) whereas \toolname\ needed ${\sim}43$MBs ($228$ bytes/state). Without the partial order reductions, \toolname\ can only verify tiny networks. The contribution of the proposed state representation optimisations is also crucial; in our experiments (results not shown due to lack of space), for the $6\times 2$ network setups (the largest we could do without these optimisations), we observed a reduction in state space (due to the identification of packet equivalence classes) and memory footprint (due to packet/rule indexing and bit packing) from ${\sim}7$m to ${\sim}200$k states and from ${\sim}6$KB per state to ${\sim}230$B per state.
For the stateless and stateful firewall applications, resp., \toolname\ performs equally well to Kuai with respect to scaling up. 



\subsection{Model Expressivity}
\label{subsec:expressivity}

The proposed model is significantly more expressive compared to Kuai as it allows for more asynchronous concurrency. 
To begin with, in \toolname, controller messages sent before a barrier request message can be interleaved with all other enabled actions, other than the control messages sent after the barrier. By contrast, Kuai always flushes all control messages until the last barrier in one go, masking a large number of interleavings and, potentially, buggy behaviour. 
Next, in \toolname\, \emph{nomatch, ctrl} and \emph{fwd} can be interleaved with other actions. 
In Kuai, it is enforced a mutual exclusion concurrency control policy through the \emph{wait}-semaphore: whenever a \emph{nomatch} occurs the mutex is locked and it is unlocked by the \emph{fwd} action of the thread \emph{nomatch-ctrl-fwd} which refers to the same packet; all other threads are forced to wait. 
Moreover, \toolname\ does not impose any limit on the size of the \emph{rq} queue, in contrast to Kuai where only one packet can exist in it. 
In addition, Kuai does not support notifications from the data plane to the controller for completed operations as it does not support reply messages and as a result any bug related to the fact that the controller is not synced to data-plane state changes is hidden.\footnote{There are further small extensions; for instance, in MOCS the controller can send multiple \emph{PacketOut} messages (as OpenFlow prescribes).}
Also, our specification language for states is more expressive than Kuai's, as we can use any property in LTL without ``next", whereas Kuai only uses invariants with a single outermost $\Box$.\\
The MOCS extensions, however, are conservative with respect to Kuai, that is we have the following theorem (without proof, which is straightforward):

\begin{thm}[\toolname\ Conservativity]
	\label{thm:inclusiveness}
	Let 
	$\mathcal{M}_{(\lambda,\textsc{cp})}  = (S, A, \hookrightarrow, s_0, AP, L)$
	and
	$\mathcal{M}^{\mbox{\tiny{\textit{K}}}}_{(\lambda,\textsc{cp})}   = (S_K, A_K, \hookrightarrow_K, s_0, AP, L)$
	the original SDN models of \toolname\ and Kuai, respectively, using the same topology and controller.
	Furthermore, let $\mathit{Traces}( \mathcal{M}_{(\lambda,\textsc{cp})})$ and
	$\mathit{Traces}(\mathcal{M}^{\mbox{\tiny{\textit{K}}}}_{(\lambda,\textsc{cp})} )$
	denote the set of all initial traces in these models, respectively. Then,
	$
	\mathit{Traces}(\mathcal{M}^{\mbox{\tiny{\textit{K}}}}_{(\lambda,\textsc{cp})} )  \subseteq   \mathit{Traces}( \mathcal{M}_{(\lambda,\textsc{cp})})
	$.
	\end{thm}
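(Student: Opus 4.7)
The plan is to prove the inclusion by a trace-simulation argument: exhibit a relation $R \subseteq S_K \times S$ between Kuai states and MOCS states, show $(s_0,s_0)\in R$ (both share the same initial state), and prove that any single Kuai transition can be mimicked in MOCS by a (possibly longer) finite sequence of transitions ending in an $R$-related state with the same labelling. Since traces are defined relative to the common $AP$ and $L$, state-equivalence under $R$ will be enough to lift the single-step simulation to an inclusion of initial traces. I would proceed by induction on the length of a Kuai trace, using the simulation as the inductive step.

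For the step, I would catalogue each Kuai transition rule and display the corresponding MOCS transition(s). The easy cases are \emph{send}, \emph{recv}, \emph{match}, \emph{add}, \emph{del}: their guards and effects are either identical or strictly weaker in MOCS (e.g.\ the guards of \emph{match} and \emph{add}/\emph{del} depend only on $pq$, $ft$ and $\mathit{head}(cq)$, which MOCS treats the same way). The interesting cases are those where Kuai bundles behaviour that MOCS decomposes: the mutex-guarded \emph{nomatch-ctrl-fwd} cycle, the atomic ``flush up to the last barrier'', and the single-slot $rq$. For each bundle I would show that MOCS can fire the constituent elementary actions \emph{consecutively}, without exercising the extra interleavings it permits, to reach a state $R$-related to the Kuai successor. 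That interleavings exist in MOCS is irrelevant for inclusion — one simply does not pick them. Because MOCS's $rq$ is a set rather than a single slot, the Kuai constraint ``$rq$ is empty before the next \emph{nomatch}'' only restricts Kuai; any Kuai trace satisfying it is a fortiori allowed by MOCS, which imposes no such restriction.

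The relation $R$ itself is the main obstacle, because the state spaces are not literally equal: Kuai keeps $rq$ as an option type while MOCS keeps it as a set, barrier handling in Kuai elides the \emph{brepl}/\emph{bsync} handshake, and Kuai lacks \emph{BarrierRes} messages. I would define $R$ componentwise by: $\pi$ and $\delta.ft$, $\delta.pq$, $\delta.fq$ coincide; Kuai's possibly-$\bot$ $rq$ is related to any MOCS $rq$ containing at most that packet; Kuai's $cq$ (a set of rules plus an implicit trailing barrier) is related to the MOCS $cq$ segment before its next barrier; and MOCS's $brq$ is unconstrained since Kuai has no counterpart. Under this choice, the effect clauses of the MOCS rules in \S\ref{transitions} reproduce the Kuai effect clauses on the projected components, which is essentially a routine diagram chase for each action.

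Finally, to conclude $\mathit{Traces}(\mathcal{M}^{\mbox{\tiny{\textit{K}}}}_{(\lambda,\textsc{cp})}) \subseteq \mathit{Traces}(\mathcal{M}_{(\lambda,\textsc{cp})})$, I would note that $R$ preserves the labelling $L$ on the related components (the atomic propositions of \S\ref{spec} reference only $pq$, $rcvq$ and the controller state $\gamma.cs$, all of which coincide under $R$), so the MOCS trace produced by concatenating the simulating sequences has exactly the same $L$-image as the original Kuai trace. The main subtlety I expect is bookkeeping for the barrier-flush case: verifying that the sequence of MOCS \emph{brepl} and queue-dequeue actions can be scheduled to reach an $R$-related state without a Kuai-invisible intermediate state accidentally changing a labelled component — which in fact it cannot, since \emph{brepl} and the $cq$-dequeues only touch $cq$ and $brq$, neither of which $R$ (or $L$) constrains beyond the projection.
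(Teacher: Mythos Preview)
The paper presents this theorem explicitly ``without proof, which is straightforward,'' so there is nothing concrete to compare your attempt against. Your simulation-based argument is the natural way to flesh out such a conservativity claim, and your case split is the right one: the actions Kuai and \toolname\ share map one-to-one once Kuai's \emph{wait} semaphore is projected out (you do not name \emph{wait} explicitly, but your componentwise $R$ implicitly drops it), while Kuai's atomic \emph{barrier}-flush is replayed in \toolname\ as a consecutive block of \emph{add}/\emph{del} (and \emph{brepl}) steps without exercising the extra interleavings \toolname\ permits.

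One point deserves tightening. You claim the concatenated \toolname\ simulation ``has exactly the same $L$-image'' as the Kuai trace, yet you also---correctly---allow a single Kuai step to be mimicked by a \emph{longer} \toolname\ sequence. Even though the intermediate states are label-invariant (as you argue, $\mathit{add}$, $\mathit{del}$, $\mathit{brepl}$ touch only $\mathit{ft}$, $\mathit{cq}$, $\mathit{brq}$, none of which appear in $AP$), the resulting \toolname\ trace \emph{stutters}: $L(s)L(s)\cdots L(s)L(s')$ is not the same label sequence as $L(s)L(s')$. What your argument actually yields is that every Kuai trace is stutter-equivalent to some \toolname\ trace, which is marginally weaker than the literal set inclusion of label sequences the theorem states. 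Given that the surrounding development (Theorems~\ref{thm1} and~\ref{thm:POR}) reasons modulo stutter equivalence throughout, this is almost certainly the intended reading, but you should either restate your conclusion as stutter-trace inclusion or add a sentence explaining why strict inclusion is not what is being claimed.
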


\noindent For each of the extensions mentioned above, we briefly describe an example (controller program and safety property) that expresses a bug that is impossible to occur in Kuai.

\noindent\textbf{Control message reordering bug. }Let us consider a stateless firewall in Figure~\ref{fig:prop1} (controller is not shown), which is supposed to block incoming {\sc ssh} packets from reaching the server (see \S \ref{appendix:CP}-\ref{rq1-alg11}). Formally, the safety property to be checked here is $\Box (\forall \mathit{pkt}\, {\in}\,\mathit{S.rcvq} \,.\, \neg \mathit{pkt}.\mathrm{\textsc {ssh}})$. Initially, flow tables are empty. Switch $A$ sends a \emph{PacketIn} message to the controller when it receives the first packet from the client (as a result of a \emph{nomatch} transition). The controller, in response to this request (and as a result of a \emph{ctrl} transition), sends the following \emph{FlowMod} messages to switch $A$; rule \texttt{r1} has the highest priority and drops all {\sc ssh} packets, rule \texttt{r2} sends all packets from port 1 to port 2, and rule \texttt{r3} sends all packets from port 2 to port 1. If the packet that triggered the transition above is an {\sc ssh} one, the controller drops it, otherwise, it instructs (through a \emph{PacketOut} message) $A$ to forward the packet to $S$. A bug-free controller should ensure that \texttt{r1} is installed before any other rule, therefore it must send a barrier request after the \emph{FlowMod} message that contains \texttt{r1}. If, by mistake, the \emph{FlowMod} message for \texttt{r2} is sent before the barrier request, $A$ may install \texttt{r2} before \texttt{r1}, which will result in violating the given property. 
\toolname\ is able to capture this buggy behaviour as its semantics allows control messages prior to the barrier to be processed in a interleaved manner.

%
\ignore{
\begin{figure}[ht] 
	\centering
	\captionsetup{width=1\textwidth}
		\includegraphics[width=.45\linewidth]{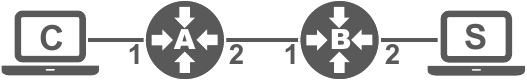} 
		\caption{A network with two switches}	
	\label{fig:prop2}
	\vspace{-.3cm}
\end{figure}
}

\vspace{-.3cm}
\begin{figure}[ht] 
	\centering
	\captionsetup{width=1\textwidth}
	\begin{subfigure}[b]{.5\textwidth}
		\captionsetup{width=.8\textwidth}
		\centering
		\includegraphics[width=1\linewidth]{figures/c_s_2.pdf} 
		\caption{}
		\label{fig:prop1}
	\end{subfigure}%
	\quad \quad
	\begin{subfigure}[b]{.4\textwidth}
		\captionsetup{width=.8\textwidth}
		\centering
		\includegraphics[width=.86\linewidth]{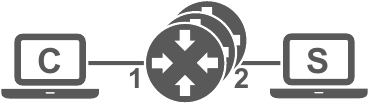}  
		\caption{}
		\label{fig:prop2}
	\end{subfigure}%
	\ignore{
		\begin{subfigure}[b]{.33\textwidth}
			\captionsetup{width=.8\textwidth}
			\centering
			\includegraphics[width=.6\linewidth]{figures/fw.pdf}  
			\caption{}
			\label{fig:prop3}
		\end{subfigure}%
	}
	\caption{Two networks with  (a) two switches, and  (b) $n$ stateful firewall replicas}	
	\label{fig:example:top}
	\vspace{-.3cm}
\end{figure}

\ignore{
\resizebox{1\textwidth}{!}{%
	\begin{minipage}{1.8\textwidth}
\begin{algorithm}[H]
	\DontPrintSemicolon
	\SetKwInOut{Input}{Input}
	\SetKwInOut{Output}{Output}
	\SetKwFunction{FMain}{PacketIn\_handler}
	\SetKwProg{Fn}{void}{:}{}
	\SetAlgorithmName{Controller Program}{}

	\Fn{\FMain{\upshape{\texttt{p}}}}{

		\uIf{\upshape{\texttt{p.prot == SSH}}}{
			\texttt{drop(p)}
		}		
		\Else{
			$\texttt{\textcolor{ForestGreen}{rule\_t} r1 = \big \{\{prio = 10\}, \{prot = SSH\}, \{in\_port = *\}, \{forward\_port = drop\}\big \}  }$ 
			
			$\mathrlap{\texttt{\textcolor{ForestGreen}{rule\_t} r2 = \big \{\{prio =  ~1\}, \{prot = *~~\},}}\hphantom{\texttt{rule\_t r1 = \big \{\{prio = 10\}, \{prot = SSH\}, }}  \texttt{\{in\_port = 1\}, \{forward\_port = 2  ~ \}\big \}  }$
			
			$\mathrlap{\texttt{\textcolor{ForestGreen}{rule\_t} r3 = \big \{\{prio =  ~1\}, \{prot = *~~\},}}\hphantom{\texttt{rule\_t r1 = \big \{\{prio = 10\}, \{prot = SSH\}, }}  \texttt{\{in\_port = 2\}, \{forward\_port = 1   ~~\}\big \}  }$\;
			
			\ForAll{\upshape{\texttt{sw}} $\in$ \upshape{\texttt{Switches}}}{
				\texttt{send\_message\_list}(\texttt{[add(r2),add(r1),b,add(r3)],sw}) 
			}	
		}		
	}
	\caption{With control messages reordering bug}
	\label{rq1-alg1}
\end{algorithm}
  \end{minipage}%
}
}

\noindent\textbf{Wrong nesting level bug. }Consider a correct controller program that enforces that server $S$ (Figure \ref{fig:prop1}) is not accessible through {\sc ssh}. Formally, the safety property to be checked here is $\Box (\forall \mathit{pkt}\, {\in}\, \mathit{S.rcvq} \,.\, \neg \mathit{pkt}.\mathrm{\textsc{ssh}})$. For each incoming \emph{PacketIn} message from switch $A$, it checks if the enclosed packet is an {\sc ssh} one and destined to $S$. If not, it sends a \emph{PacketOut} message instructing $A$ to forward the packet to $S$. It also sends a \emph{FlowMod} message to $A$ with a rule that allows packets of the same protocol (not {\sc ssh}) to reach $S$. In the opposite case ({\sc ssh}), it checks (a Boolean flag) whether it had previously sent drop rules for {\sc ssh} packets to the switches. If not, it sets flag to true, sends a \emph{FlowMod} message with a rule that drops {\sc ssh} packets to $A$ and drops the packet. Note that this inner block does not have an \texttt{else} statement.

A fairly common error is to write a statement at the wrong nesting level (\S \ref{appendix:CP}-\ref{rq1-alg2}). Such a mistake can be built into the above program by nesting the outer \texttt{else} branch in the inner \texttt{if} block, such that it is executed any time an {\sc ssh}-packet is encountered but the {\sc ssh} drop-rule has already been installed (i.e. flag \texttt{f} is true). Now, the {\sc ssh} drop rule, once installed in switch $A$, disables immediately a potential $\mathit{nomatch(A,p)}$ with $p.\mathrm{\textsc{ssh}}=true$ that would have sent packet $p$ to the controller, but if it has not yet been installed, a second incoming {\sc ssh} packet would lead to the execution of the \texttt{else} statement of the inner branch. This would violate the property defined above, as $p$ will be forwarded to $S$\footnote{Here, we assume that the controller looks up a static forwarding table before sending \emph{PacketOut} messages to switches.}.

\toolname\ can uncover this bug because of the correct modelling of the controller request queue  and the asynchrony between the concurrent executions of control messages sent before a barrier.
Otherwise, the second packet that triggers the execution of the wrong branch would not have appeared in the buffer before the first one had been dealt with by the controller. Furthermore, if all rules in messages up to a barrier were installed synchronously, the second packet would be dealt with correctly, so no bug could occur.


 
\ignore{
\resizebox{.9\textwidth}{!}{%
	\begin{minipage}{1.5\textwidth}
\MyBox{starta}{enda}{red}
\MyBox[-3ex]{startb}{endb}{gray}
\begin{algorithm}[H]
	
	\DontPrintSemicolon
	\SetKwInOut{Input}{Input}
	\SetKwInOut{Output}{Output}
	\SetKwFunction{FMain}{PacketIn\_handler}
	\SetKwProg{Fn}{void}{:}{}
	\SetAlgorithmName{Controller Program}{}
	
	\texttt{const \textcolor{ForestGreen}{int} SW} \textcolor{gray}{\tcc*[r]{No. of switches in network}} 
	\texttt{const \textcolor{ForestGreen}{int} C} \textcolor{gray}{\tcc*[r]{No. of clients in network}} 
	\texttt{\textcolor{ForestGreen}{port\_t} MAC\_table[SW][C]} \textcolor{gray}{\tcc*[r]{Array [SW] of array [C] of switch-ports}} 
	\texttt{\textcolor{ForestGreen}{bool} f} $ \gets$ \texttt{\textcolor{violet}{\textbf{false}}} \textcolor{gray}{\tcc*[r]{a global flag to prevent resending same rules}} 
	
	\Fn{\FMain{\upshape{\texttt{p}}}}{
		
		\texttt{\textcolor{ForestGreen}{rule\_t} rule[SW]} \textcolor{gray}{\tcc*[r]{Initialise an empty array for storage of rules}} 
		\texttt{\textcolor{ForestGreen}{rule\_t} rule\_d } \textcolor{gray}{\tcc*[r]{Initialise a rule to be used to drop packet $p$}}

		\uIf{\upshape{\texttt{p.prot ==  SSH \textcolor{violet}{and} p.destIP == client\_3}}}{
			\tikzmark{startb}
			\uIf{\upshape{\texttt{!f}}}{
				$\mathrlap{\texttt{f}}\hphantom{\texttt{rule\_d.fwd\_port}}  \gets   \texttt{\textcolor{violet}{\textbf{true}}}  $ \textcolor{gray}{\tcc*[r]{$p$ is implicitly dropped by not sending a $fwd$}} 				
				$\mathrlap{\texttt{rule\_d.prio}}\hphantom{\texttt{rule\_d.fwd\_port}} \gets 	\texttt{1} $ \;
				$\mathrlap{\texttt{rule\_d.prot}}\hphantom{\texttt{rule\_d.fwd\_port}} \gets 	\texttt{p.prot} $ \;
				$\mathrlap{\texttt{rule\_d.destIP}}\hphantom{\texttt{rule\_d.fwd\_port}} \gets  \texttt{p.destIP}$ \;
				
				$\texttt{rule\_d.fwd\_port}	\gets 	\texttt{drop} $      \; 							
				
				\ForAll{\upshape{\texttt{sw}} $\in$ \upshape{\texttt{Switches}}}{
					\texttt{send\_message\_list}(\texttt{[add(rule\_d),barrier],sw})   \hfill \tikzmark{endb}	
				}													
			}
			\tikzmark{starta}	
			\Else{
				
				$\texttt{fwd(p,p.loc.sw, MAC\_table[p.loc.sw][p.destIP])}$ \;
				\ForAll{\upshape{\texttt{sw}} $\in$ \upshape{\texttt{Switches}}}{
					$\mathrlap{\texttt{rule[sw].prio}}\hphantom{\texttt{rule[sw].fwd\_port}} \gets  \texttt{2} $ \;
					$\mathrlap{\texttt{rule[sw].prot}}\hphantom{\texttt{rule[sw].fwd\_port}} \gets  \texttt{p.prot} $ \;
					$\mathrlap{\texttt{rule[sw].destIP}}\hphantom{\texttt{rule[sw].fwd\_port}} \gets  \texttt{p.destIP} $ \;
					
					$\texttt{rule[sw].fwd\_port} \gets \texttt{MAC\_table[sw][p.destIP]} $	\;	
					
					\texttt{send\_message}(\texttt{add(rule[sw]),sw}) \hfill \tikzmark{enda}
				}
			}
		}
		\Else{
			... \;
		}		
	}
	
	\caption{Wrong nesting level}
	\label{rq1-alg2}
\end{algorithm}
  \end{minipage}%
}

}

\noindent\textbf{Inconsistent update bug. }OpenFlow's barrier and barrier reply mechanisms allow for updating multiple network switches in a way that enables \emph{consistent packet processing}, i.e., a packet cannot see a partially updated network where only a subset of switches have changed their forwarding policy in response to this packet (or any other event), while others have not done so. \toolname\ is expressive enough to capture this behaviour and related bugs. In the topology shown in Figure \ref{fig:prop1}, let us assume that, by default, switch $B$ drops all packets destined to $S$. Any attempt to reach $S$ through $A$ are examined separately by the controller and, when granted access, a relevant rule is installed at both switches (e.g. allowing all packets from $C$ destined to $S$ for given source and destination ports). Updates must be consistent, therefore the packet cannot be forwarded by $A$ and dropped by $B$. Both switches must have the new rules in place, before the packet is forwarded. To do so, the controller, (\S \ref{appendix:CP}-\ref{rq1-alg3}), upon receiving a \emph{PacketIn} message from the client's switch, sends the relevant rule to switch $B$ (\emph{FlowMod}) along with respective barrier (\emph{BarrierReq}) and temporarily stores the packet that triggered this update. Only after receiving \emph{BarrierRes} message from $B$, the controller will forward the previously stored packet back to $A$ along with the relevant rule. This update is consistent and the packet is guaranteed to reach $S$. A (rather common) bug would be one where the controller installs the rules to both switches and at the same time forwards the packet to $A$. In this case, the packet may end up being dropped by $B$, if it arrives and gets processed before the relevant rule is installed, and therefore the invariant
$\Box \big( [\mathit{drop(pkt,sw)}]  \,.\,  \neg (\mathit{pkt.dest = S)}  \big)$, where $[\mathit{drop(pkt,sw)}]$ is a quantifier that binds dropped packets (see definition in \S \ref{appendix:CP}-\ref{rq1-alg3}), would be violated. For this example, it is crucial that \toolname\ supports barrier response messages.


\ignore{
\subsubsection{Stateful Inspection Firewall}

 This controller program \cite{GENI,Kuai}  performs packets filtering and acknowledges the context of TCP connections.
For example, when a connection is initiated by a client (web browser) sending an HTTP request message to the web server (TCP port 80 is the standard port the server is listening on), the latter will initiate a return connection back to the client on an logical ephemeral TCP port (say 1030) that the client had created upon initiating the request, and which is allocated for short term use. The firewall application therefore should automatically open the ephemeral port for this return connection, and only then the server can access the client.
Such a connection need to be stored by the controller and is represented by a mapping between TCP/IP socket pairs. A socket is an endpoint consisting of the IP address and the TCP (or UDP) port number. Hence the controller needs to store records { \tt\{srcIP= $c_1$,
		\tt srcTCPpt= $p_1$,
		\tt destIP= $c_2$,
		\tt destTCPpt= $p_2$\tt\}} 
that represent address and port of client (or \texttt{src}) and server (\texttt{dest}), respectively.
The controller holds this information in a so-called \emph{whitelist} (stored in variable \texttt{wl}). Packets are supposed to have corresponding fields \texttt{srcIP} , \texttt{srcTCPpt}, \texttt{destIP}, \texttt{destTCP} matching those of the whitelist entries.
If the controller receives a packet whose destination matches a permit entry in the stored whitelist of connections, then the controller all at once
\begin{inparaenum}[(i)]
	\item creates and sends to the switches two symmetrical access rules, $r_1$ and $r_2$ (with the source and destination being used interchangeably) for allowing bi-directional access (as shown in the trace below), followed by a barrier $b(xid)$,
	\item updates its connections table (state) accordingly, and
	\item forwards the packet to its destination socket.
\end{inparaenum}
If the packet does not match a whitelisted pattern, then a rule $r_{drop}$ that discards the packet is created and installed in the switch(es).
 \\
%
%
%
%
\resizebox{1\textwidth}{!}{%
	\begin{minipage}{1.8\textwidth}
		\begin{algorithm}[H]
			\DontPrintSemicolon
			\SetKwInOut{Input}{Input}
			\SetKwInOut{Output}{Output}
			\SetKwFunction{FMain}{PacketIn\_handler}
			\SetKwProg{Fn}{void}{:}{}
			\SetAlgorithmName{Controller Program}{}
			
				\texttt{const whitelist} == \{(c1, p1, c2, p2)\} \textcolor{gray}{\tcc*[r]{whitelisted connections}} 
					\texttt{\textcolor{ForestGreen}{port\_t} MAC\_table[SW][C]} \textcolor{gray}{\tcc*[r]{Array [SW] of array [C] of switch-ports}} 
			
			\Fn{\FMain{\upshape{\texttt{p}}}}{
				$\texttt{conn $\gets$ (p.srcIP, p.srcTCPpt, p.destIP , p.destTCPpt)}$\;
				\uIf{\upshape{\texttt{conn $\in$ whitelist}}}{
					\texttt{fwd(p, MAC\_table[p.sw][p.destIP], p.sw)}\;
					\texttt{ r1 $\gets$ \big \{\{srcIP $\gets$ p.srcIP~\},  \{srcTCPpt $\gets$ p.srcTCPpt~\}, \{destIP $\gets$ p.destIP\}, \{destTCPpt $\gets$ p.destTCPpt\}, \;
						~~~~~~~~\{fwd\_port $\gets$ MAC\_table[p.sw][p.destIP] \}, \{prio $\gets$ 2\} \big \}  }\;
					
					\texttt{ r2 $\gets$ \big \{\{srcIP $\gets$ p.destIP\}, \{srcTCPpt $\gets$ p.destTCPpt\}, \{destIP $\gets$ p.srcIP~\}, \{destTCPpt $\gets$ p.srcTCPpt~\},\;
						~~~~~~~~\{fwd\_port $\gets$ MAC\_table[p.sw][p.srcIP] \}, \{prio $\gets$ 2\} \big \}  }\;
					\ForAll{\upshape{\texttt{sw}} $\in$ \upshape{\texttt{Switches}}}{
						\texttt{send\_message(add(r1),sw)}\;
						\texttt{send\_message(add(r2),sw)}\;
						\texttt{send\_message(barrier(xid $\gets$ 1),sw)}
					}

				}		
				\Else{
					\texttt{drop(p)}\;
					\texttt{ r\_drop $\gets$ \big \{\{srcIP $\gets$ p.srcIP~\},  \{srcTCPpt $\gets$ p.srcTCPpt~\}, \{destIP $\gets$ p.destIP\}, \{destTCPpt $\gets$ p.destTCPpt\},\;
						~~~~~~~~~~~~\{fwd\_port $\gets$ drop \}, \{prio $\gets$ 1\} \big \}  }\;
					\ForAll{\upshape{\texttt{sw}} $\in$ \upshape{\texttt{Switches}}}{
						\texttt{send\_message}(\texttt{add(r\_drop),sw})
					}	
				}		
			}
			\caption{PacketIn handler for the Stateful Firewall}
			\label{rq1-alg3}
		\end{algorithm}
	\end{minipage}%
}
\\
\resizebox{1\textwidth}{!}{%
	\begin{minipage}{1.8\textwidth}
		\begin{algorithm}[H]
			\DontPrintSemicolon
			\SetKwInOut{Input}{Input}
			\SetKwInOut{Output}{Output}
			\SetKwFunction{FMain}{Barrier\_reply\_handler}
			\SetKwProg{Fn}{void}{:}{}
			\SetAlgorithmName{Controller Program}{}
			
			\texttt{bool flows[C][ip\_port\_t][C][ip\_port\_t][SW]} \textcolor{gray}{\tcc*[r]{flows array is ghost variable for checking property}}
			
			\Fn{\FMain{\upshape{\texttt{xid, sw}}}}{
				\uIf{\upshape{\texttt{xid == 1}}}{
        			\texttt{flows[c1][p1][c2][p2][sw] $\gets$ \textcolor{violet}{\textbf{true}}}\;
					\texttt{flows[c2][p2][c1][p1][sw] $\gets$ \textcolor{violet}{\textbf{true}}}

				}		
			}
			\caption{Barrier reply handler for the Stateful Firewall}
			\label{rq1-alg4}
		\end{algorithm}
	\end{minipage}%
}
\\

\noindent The safety property is as in \cite{Kuai} and asserts that \textit{no packet is denied access to its destination socket if it is whitelisted or another packet with swapped sockets, which is whitelisted, has already been granted access.}\\
Now consider the topology as in Fig.~\ref{fig:prop3} and that controller is started with whitelist \texttt{wl}  $= \{(c_1, p_1, c_2, p_2)\}$, and fix packets and rules as follows:

	\begin{minipage}{.6\textwidth}
{\scriptsize
	\begin{alignat*}{2}
	&\mathtt {
		pk_1 = 
		\arraycolsep=1.4pt\def\arraystretch{1}
		\tiny \left\{\begin{array}{rl}
		\tt srcIP =&\tt c_1 \\
		\tt srcTCPpt=&\tt p_1\\
		\tt destIP =&\tt c_2 \\
		\tt destTCPpt=&\tt p_2
		\end{array} \right\}~
	}
	\mathtt {
		pk_2 = 
		\tiny \left\{ \begin{array}{rl}
		\tt	srcIP =& \tt c_2 \\
		\tt srcTCPpt=&\tt p_2\\
		\tt destIP =&\tt c_1 \\
		\tt destTCPpt=&\tt p_1
		\end{array} \right\}~
	}
 \mathtt {
		r_1 = 
		\tiny \left\{ \begin{array}{rl}
		\tt srcIP =&\tt c_1 \\
		\tt srcTCPpt=&\tt p_1\\
		\tt destIP =&\tt c_2 \\
		\tt destTCPpt=&\tt p_2\\
		\tt forward\_port=&\tt 2\\
		\tt prio=&\tt 2
		\end{array} \right\}~
	}
	\mathtt {
		r_2 = 
		\tiny \left\{ \begin{array}{rl}
		\tt srcIP =&\tt c_2 \\
		\tt srcTCPpt=&\tt p_2\\
		\tt destIP =&\tt c_1 \\
		\tt destTCPpt=&\tt p_1\\
		\tt forward\_pt=&\tt 1\\
		\tt prio=&\tt 2
		\end{array} \right\}
	}\\
&\mathtt {
	r_{drop} = 
	\tiny \left\{ \begin{array}{rl}
	\tt srcIP \neq&\tt c_1 \\
	\tt srcTCPpt\neq&\tt p_1\\
	\tt destIP \neq&\tt c_2 \\
	\tt destTCPpt\neq&\tt p_2\\
	\tt forward\_port=&\tt drop\\
	\tt prio=&\tt 1
	\end{array} \right\}
}
	\end{alignat*}
}
	\end{minipage}%
%
%
 
Then  trace in Fig.~\ref{fig:trace} witnesses a violation of the above correctness property:

\begin{figure}
\begin{align*}
\arraycolsep=1.4pt\def\arraystretch{1}
&
\mbox{\Large\( 	s 	\)}^0
\xhookrightarrow []{  \tt nomatch(sw, pk_2)  }    
\mbox{\Large\( 	s 	\)}^1  
\xhookrightarrow []{  \tt ctrl(pk_2)  }    
\mbox{\Large\( 	s 	\)}^2
\xhookrightarrow []{  \tt add(sw, r_{drop})} 
\mbox{\Large\( 	s 	\)}^3
\xhookrightarrow []{  \tt nomatch(sw, pk_1)  }    
\mbox{\Large\( 	s 	\)}^4
\xhookrightarrow []{  \tt ctrl(pk_1) } 
\\
&
\mbox{\Large\( 	s 	\)}^5
\xhookrightarrow []{  \tt match(sw, pk_2, r_{drop})  }    
\LARGE \textcolor{red}{\frownie}
\tag{C-E.g. 2}
\end{align*}
\caption{Violating trace for stateful firewall example}
\label{fig:trace}
\end{figure}
\begin{figure}
\begin{align*}
\arraycolsep=1.4pt\def\arraystretch{1}
&
\mbox{\large\( 	s 	\)}^0_{
	\tiny \left[ \begin{array}{rl}
	\tt pIb =&\tt \{\} \\
	\tt sw.cb =&\tt [~]\\
	\tt sw.pOb =&\tt \{\}\\
	\tt flows=\top = &\tt \{\}
	\end{array} \right]
}    
\xhookrightarrow []{  \tt nomatch(sw, pk_1)  }    
\mbox{\large\( 	s 	\)}^1_{
	\tiny \left[ \begin{array}{rl}
	\tt pIb =&\tt \{pk_1\} \\
	\tt sw.cb =&\tt [~]\\
	\tt sw.pOb =&\tt \{\}\\
	\tt flows=\top = &\tt \{\}
	\end{array} \right]
}    
\xhookrightarrow []{  \tt ctrl(pk_1)  }    
\mbox{\large\( 	s 	\)}^2_{ 
	\tiny \left[ \begin{array}{rl}
	\tt pIb =&\tt \{\} \\
	\tt sw.cb =&\tt [\{r_1, r_2\}, b]\\
	\tt sw.pOb =&\tt \{pk_1\}\\
	\tt flows=\top = &
	\left\{ \begin{array}{l}
			\tt (r_1,sw), \\
			\tt (r_2,sw)
	\end{array} \right\}
	%
	\end{array} \right]
}
\xhookrightarrow []{  \tt nomatch(sw, pk_2)} 
\\\\
&
\mbox{\large\( 	s 	\)}^3_{ 
	\tiny \left[ \begin{array}{rl}
	\tt pIb =&\tt \{pk_2\} \\
	\tt sw.cb =&\tt [\{r_1, r_2\}, b]\\
	\tt sw.pOb =&\tt \{pk_1\}\\
	\tt flows=\top = &
		\left\{ \begin{array}{l}
			\tt (r_1,sw), \\
			\tt (r_2,sw)
		\end{array} \right\}
	\end{array} \right]
}
\xhookrightarrow []{  \tt ctrl(pk_2) } 
\mbox{\large\( 	s 	\)}^4_{ 
	\tiny \left[ \begin{array}{rl}
	\tt pIb =&\tt \{\} \\
	\tt sw.cb =&\tt [\{r_1, r_2\}, b,\{r_{drop}\}]\\
	\tt sw.pOb =&\tt \{pk_1,pk_2\}\\
	\tt flows=\top = &
		\left\{ \begin{array}{l}
			\tt (r_1,sw), \\
			\tt (r_2,sw)
		\end{array} \right\}
	\end{array} \right]
}
\xhookrightarrow []{ \tt fwd(sw, pk_2, drop) } 
\LARGE \textcolor{red}{\frownie}
\tag{C-E.g. 1}
\end{align*}
\caption{What's this?}
\end{figure}
The behaviour exhibited in Fig.~\ref{fig:trace}  cannot be captured in \cite{Kuai} because of its $barrier$-action  semantics: 
$barrier$ will always, in an utopian way,%
\footnote{In real-world SDNs, where the $\mbox{\scriptsize\( 	\tt barrier(sw)	\)} $-action semantics is consistent with the OpenFlow specification, a $\mbox{\scriptsize\( 	\tt barrier(sw)	\)} $ ensures the processing order of the controller messages only, and as such it should relaxingly be interleaved with all the actions other than $\mbox{\scriptsize\( 	\tt add(sw, rule_{(\bullet)})	\)} $ and $\mbox{\scriptsize\( 	\tt del(sw, rule_{(\bullet)})	\)} $.}
be executed in state $s^2$ as a matter of priority, which will disable the $\mbox{\small\( 	\tt nomatch(sw, pk_2)	\)} $ by pushing the $\mbox{\small\( 	\tt add(sw, r_2)	\)} $.
Hence, the states $s^3, s^4$ and $\textcolor{red}{\frownie}$ are unnaturally abstracted away hiding this behaviour.

Since in our semantics the control and forwarding states are not idealistically synchronised, we have to fix the controller  program such that the controller state be updated with any of the active connections 
only after the respective rule has been installed in the switch.
For this, we used the barrier response message \emph{brepl} which is sent from the switch once both rules corresponding to a bi-directional access are installed.
Also, we improved the property slightly which is formulated as: \textit{if packet $p$ is dropped by a rule installed in a switch $sw$ implies that the connection (flow) asked by $p$ is not in the  whitelist}.
The property excludes the case of a packet being dropped through a $fwd$-message. The reason for this can be seen in \ref{ceg3} which is a admissible trace satisfying original property under our semantics. 
Kuai is not able to observe this trace due to the hardwired merge of $nomatch-ctrl-fwd$ which has the effect that packets never actually stay in $pOb$.


\begin{align*}\label{ceg3}
\arraycolsep=1.4pt\def\arraystretch{1}
&
\mbox{\Large\( 	s 	\)}^0
\xhookrightarrow []{  \tt nomatch(sw, pk_2)  }    
\mbox{\Large\( 	s 	\)}^1  
\xhookrightarrow []{  \tt ctrl(pk_2)  }    
\mbox{\Large\( 	s 	\)}^2
\xhookrightarrow []{  \tt nomatch(sw, pk_1)  }    
\mbox{\Large\( 	s 	\)}^3
\xhookrightarrow []{  \tt ctrl(pk_1) } 
\mbox{\Large\( 	s 	\)}^4
\xhookrightarrow []{  \tt add(sw, r_1) } 
\\
&
\mbox{\Large\( 	s 	\)}^5
\xhookrightarrow []{  \tt add(sw, r_2) } 
\mbox{\Large\( 	s 	\)}^6
\xhookrightarrow []{  \tt reply(sw, xid) } 
\mbox{\Large\( 	s 	\)}^7
\xhookrightarrow []{  \tt fwd(sw, pk_2, r_{drop})  }    
\LARGE \textcolor{red}{\frownie}
\tag{C-E.g. 3}
\end{align*}
%
}

\section{Conclusion}
\label{conclusion}

We have shown that an OpenFlow compliant SDN model, with the right optimisations, can be model checked to discover subtle real-world bugs. We proved that \toolname\ can capture real-world bugs in a more complicated semantics without sacrificing performance.

But this is not the end of the line. One could automatically compute equivalence classes of packets that cover all behaviours (where we still computed manually). To what extent the size of the topology can be restricted to find bugs in a given controller is another interesting research question, as is the analysis of the number and length of interleavings necessary to detect certain bugs. In our examples, all bugs were found in less than a second. 
	%
	%
	%
	\newpage
	\bibliographystyle{splncs04}
	\bibliography{ref}

\appendix
\longversion{ 
\section{Safeness}\label{appendix:proofs}

\ignore{
\begin{figure}[h!] 
	\centering
	\hspace*{-3.15cm}%
	\begin{tikzpicture}[shorten >=1pt,node distance=2.5cm,on grid,auto] 
	\tikzset{every state/.append style={rectangle, rounded corners}}
	\node[state] (Match)  {$Match$};
	\node[state] (Send) [below=of Match] {$Send$}; 
	\node[state] (Recv) [right=of Send] {$Recv$};
	\node[state] (Add) [above right=4cm and 1cm of Match]  {$Add$}; 
	\node[state] (Del) [right=of Add] {$Del$}; 
	\node[state] (Brepl) [above right=1.5cm and 1.2cm of Add]  {$Brepl$}; 
	\node[state] (Ctrl) [below=of Recv] {$Ctrl$};
	\node[state] (Bsync) [below=1.5cm of Ctrl] {$Bsync$};
	\node[state] (Fwd) [right=of Recv] {$Fwd$};
	\node[state] (NoMatch) [above=of Fwd] {$NoMatch$};
	
	\path[->,>=latex,gray!30] 
	(Add)   	 edge[bend right=30, looseness=1.5]  	node {} 	(Match)
	edge[red!70, thick, bend right=15]  					    node {}	   (Match)
	edge[red!70, thick]  					node {}    (NoMatch)
	edge  														node  {} 	(Match) 
	
	(Del)   	  edge[bend left=30]                           		 node {} 	(Match)
	edge[red!70, thick]  						 node {} 	(Match)
	edge[]  													node {} 	(NoMatch)
	(Match) 	edge[loop left]  									   node {} ()
	edge[bend right=90,looseness=2]  		node {} 	(Recv)
	edge[]  													node {}    (NoMatch)
	(Send)  	edge  														node  {} 	(Match) 
	edge  														node  {}   (NoMatch) 
	(NoMatch)  edge[bend left=50, looseness=1.6]  	  node {} 	(Ctrl)
	(Ctrl)   		edge[bend right=20]  							node {}   (Fwd)
	edge[bend left=90, looseness=1.4]  	node {} 	(Add)
	edge[out=0,in=-30, looseness=1.4]  node {} 	(Del)
	(Fwd)   	  edge[out=220,in=-150, looseness=1.9]  node {} (Match)
	edge[]  													node {}   (NoMatch)
	edge[]  													node {} (Recv)	
	(Bsync) 	edge[out=155,in=-190, out looseness=2.5, in looseness=.5]  	node {} 	(Add)
	edge[out=--30,in=10, out looseness=2.5, in looseness=.3]  	node {} 	(Del)
	edge[out=--25,in=10, out looseness=4.5, in looseness=.3]  	node {} 	(Fwd)
	(Brepl) 	edge[out=175,in=-190, out looseness=1.2, in looseness=1.8]  	node {} 	(Bsync)
	;
	\path[<->,>=latex,gray!30] 
	(Add)   edge  														node  {} 	(Brepl) 
	(Del)   edge  														node  {} 	(Brepl) 
	;
	\end{tikzpicture}
	\captionsetup{width=0.9\textwidth}
	\caption{The graph represents the enabling/disabling relations and is derived directly from the guards and semantics of the transitions. Each node is a set of actions of same type. If two nodes are connected by a grey edge, then there may exist a concrete action in the predecessor node that enables a concrete action in the successor node. If the edge is red then the action in the predecessor node may disable a concrete action in the successor node.}
	\label{graph:depend}
\end{figure}
}

\begin{customlemma}{\ref{lemma:safe}}[Safeness]
	For an SDN network model $\mathcal{M}_{(\lambda,\textsc{cp})}  = (S, A, \hookrightarrow, s_0, AP, L)$ and context {\sc ctx}~$=(${\sc cp},~$\lambda,~\varphi)$ with $\varphi\in \text{LTL}_{ \setminus \{\bigcirc \}}$, 
	\[\alpha \in A  \text{ is safe   }\iff  \bigwedge\nolimits^3_{i=1} \mathit{Safe}_i(\alpha)  \]
	where $\mathit{Safe}_i$, given in Table \ref{tab:safeness}, are per-row. 
\end{customlemma}

\begin{proof}
To show safety we need to show two properties: \emph{independence} (action is independent of any other action) and \emph{invisibility} w.r.t.\ the context, in particular controller program, topology function and formula $\varphi$.

\paragraph{Independence:} 
Recall that two actions $\alpha$ and $\beta \neq \alpha$ are independent iff for any state $s$ such that $\alpha\in A(s)$ and $\beta\in A(s)$:
\begin{enumerate} 
\item $\alpha \in A\big(\beta(s)\big)$ and $\beta\in A\big(\alpha(s)\big)$  
\item $\alpha\big(\beta(s)\big)=\beta\big(\alpha(s)\big)$
\end{enumerate}
(1): It can be easily checked that no safe action disables any other action, nor is any safe action disabled by any other action, so the first condition of independence holds.
\\
(2): For any safe action $\alpha$ and any other action $\beta$ we can assume already that they meet Condition (1). Let us perform a case analysis on $\alpha$:

\ignore{
	\begin{table}[h!]
	\begin{center}
		\caption{Actions and their effect on state}
		\label{tab-access}
		\begin{minipage}{\linewidth}   
			\centering
			\begin{tabular}{l|l|l} 
				\toprule
				\textbf{Action} & \textbf{Dequeues} & \textbf{Enqueues/Updates}\\
				$\alpha$& \textbf{from}  & \textbf{into} \\
				\midrule
				$send(h,pk)$ & & $\lambda(h).\boldsymbol{pq}$\\
				$recv(h,pk)$&$h.\boldsymbol{pq}$& \\
				$match(sw,pk,r)$ \footnote{In our model the $(0,\infty)$-abstraction is built-in only for the packet queues of the switches. Hence, packets are actually never dequeued.}  & & $D.\boldsymbol{pq}$, with $D \subseteq Hosts \cup Switches$\footnote{$D$ is defined by the topology function $\lambda$ with input $(sw,\{r.forwarding\_ports\})$.}\\
				$nomatch(sw,pk)$ & & $\boldsymbol{rq}$\\
				$ctrl(sw,pk,cs)$  & $\boldsymbol{rq}$ & $sw.\boldsymbol{fq}, Sw.\boldsymbol{cq}, \boldsymbol{cs}$, with $Sw \subseteq Switches$\footnote{$Sw$ is defined according to the controller program at stake.} \\
				$fwd(sw,pk,ports)$&  $sw.\boldsymbol{fq}$& $D.\boldsymbol{pq}$, with $D \subseteq Hosts \cup Switches$\footnote{$D$ is defined by the topology function $\lambda$ with input $(sw,\{ports\})$.}\\
				$add(sw,r)$& $sw.\boldsymbol{cq}$ & $sw.\boldsymbol{ft}$\\
				$del(sw,r)$& $sw.\boldsymbol{cq}$, $sw.\boldsymbol{ft}$ & \\
				$brepl(sw,xid)$ & $sw.\boldsymbol{cq}$& $\boldsymbol{brq}$\\
				$bsync(sw,xid,cs)$ & $\boldsymbol{brq}$ & $sw.\boldsymbol{fq}, Sw.\boldsymbol{cq}, \boldsymbol{cs}$, with $Sw \subseteq Switches$\\
				\bottomrule
			\end{tabular}
		\end{minipage}
	\end{center}
\end{table}
}

 \begin{itemize}[label=$\blacktriangleright$]
 	\item $\alpha$ is either \emph{brepl}, \emph{recv} or \emph{fwd}:\\
	To show that any interleaving with any action $\beta \neq \alpha$ leads to the same state, we observe that the changes of packet queues by these actions 
	do not interfere with each other. 
 	In cases where a packet is removed from a queue by $\alpha $ (e.g.\ $\alpha = \mathit{recv(h,pkt)}$ removes from $\mathit{h.rcvq}$) but then inserted into the same queue by $\beta$ (e.g.\ $\beta = \mathit{fwd(sw,pkt,ports)}$ where $\mathit{h \in \lambda(sw,ports)}_1$), there is no conflict either, as both actions must have been enabled in the original state in the first place. So no conflicts arise for those $\alpha$.
 	\item $\alpha$ is $\mathit{ctrl(pkt,cs)}$:
 	\begin{itemize}
 		\item If $\beta$ is not a \emph{ctrl} or \emph{bsync} action, then the same argument as above holds. 
 		\item The interesting cases occur when $\beta$ is in $\{\mathit{ctrl(\cdot), bsync(\cdot)}\}$.
 		\ignore{
 		Either the controller is insensitive to the order of any two \emph{ctrl} actions $\alpha$ and $\beta$ and we are done, or it is sensitive to some packets, but then from \emph{Safe}$_2(\alpha)$ in Table~\ref{tab:safeness} 
		 we know that $\alpha$ with given $pkt$ and $\beta$ with given $\mathit{pkt'}$, respectively, have no effect on the given controller state. 
		}
From \emph{Safe}$_2(\alpha)$ we know that {\sc cp} is not order-sensitive, which implies that $\alpha$ and $\beta$ are independent.
Order-insensitivity is a relatively strong condition but it ensures correctness of the lemma and thus partial order reduction.\footnote{Generalisations by a more clever analysis of the controller program are a future research topic.} Thus any interleaving of $\alpha$ and $\beta$ leads to the same state.

 	\end{itemize}
 \item $\alpha$ is $\mathit{bsync(sw,xid,cs)}$:\\
The same line of argument applies as for $\mathit{ctrl(pkt,cs)}$, simply exchanging the roles of $\alpha$ and $\beta$.
 \end{itemize}

 \paragraph{Invisibility}: 
 We show this for all safe actions separately:
 \begin{itemize}

 \item $\alpha = \mathit{ctrl(pk,cs)}$. 
 The only variables $\alpha$ can change are the $\mathit{controller.rq}$, $\mathit{sw'.fq}$, $\mathit{sw'.cq}$ (for some switches $\mathit{sw'}$), and the control state $\mathit{cs}$. The first three can not appear in $\varphi$ due to the definition of the specification language. In case the control state changes, $\alpha$ is invisible to $\varphi$ because \emph{Safe}$_3(\alpha)$ in Table~\ref{tab:safeness}.
\item $\alpha = \mathit{bsync(sw,xid,cs)}$. This $\alpha$ 
only affects \emph{brq}, $\mathit{sw'.fq}$, $\mathit{sw'.cq}$ (for some switches $sw'$), and the control state $\mathit{cs}$. 
We know by definition of Specification Language (\S\ref{spec}) that it cannot refer to \emph{brq} or any $\mathit{sw'.fq}$, $\mathit{sw'.cq}$.
In case the control state changes, $\alpha$ is invisible to $\varphi$ because \emph{Safe}$_3(\alpha)$ in Table~\ref{tab:safeness}.

 \item $\alpha = \mathit{fwd(sw,pk,ports)}$. Assumption \emph{Safe}$_3(\alpha)$ in Table~\ref{tab:safeness} guarantees that the only variables $\alpha$ can change, i.e.\ $\mathit{D.pq}$ or $\mathit{D.rcvq}$ for any $D$ in $\lambda(\mathit{sw,p})_1 \mid p \in \mathit{ports}$ and $\mathit{sw.pq}$, 
actually remain unchanged. Thus it follows by definition that $\alpha$ is invisible to $\varphi$.

 \item $\mathit{\alpha = brepl(sw,xid)}$. Since, by definition of Specification Language (\S\ref{spec}), the atomic propositions refer neither to any $\mathit{cq}$ nor $\mathit{brq}$, it follows from the effect of $\alpha$ that only affects $\mathit{sw.cq}$ and $\mathit{brq}$ 
 that any $\mathit{brepl(\cdot)}$ is always invisible.
\item $\mathit{\alpha = recv(h,pk)}$. Assumption \emph{Safe}$_3(\alpha)$ in Table~\ref{tab:safeness} guarantees that $\varphi$ does not refer to $\mathit{h.rcvq}$, which is the only variable affected by $\alpha$, 
and therefore $\mathit{recv(h,pk)}$ is invisible to $\varphi$.

 \end{itemize}

\end{proof}
%


\ignore{
\noindent\textbf{Theorem \ref{thm1}}
\begin{proof}
	Detailed proofs are provided in \cite{Baier2008}\footnote{Theorem 8.13. (page 611)}.
\end{proof}

\noindent\textbf{Lemma \ref{lemmaimpl1}}
\begin{proof}
	We will prove that 
	$\forall \rho = s_0 \xhookrightarrow[]{\alpha_1}_2 s_1  \xhookrightarrow[]{\alpha_2}_2...$ in $\mathcal{M}_2$, 
	$\exists \dot{\rho} = s_0 \xhookrightarrowdbl[]{\dot{\alpha}_1} \dot{s}_1  \xhookrightarrowdbl[]{\dot{\alpha}_2}...$ in $\mathcal{M}'$
	s.t.
	$\rho \stutt  \dot{\rho}$.\\\\
	We construct $\dot{\rho}$ by scanning $\rho$ from $s_0$.
	For all transitions in $\rho$ labelled by the action $ctrl\_h{\text -}fwd(sw,pk,fpts,cs)$, we split the latter into $ctrl(pk,cs)$ and $h{\text -}fwd(sw,pk,fpts)$, which, according to the definition of $\hookrightarrow_2 $, gives the fragment	
	$\dot{s}_i \xhookrightarrowdbl{ctrl(pk,cs)}   \dot{s}_i' \xhookrightarrowdbl{h{\text -}fwd(sw,pk,fpts)}   \dot{s}_{i+1} $ in $\dot{\rho}$.\\
	By defining
	$$
	\infer{\dot{s}_i  \xhookrightarrowdbl[]{\alpha_{i+1}}~ \dot{s}_{i+1} } {s_i \xhookrightarrow[]{ \alpha_{i+1}}_2    s_{i+1} \land \alpha \text{ is not } ctrl\_h{\text -}fwd(sw,pk,fpts,cs) }
	$$
	the construction of $\dot{\rho}$ ensures that $\forall i \geq 0, s_i = \dot{s}_i$.\\
	Last, since $h{\text -}fwd(sw,pk,fpts)$ is invisible (by Lemma \ref{lemmafwdinv}), $L(\dot{s}_i') = L(\dot{s}_{i+1})$. Then $s_is_{i+1}$ and $\dot{s}_i \dot{s}_i' \dot{s}_{i+1}$ are stuttering equivalent, and thus
\end{proof}

\noindent\textbf{Lemma \ref{lemmaimpl2}}
\begin{proof}
	We now prove that any trace in $\mathcal{M}'$ is mimicked by a stuttering one in $\mathcal{M}_2$. 
	Let $\rho = s_0 \xhookrightarrowdbl[]{\alpha_1} s_1  \xhookrightarrowdbl[]{\alpha_2}...$ a run in $\mathcal{M}'$. 
	$\dot{\rho}$ is constructed as follows. We walk through $\rho$ looking for a $h{\text -}fwd$-enabler $ctrl(pk,cs)$ action (denoted $ctrl^\ast$). If no $ctrl^\ast(pk,cs)$ is found, then $\rho$ exists in $\mathcal{M}_2$ by definition of $\hookrightarrow_2$. 
	Otherwise, $\rho$ is of the form 
	$\rho = s_0 \xhookrightarrowdbl[]{\alpha_1} s_1  \xhookrightarrowdbl[]{\alpha_2}...\xhookrightarrowdbl[]{\alpha_{i-1}} s_{i-1}  \xhookrightarrowdbl[]{ctrl^\ast(pk,cs)} s_i \xhookrightarrowdbl[]{h{\text -}fwd} s_{i+1}...$,
	where $\forall_{1 \leq j \leq i-1}, \alpha_j$ is not a $ctrl^\ast$ action.
	Given the walk $\rho$, the rule of inference in definition of $\hookrightarrow_2$ returns
	$\hat{\rho} = s_0 \xhookrightarrow[]{\alpha_1}_2 s_1  \xhookrightarrow[]{\alpha_2}_2...\xhookrightarrow[]{\alpha_{i-1}}_2 s_{i-1}  \xhookrightarrow[]{ctrl\_h{\text -}fwd(sw,pk,fpts)}_2  s_{i+1}...$.
	The path prefixes $s_0 \xhookrightarrowdbl[]{} ^* s_{i+1}$ of $\rho$ and $s_0 \hookrightarrow^*_2 s_{i+1}$ of $\hat{\rho}$ are stutter equivalent as $h{\text -}fwd(sw,pk,fpts)$ is invisible.
	By a similar reasoning it can be deduced that the rest fragments from $s_{i+1}$ of $\rho$ and $\hat{\rho}$ are stutter equivalent.
	Next, applying the same reasoning from $s_{i+1}$ on in $\hat{\rho}$ we end up with an execution $\dot{\rho}$ in $\mathcal{M}_2$ such that $\rho \stutt \dot{\rho}$.
\end{proof}

\noindent\textbf{Theorem \ref{thm2}}
\begin{proof}
	Follows immediately by Lemmas \ref{lemmaimpl1} and \ref{lemmaimpl2}.
\end{proof}

\noindent\textbf{Theorem \ref{thm3}}
\begin{proof}
	Follows directly from Theorems \ref{thm1} and \ref{thm2}.
\end{proof}

\newpage

\noindent Cycles proof:\\

\noindent It is worth pointing out that if all the edges of a cycle are safe ones, then any unsafe action enabled at some state $s$ of the cycle is never included in $ample(s)$ of the cycle for any state $s$ of it.

{A handy way to reason about independence of actions is as follows. 
	%
	%
	Let $\mathcal{M}  = (S, A, \hookrightarrow, s_0, AP, L)$ be a transition system over a finite set $X$ of variables.
	Since states represent  evaluations of system variables, the consequence on a variable to taking a transition is formalised as
	%
	$$ conseq : A \times S  ~\to~  S  $$
	such that for $s \in S$ and $\alpha \in A(s)$, $\exists s' \in S $ with $ (s, \alpha, s') \in \hookrightarrow$.\\
	Let $\bar{x} = (x_1,...,x_k)$ be a variable tuple in $X$ and $s[\bar{x}] $ denote the current evaluation of $\bar{x}$ in a state $s$.
	$ conseq(\alpha, s[\bar{x}])$ expresses the (re)evaluation of variable $\bar{x}$ by performing $\alpha$ in state $s$, i.e., $ conseq(\alpha, s[\bar{x}]) = \alpha(s)[\bar{x}]$.
	%
	%
	%
	%
	%
	%
	An action $\alpha \in A $ is independent of every action $\beta \in A \setminus \{\alpha\}$, and vice versa, if for any state $s$ such that $ s \xhookrightarrow[]{  g_\alpha? \shortto \alpha  }    s'  $ and $ s \xhookrightarrow[]{  g_\beta? \shortto \beta  }    s''  $, 
	\begin{enumerate}[(a)]
		\item guard $g_\alpha$ does not refer to the variables that appear in $\beta$, 
		\item guard $g_\beta$ does not refer to the variables that appear in $\alpha$, and 
		\item $ conseq(\alpha \interleave \beta, s) = conseq\big(   (\alpha \triangleright \beta   )   ~\hexstar~  (\beta \triangleright \alpha   ),  s \big)  $
	\end{enumerate}
	where the interleaving operator $\interleave$ stands for the concurrent execution of independent activities (by enumerating all the possible orders in which they can be executed), $\triangleright$ for sequential execution, and $\hexstar$ for nondeterministic choice.
	Intuitively, $\alpha$ and $\beta$ are independent of each other when they access disjoint variables.
	The following statement is equivalent to condition (c):
	\[ conseq(\alpha, s[\bar{x}_\beta]) = s[\bar{x}_\beta]   ~\land~    conseq(\beta, s[\bar{x}_\alpha]) = s[\bar{x}_\alpha] \] 
	for all data $\bar{x}_\beta$ accessed by $\beta $ and $\bar{x}_\alpha$ accessed by $\alpha $.
	}

}

\ignore{
\section{Topology Setups}\label{appendix:topo}

The network setups used to evaluate MOCS are depicted in Figure \ref{table:topologies1} for the MAC learning application and Figure \ref{table:topologies2} for the stateful firewall. The topology setups for the stateless firewall follow the pattern of those with two switches in Figure \ref{table:topologies1}.

\begin{figure}[h]
	\captionsetup{width=1\textwidth}
	\includegraphics[width=1\linewidth]{figures/topo1.pdf}  
	
		\caption{Network topologies for verifying absence of loops in the MAC address learning application.} 

	\label{table:topologies1} 
\end{figure}


\begin{figure}[h]
	\centering
	\captionsetup{width=1\textwidth}
	\includegraphics[width=1\linewidth]{figures/topo2.pdf}  
	\caption{Network topologies for the stateful firewall.} 
	\label{table:topologies2} 
\end{figure}

}

\section{Controller Programs}\label{appendix:CP}

\setcounter{algocf}{0}
\renewcommand{\thealgocf}{CP\arabic{algocf}}
\renewcommand{\theHalgocf}{Supplement.\thealgocf}
\resizebox{1\textwidth}{!}{%
	\begin{minipage}{1.8\textwidth}
		\begin{algorithm}[H]
			\DontPrintSemicolon
			\SetKwInOut{Input}{Input}
			\SetKwInOut{Output}{Output}
			\SetKwFunction{FMain}{pktIn}
			\SetKwProg{Fn}{handler}{:}{}
			\SetAlgorithmName{\large Controller Program}{}
			
			
			%
			
			%

			\Fn{\FMain{\upshape{\texttt{sw, pkt}}}}{

				\If(\textcolor{gray}{\tcp*[f]{Otherwise, pkt is dropped silently}} ){\upshape{\texttt{\textcolor{black}{\textbf{not}} pkt.SSH}}}{
					\texttt{send\_message}\big(\texttt{PacketOut(pkt, 2), sw}\big)   \;
				}		
				
				$\texttt{rule1} \gets \texttt{\big \{\{prio} \gets \texttt{10\}, \{SSH } \gets \texttt{1~~\}, \{in\_port} \gets \texttt{*\}, \{fwd\_port} \gets \texttt{drop\}\big \}  } $\;
				
				$\mathrlap{\texttt{rule2} \gets \texttt{\big \{\{prio} \gets \texttt{~1\}, \{SSH } \gets \texttt{*~~\},}}\hphantom{\texttt{rule1 = \big \{\{prio = 10\}, \{prot = SSH\}, }}  \texttt{\{in\_port} \gets \texttt{1\}, \{fwd\_port} \gets \texttt{2  ~ \}\big \}  }$ \textcolor{gray}{\tcp*[r]{asterisk (*) matches any value}}
				
				$\mathrlap{\texttt{rule3} \gets \texttt{\big \{\{prio} \gets \texttt{~1\}, \{SSH } \gets \texttt{*~~\},}}\hphantom{\texttt{rule1 = \big \{\{prio = 10\}, \{prot = SSH\}, }}  \texttt{\{in\_port} \gets \texttt{2\}, \{fwd\_port} \gets \texttt{1   ~~\}\big \}  }$\;
				
				\ForAll(\textcolor{gray}{\tcp*[f]{Switches is the set of all switches}}){\upshape{\texttt{s}} $\in$ \upshape{\texttt{Switches}}}{
					\texttt{send\_message}\big(\texttt{FlowMod\big(add(rule2)\big), s}\big)  \;
					\texttt{send\_message}\big(\texttt{FlowMod\big(add(rule1)\big), s}\big)  \;
					\texttt{send\_message}\big(\texttt{BarrierReq\big(b\_id\big), ~~~s}\big)  \textcolor{gray}{\tcp*[r]{b\_id is a barrier identifier}}
					\texttt{send\_message}\big(\texttt{FlowMod\big(add(rule3)\big), s}\big)  \;
				}
			}
			\caption{\large A stateless firewall filter with control messages reordering bug. In a bug-free program (the one we used to verify in \S\ref{experimental-evaluation}), \texttt{rule1} should be sent first and followed by a barrier. Property: \emph{``neither host should be accessed over {\sc ssh}"}. Formally, $\Box\big(\forall h\, {\in}\, \mathit{Hosts}\,\forall \mathit{pkt}\, {\in}\, \mathit{h.rcvq} \,.\, \neg \mathit{pkt}.\mathrm{\textsc{ssh}})$.}
			\label{rq1-alg11}
		\end{algorithm}
	\end{minipage}%
}
\\\\\\
\resizebox{1\textwidth}{!}{%
	\begin{minipage}{1.8\textwidth}
		\begin{algorithm}[H]
			\DontPrintSemicolon
			\SetKwInOut{Input}{Input}
			\SetKwInOut{Output}{Output}
			\SetKwFunction{FMain}{pktIn}
			\SetKwFunction{FBar}{barrierIn}
			\SetKwProg{Fn}{handler}{:}{}
			\SetAlgorithmName{\large Controller Program}{}

			\Fn{\FMain{\upshape{\texttt{sw,  pkt}}}}{
				
				\uIf(\textcolor{gray}{\tcc*[f]{allowed\_conn is a fixed  ~~~~\textcolor{white}{a}~~~~~~~~~~~~~~~~~~~~~~~~~~~~~~~~~~~~~~~~~\hspace{2.6mm}~~~~~~~~~~~~~~~~~~~~~~~~~~~~~~~~~~~~~~~~~~~* whitelist of TCP   ~~~~\textcolor{white}{a}~~~~~~~~~~~~~~~~~~~~~~~~~~~~~~~~~~~~~~\hspace{2.6mm}~~~~~~~~~~~~~~~~~~~~~~~~~~~~~~~~~~~~~~~~~~~~~~* socket connections  \textcolor{white}{a}~~~~~~~~~~~~~~~~~~~~~~~~~~~~~~~~~~~~~~~~~\hspace{2.6mm}~~~~~~~~~~~~~~~~~~~~~~~~~~~~~~~~~~~~~~~~~~~* (host, TCP\_port) $\mapsto$ ~~~~\textcolor{white}{a}~~~~~~~~~~~~~~~~~~~~~~~~~~~~~~~~~~~~~~~\hspace{2.6mm}~~~~~~~~~~~~~~~~~~~~~~~~~~~~~~~~~~~~~~~~~~~~~* (host, TCP\_port) ~~~~\textcolor{white}{a}~~~~~~~~~~~~~~~~~~~~~~~~~~~~~~~~~~~~~~~~~~~~~~~\hspace{2.6mm}~~~~~~~~~~~~~~~~~~~~~~~~~~~~~~~~~~~~}} ){\upshape{\texttt{allowed\_conn[pkt.src][pkt.src\_TCP\_port][pkt.dest][pkt.dest\_TCP\_port]   }}}{
					
					\texttt{send\_message}\big(\texttt{PacketOut(pkt, 2), sw}\big)   \;
					
					$\mathrlap{\texttt{rule1.src}}\hphantom{\texttt{rule2.dest\_TCP\_port}} \gets 	\texttt{pkt.src} $ \;
					$\mathrlap{\texttt{rule1.src\_TCP\_port}}\hphantom{\texttt{rule2.dest\_TCP\_port}} \gets 	\texttt{pkt.src\_TCP\_port} $ \;
					$\mathrlap{\texttt{rule1.dest}}\hphantom{\texttt{rule2.dest\_TCP\_port}} \gets 	\texttt{pkt.dest} $ \;					
									  $\texttt{rule1.dest\_TCP\_port}	\gets 	\texttt{pkt.dest\_TCP\_port} $      \; 													
					$\mathrlap{\texttt{rule1.fwd\_port}}\hphantom{\texttt{rule2.dest\_TCP\_port}} \gets 	\texttt{2} $ \;
					$\mathrlap{\texttt{rule1.prio}}\hphantom{\texttt{rule2.dest\_TCP\_port}} \gets 	\texttt{2} $ \;
					
					$\mathrlap{\texttt{rule2.src}}\hphantom{\texttt{rule2.dest\_TCP\_port}} \gets 	\texttt{pkt.dest} $ \;
					$\mathrlap{\texttt{rule2.src\_TCP\_port}}\hphantom{\texttt{rule2.dest\_TCP\_port}} \gets 	\texttt{pkt.dest\_TCP\_port} $ \;
					$\mathrlap{\texttt{rule2.dest}}\hphantom{\texttt{rule2.dest\_TCP\_port}} \gets 	\texttt{pkt.src} $ \;					
									  $\texttt{rule2.dest\_TCP\_port}	\gets 	\texttt{pkt.src\_TCP\_port} $      \; 							
					$\mathrlap{\texttt{rule2.fwd\_port}}\hphantom{\texttt{rule2.dest\_TCP\_port}} \gets 	\texttt{1} $ \;
					$\mathrlap{\texttt{rule2.prio}}\hphantom{\texttt{rule2.dest\_TCP\_port}} \gets 	\texttt{2} $ \;
					
					\ForAll(\textcolor{gray}{\tcp*[f]{access rules are uniform across all switches, any of which acting as firewall replica}}){\upshape{\texttt{s}} $\in$ \upshape{\texttt{Switches}}}{						
						\texttt{send\_message}\big(\texttt{FlowMod\big(add(rule1)\big), s}\big)  \;
						\texttt{send\_message}\big(\texttt{FlowMod\big(add(rule2)\big), s}\big)  \;
						\texttt{send\_message}\big(\texttt{BarrierReq\big(b\_id\big), ~~~s}\big)  \textcolor{gray}{\tcp*[r]{b\_id is uniquely associated with an allowed connection}}
					}
					
				}		
				\Else{
					
					\texttt{send\_message}\big(\texttt{PacketOut(pkt, drop), sw}\big)   \;

					$\mathrlap{\texttt{drop\_rule.src}}\hphantom{\texttt{drop\_rule.dest\_TCP\_port}} \gets 	\texttt{pkt.src} $ \;
					$\mathrlap{\texttt{drop\_rule.src\_TCP\_port}}\hphantom{\texttt{drop\_rule.dest\_TCP\_port}} \gets 	\texttt{pkt.src\_TCP\_port} $ \;
					$\mathrlap{\texttt{drop\_rule.dest}}\hphantom{\texttt{drop\_rule.dest\_TCP\_port}} \gets 	\texttt{pkt.dest} $ \;					
									   $\texttt{drop\_rule.dest\_TCP\_port}	\gets 	\texttt{pkt.dest\_TCP\_port} $      \; 													
					$\mathrlap{\texttt{drop\_rule.fwd\_port}}\hphantom{\texttt{drop\_rule.dest\_TCP\_port}} \gets 	\texttt{drop} $ \;
					$\mathrlap{\texttt{drop\_rule.prio}}\hphantom{\texttt{drop\_rule.dest\_TCP\_port}} \gets 	\texttt{1} $ \;
					
					\ForAll{\upshape{\texttt{s}} $\in$ \upshape{\texttt{Switches}}}{						
						\texttt{send\_message}\big(\texttt{FlowMod\big(add(drop\_rule)\big), s}\big)  \textcolor{gray}{\tcp*[r]{access restrictions are uniform across all replicas}} 
					}						
				}		
			}
				\;
		\Fn{\FBar{\upshape{\texttt{sw, xid}}}}{

			\texttt{controller\_view[b\_id][sw]} $\gets$ \texttt{\textcolor{violet}{\textbf{true}}} \textcolor{gray}{\tcc*[r]{controller\_view associates installed rules (through the respective b\_id) \textcolor{white}{a}~~~~~~~~~~~\hspace{2mm}~~~~~~~~~\quad\,~~~~~~~~~\enspace~~~* for respective allowed connections with switches  \textcolor{white}{a}~~~~~~~~~~~~~~~~~~~~~~~~\hspace{2.5mm}~~~~~~~~~~}} 
		}
			\caption{\large Stateful inspection firewall (Figure \ref{fig:prop2}). The property we verify is: \emph{``a packet is never dropped by a rule in a switch if the controller is aware of a matching rule being already installed in this switch"}. Formally:
				$\Box \Big([\mathit{drop_m(pkt, sw)}]   \neg \mathit{controller\_view[pkt.src][pkt.src\_TCP\_port][pkt.dest][pkt.dest\_TCP\_port][sw]}\Big)$ \newline where $[\mathit{drop_m(pkt, sw)}]P$ is short for 		
				$[\mathit{match(sw, pkt, r)}]\big((r.\mathit{fwd\_ports}=\texttt{drop}) \Rightarrow P\big)$.
		}
			
			\label{rq1-alg12}
		\end{algorithm}
	\end{minipage}%
}
\\\\\\
\newlength{\commentWidth}
\setlength{\commentWidth}{8.5cm}
\newcommand{\atcp}[1]{\textcolor{gray}{\tcp*[r]{\makebox[\commentWidth]{#1\hfill}}}}
\resizebox{1\textwidth}{!}{%
	\begin{minipage}{1.8\textwidth}
		\begin{algorithm}[H]
			\DontPrintSemicolon
			\SetKwInOut{Input}{Input}
			\SetKwInOut{Output}{Output}
			\SetKwFunction{FMain}{pktIn}
			\SetKwProg{Fn}{handler}{:}{}
			\SetAlgorithmName{\large Controller Program}{}

			\Fn{\FMain{\upshape{\texttt{sw,  pkt}}}}{
				\If(\textcolor{gray}{\tcp*[f]{MAC\_table associates sender with a switch port}}){\upshape{\texttt{\textcolor{black}{\textbf{not}} MAC\_table[sw][pkt.src]}}   }{
					\upshape{\texttt{MAC\_table[sw][pkt.src] $ \gets$ \texttt{pkt.in\_port} }} \;			
				}		
				
				\uIf{\upshape{\texttt{MAC\_table[sw][pkt.dest]}}}{
					\texttt{send\_message}\big(\texttt{PacketOut(pkt, MAC\_table[sw][pkt.dest]), sw}\big)   		
					
					$\mathrlap{\texttt{rule.src}}\hphantom{\texttt{rule.fwd\_port}} \gets 	\texttt{pkt.src} $ \;
					$\mathrlap{\texttt{rule.dest}}\hphantom{\texttt{rule.fwd\_port}} \gets 	\texttt{pkt.dest} $ \;					
					$\mathrlap{\texttt{rule.in\_port}}\hphantom{\texttt{rule.fwd\_port}} \gets  \texttt{pkt.in\_port}$ \;
					
					$\texttt{rule.fwd\_port}	\gets 	\texttt{MAC\_table[sw][pkt.dest]} $      \; 		
					$\mathrlap{\texttt{rule.prio}}\hphantom{\texttt{rule.fwd\_port}} \gets 	\texttt{1} $ \;
					\texttt{send\_message}\big(\texttt{FlowMod\big(add(rule)\big), sw}\big)   				}		
				\Else{
					\texttt{send\_message}\big(\texttt{PacketOut(pkt, flood\textbackslash\{pkt.in\_port\}), sw}\big)   	   \textcolor{gray}{\tcp*[f]{ \text{pkt} will be flooded to all ports except incoming one}}
					
				}		
			}
			\caption{\large MAC learning application\textsuperscript{$^\dag$} for verifying absence of loops. In order to keep track of the network devices the packet passes through (i.e. the packet path history), the packet type is augmented with a history bit-field \emph{reached}, where each bit represents a visited/unvisited switch. As packets are being flooded, their history bit-field is re-written. The loop freedom property asserts that \emph{``a packet should not come back to the same switch"}. Formally,
				$\Box \big(\forall \mathit{sw}\, {\in}\,\mathit{Switches} ~\forall \mathit{pkt}\, {\in}\,\mathit{sw.pq} \,. \,\neg \mathit{pkt.reached[sw]}\big)$.}			
			\label{rq1-alg13}
		\end{algorithm}
		\small\textsuperscript{$^\dag$} \url{https://github.com/noxrepo/pox/blob/412a6adb38cb646748c8cfb657549787ab6d2e88/pox/forwarding/l2_learning.py}
	\end{minipage}%
}
\\\\\\
\resizebox{1\textwidth}{!}{%
	\begin{minipage}{1.8\textwidth}
		\MyBox{starta}{enda}{red}
		\MyBox[-3ex]{startb}{endb}{gray}
\hspace{-5mm}		\begin{algorithm}[H]
			
			\DontPrintSemicolon
			\SetKwInOut{Input}{Input}
			\SetKwInOut{Output}{Output}
			\SetKwFunction{FMain}{pktIn}
			\SetKwFunction{FBar}{barrierIn}
			\SetKwProg{Fn}{handler}{:}{}
			\SetAlgorithmName{\large Controller Program}{}
			
			
				%
				
				%

			
			\Fn{\FMain{\upshape{\texttt{sw, pkt}}}}{
				

				\uIf{\upshape{\texttt{pkt.SSH and pkt.dest == S}}}{
					\tikzmark{startb}
					\uIf(\textcolor{gray}{\tcp*[f]{f is initialised as false. pkt is dropped silently} }){\upshape{\texttt{\textcolor{black}{\textbf{not}} f}}}{
						$\mathrlap{\texttt{f}}\hphantom{\texttt{drop\_rule.fwd\_port}}  \gets   \texttt{\textbf{true}}  $\;				
						$\mathrlap{\texttt{drop\_rule.prio}}\hphantom{\texttt{drop\_rule.fwd\_port}} \gets 	\texttt{1} $ \;
						$\mathrlap{\texttt{drop\_rule.SSH}}\hphantom{\texttt{drop\_rule.fwd\_port}} \gets 	\texttt{pkt.SSH} $ \;
						$\mathrlap{\texttt{drop\_rule.dest}}\hphantom{\texttt{drop\_rule.fwd\_port}} \gets  \texttt{pkt.dest}$ \;
						
						$\texttt{drop\_rule.fwd\_port}	\gets 	\texttt{drop} $      \; 							
						
						\ForAll{\upshape{\texttt{s}} $\in$ \upshape{\texttt{Switches}}}{
							\texttt{send\_message}\big(\texttt{FlowMod\big(add(drop\_rule)\big), s}\big)  \;
							\texttt{send\_message}\big(\texttt{BarrierReq\big(b\_id\big), ~~~~~~~s}\big)   \textcolor{gray}{\tcp*[f]{b\_id is a barrier identifier}}\tikzmark{endb}
						}													
					}
					\tikzmark{starta}	
					\Else{
						\texttt{send\_message}\big (\texttt{PacketOut(pkt, 2), sw}\big)   \;
						$\mathrlap{\texttt{rule.prio}}\hphantom{\texttt{rule.fwd\_port}} \gets  \texttt{2} $ \;
						$\mathrlap{\texttt{rule.SSH}}\hphantom{\texttt{rule.fwd\_port}} \gets  \texttt{pkt.SSH} $ \;
						$\mathrlap{\texttt{rule.dest}}\hphantom{\texttt{rule.fwd\_port}} \gets  \texttt{pkt.dest} $ \;
						
						$\texttt{rule.fwd\_port} \gets \texttt{2} $	\;	
						
						\ForAll{\upshape{\texttt{s}} $\in$ \upshape{\texttt{Switches}}}{
							
							\texttt{send\_message}\big(\texttt{FlowMod\big(add(rule)\big), s}\big) \hfill \tikzmark{enda}
						}
					}
				}
				\Else{
					... \;
				}		
			}
			
			\caption{\large Wrong nesting level bug: Executing the \texttt{else}-branch - shaded red - would violate the policy that \emph{``server S (Figure \ref{fig:prop1}) should not be accessed over {\sc ssh}"}, $\Box(\forall \mathit{pkt}\, {\in}\, \mathit{S.rcvq} \,.\, \neg \mathit{pkt}.\mathrm{\textsc{ssh}})$.}
			\label{rq1-alg2}
		\end{algorithm}
	\end{minipage}%
}
\\\\\\
	\resizebox{1\textwidth}{!}{%
	\begin{minipage}{1.8\textwidth}
		\begin{algorithm}[H]
			\DontPrintSemicolon
			\SetKwInOut{Input}{Input}
			\SetKwInOut{Output}{Output}
			\SetKwFunction{FMain}{pktIn}
			\SetKwFunction{FBar}{barrierIn}
			\SetKwProg{Fn}{handler}{:}{}
			\SetAlgorithmName{\large Controller Program}{}

			\Fn(\textcolor{gray}{\tcp*[f]{Assumption: a drop-all rule with priority 0 is installed in switch B (Fig.\ref{fig:prop1})}}){\FMain{\upshape{\texttt{sw, pkt}}}}{
							
				\uIf(\textcolor{gray}{\tcc*[f]{b\_id is uniquely associated with rule\_S which \textcolor{white}{a}~~~~~~~~~~~~~~~~~~~~~~~~~~~~\,\,~~~~~~\hspace{1.4mm}~~~~~~~~~~~~~~~~~~~~~~~~~~~~* overrides the drop-all entry at B, and  \textcolor{white}{a}~~~~~~~~~~~~~~~~~~~~~~~~~~~~~~~~~~~~~~~~~~~~~~~\,\,~\hspace{1.3mm}~~~~~~~~~~~~~~* allows packets to be forwarded to S through  \textcolor{white}{a}~~~~~~~~~~~~~~~~~\hspace{1.3mm}~~~~~~~~~~~~~~~~~~~~~~~~~~~~~~~~~~~~~~~\,\,~~~~~~* port 2 \textcolor{white}{a}~~~~~~~~~~~~~~~~~~~~~~~~~~~~~~~~~\hspace{.8mm}~~~~~~~~~~~~~~~~~~~~~~~~~~~~~}}){\upshape{\texttt{pkt.dest == S  and  BarrierRes(b\_id) not received}}}  {	 									
						\If(\textcolor{gray}{\tcc*[f]{packets\_held is temporarily storing packets sent by B until consistent \textcolor{white}{a}~~~~~~~~~~~~~~~~~~~~~~~~\,\,~~~~~~\hspace{1.4mm}~~~~* update is complete \textcolor{white}{a} ~~~~~~~~~\textcolor{white}{a} ~~~~~~~~~~~ \textcolor{white}{a}~\textcolor{white}{a}~~~~~~~\hspace{1.2mm}~~~ ~~~~~~\,\,~ ~~~~\hspace{.3mm}~~~~~~~\textcolor{white}{a}}}){\upshape{\texttt{\textcolor{black}{\textbf{not}} packets\_held[sw][pkt]  }}}{
							\upshape{\texttt{packets\_held[sw][pkt] $ \gets$ \texttt{\textbf{true}} }} \;			
							$\texttt{rule\_S} \gets \texttt{\big \{\{dest} \gets \texttt{S\}, \{fwd\_port} \gets \texttt{2\}, \{prio} \gets \texttt{2\}\big \}  } $\;												
							\texttt{send\_message}\big(\texttt{FlowMod\big(add(rule\_S)\big), B}\big)   	\;
							\texttt{send\_message}\big(\texttt{BarrierReq\big(b\_id\big), ~~~~B}\big)																	
					}					
				}		
				\Else{
						\texttt{send\_message}\big (\texttt{PacketOut(pkt, 2), sw}\big)   \;
				}		
			}
		\;
			\Fn{\FBar{\upshape{\texttt{sw, xid}}}}{
					\If{\upshape{\texttt{xid == b\_id}}}{			
					$\texttt{rule\_S} \gets \texttt{\big \{\{dest} \gets \texttt{S\}, \{fwd\_port} \gets \texttt{2\}, \{prio} \gets \texttt{2\}\big \}  } $\;												
						\ForAll(\textcolor{gray}{\tcp*[f]{all switches except B}}){\upshape{\texttt{s}} $\in$ \upshape{\texttt{Switches\textbackslash\{B\}}}}{
							\texttt{send\_message}\big(\texttt{FlowMod\big(add(rule\_S)\big), s}\big)
					}

				\While{\upshape{\texttt{packets\_held[swi][p] for some (p, swi) \texttt{\textbf{and}} p.dest == S}}}{
					\upshape{\texttt{packets\_held[swi][p]}} $\gets$ \texttt{\textbf{false}}    \textcolor{gray}{\tcp*[r]{swi is the switch packet p was sent from}} 
					\texttt{send\_message}\big (\texttt{PacketOut(p, 2), swi}\big)   
				}														  																		
				}		
			}
			\caption{\large Consistent updates. We verify the property that \emph{``a packet destined to server S is never dropped at any switch"}. Formally:
				$\Box \big( [\mathit{drop_{mf}(pkt,sw)}]   \neg (\mathit{pkt.dest = S})  \big)$, where $[\mathit{drop_{mf}(pkt,sw})]P$ is  short for $[\mathit{match(sw, pkt, r)}]\big((r.\mathit{fwd\_ports}=\texttt{drop}) \Rightarrow P\big) \land [\mathit{fwd}(\mathit{sw, pkt}, \mathit{fwd\_ports})]\big((\mathit{fwd\_ports}=\texttt{drop}) \Rightarrow P\big).$
							}
			\label{rq1-alg3}
		\end{algorithm}
	\end{minipage}%
}

\newpage

\ignore{
\noindent \textbf{Disproof for the $(0,\infty)$ abstraction of Kuai }
Let $r$ a flow entry which matches packet $pkt$ in switch $A$ (Fig. \ref{fig:prop2}) and adds it in the $pq$ of switch $B$.
Then, the property $\Box (pkt \in B.pq \implies pkt \in A.pq)$ holds always in the abstract transition system $TS_4$ with $(0,\infty)$ packet queues but could be violated in the original ($TS_3$).\\
We fix this by redefining the specification language such that no more then one atomic propositions of the property are allowed to assert about the (non)existence of the same packet in different buffers. This way, the property is restricted to asserting about one only occurrence of an object in all buffers, and only then the $(0,\infty)$ abstract model is a property-preserving abstraction, i.e., a homomorphic image of the unabstracted one.
}

\ignore{

\newpage

\begin{defn}[SDN instantiation]
	\label{def:instantiation}
	An SDN instantiation $sdn_i$ is the pair $(\lambda, \textsc{controller})$.
\end{defn}

\begin{defn}[Context]
	\label{def:context}
	A context {\sc ctx} is a pair of an SDN instantiation $sdn_i$ and a state property $\varphi$.
\end{defn}

\begin{defn}[Subtrace]
	\label{subtrace}
	A subtrace $\sigma'$ is a trace that can be derived from another trace $\sigma$ by deleting some or no elements without changing the original order of the remaining elements, where elements are sets of atomic propositions that are valid in the states of the corresponding path. We denote $\sigma' \subtrace ~\sigma$.
\end{defn}

\begin{defn}[Trace Fragment]
	A trace fragment $\sigma'$ of a trace $\sigma$ is a trace which consists a contiguous sequence of state labels within $\sigma$. The trace fragment is a refinement of the subtrace.
\end{defn}

\noindent \textbf{\toolname\ inclusiveness}
In the following we will show that Kuai model's behaviour-set is subsumed into \toolname\ (equivalently, \toolname\ is inclusive of Kuai). Intuitively, a transition system is subsumed into another if every behaviour in the former can be mimicked by at least one behaviour in the latter.
We analyse the inclusiveness of \toolname\ before any abstraction is applied to Kuai and \toolname\ because each abstract transition system is inclusive of the one it was abstracted from.

\toolname\ incorporates all actions of Kuai but \emph{barrier}.
With regard to the \emph{wait}-variable that is part of the state in Kuai (but not in \toolname) we remind that Kuai uses it as a lock (binary semaphore) in order to orchestrate threads by enforcing a mutual exclusion concurrency control policy: whenever a \emph{nomatch} occurs the mutex for the switch in question is locked and it is unlocked by the \emph{fwd} action of the thread \emph{nomatch-ctrl-fwd} which refers to the same packet; all other threads are forced to wait.
Via this form of synchronisation, the actions \emph{nomatch-ctrl-fwd} are not interleaved with other actions in the switch.\\
The existence of a barrier in the $cq$ of a switch is another lock in Kuai model\footnote{This is done via the predicate $noBarrier(sw)$ which guards all the actions.} used to prioritise the installation of all control messages prior to the barrier. 
Through these locks, Kuai restricts the interleavings of the concurrent threads.
On the other hand, \toolname\ is asynchronous in this regard: it gets rid of the two locks and generalises Kuai derestricting interleavings of transitions. Also, the properties in Kuai do not use assertions over the \emph{wait}-variable, so the \emph{wait}-mode is not projected in any trace.

Two traces $\sigma$ and $\sigma'$ are said to be stuttering equivalent if they
are identical after removing all repeated labels of the states with atomic proposition, denoted by $\sigma \stutt \sigma'$.

\begin{thm}[\toolname\ inclusiveness]
	\label{thm:inclusiveness}
	Let 
	$\mathcal{M}  = (S, A, \hookrightarrow, s_0, AP, L)$ 
	and
	$\mathcal{M}_ K  = (S_K, A_K, \hookrightarrow_K, s_0, AP, L)$
	the original models of \toolname\ and Kuai, respectively, and 
	$Traces(\mathcal{M})$,
	$Traces(\mathcal{M}_ {K})$
	denote the set of all initial traces in these models.
	The set of traces of \toolname\ is inclusive of stutter-equivalent traces of Kuai, denoted $\mathcal{M}_ {K} \sqsubseteq \mathcal{M}$, where $\sqsubseteq$ is defined by:
	\[\mathcal{M}_ {K} \sqsubseteq \mathcal{M} \iff \forall \sigma_{\!_K} \in Traces(\mathcal{M}_ {K})  ~\exists \sigma_{\!_M} \in Traces(\mathcal{M}) ~.~ \sigma_{\!_K} \stutt \sigma_{\!_M} \]
\end{thm}

\begin{proof}
	Let $B(sw)$ represent the sequence of all control messages ($add/del$ actions) in the $sw.cq$ prior to the first barrier. Further, $s_{\setminus[wait]}$ is a shorthand notation for the evaluation obtained from state $s$ by disregarding the variable $wait$.
	
	\noindent From the definition of the transitions systems of Kuai and \toolname, it follows that:
		\begin{align*}
		\infer{s_{\setminus[wait]} \xhookrightarrow[]{\alpha} s'_{\setminus[wait]} }{s \xhookrightarrow[]{\alpha}_K s' ~\land~ \alpha \in A_K \setminus Barrier } \  
	\\[10pt]
	\infer{s_{\setminus[wait]} \xhookrightarrow[]{B(sw)} s'_{\setminus[wait]}  }{s \xhookrightarrow[]{barrier(sw)}_K s' }
	\end{align*}
	Since the atomic propositions in Kuai do not assert about the $wait$ variable, and furthermore, the $add/del$ actions are invisible, it always follows that $L(s_{\setminus[wait]} ) =  L(s)$ and $L(s'_{\setminus[wait]} ) = L(s')$.
	Hence, by applying the above rules, it follows directly that every trace which occurs in $\mathcal{M}_ {K}$ can be converted into an equivalent trace in $\mathcal{M}$.
\end{proof}

\noindent The following corollary follows directly from Thm. \ref{thm:inclusiveness}.
\begin{corollary}[Property preserving] 
	For a context $\textsc{ctx}=(sdn_i, \varphi)$, if $\mathcal{M}(sdn_i) \models \Box \varphi$, then $\mathcal{M}_K(sdn_i) \models \Box \varphi$.
\end{corollary}

\ignore{
\begin{thm}[\toolname\ inclusiveness]
	\label{thm:correct}
	Let $Runs(\mathcal{M})$ denote the set of all initial runs of model $\mathcal{M}$.
	The set of behaviours of $\mathcal{M}_ {\toolname}  = (S_M, A_M, \hookrightarrow_M, s_0, AP, L)$ is inclusive of behaviours of $\mathcal{M}_ {Kuai}  = (S_K, A_K, \hookrightarrow_K, s_0, AP, L)$, denoted $\mathcal{M}_ {Kuai} \sqsubseteq \mathcal{M}_ {\toolname}$, where $\sqsubseteq$ is defined by:
	\[\mathcal{M}_ {Kuai} \sqsubseteq \mathcal{M}_ {\toolname} \iff \forall \rho_{\!_K} \in Runs(\mathcal{M}_ {Kuai})  ~\exists \rho_{\!_M} \in Runs(\mathcal{M}_ {\toolname}) ~\mathrm{s.t.:}  \]
	\begin{enumerate}[leftmargin=2cm]
		\item $L(\rho_{\!_K}) \subtrace ~L(\rho_{\!_M})$, or
		\item for every transition $s \xhookrightarrow{barrier_i(\cdot)}_K s'$ in $\rho_{\!_K}$, if we split $barrier_i$ into the add and del-actions which are flushed by the $barrier_i$,\\
		then we obtain $\rho'_{\!_K} \stutt \rho_{\!_K} : L(\rho'_{\!_K}) \subtrace ~L(\rho_{\!_M})$. 
	\end{enumerate}
	
\end{thm}
\begin{proof}
	$ $\newline
	\vspace{-.4cm}
	\begin{enumerate}[(1)]
		\item This is the case when there is no any \emph{barrier} transition in $\rho_{\!_K}$. \toolname\ incorporates all actions of Kuai but \emph{barrier}. 
		With regard to the \emph{wait}-variable that is part of the state in Kuai (but not in \toolname) we remind that Kuai uses it as a lock (binary semaphore) in order to orchestrate threads by enforcing a mutual exclusion concurrency control policy: whenever a \emph{nomatch} occurs the mutex is locked and it is unlocked by the \emph{fwd} action of the thread \emph{nomatch-ctrl-fwd} which refers to the same packet; all other threads are forced to wait.
		Via this form of synchronisation, the actions \emph{nomatch-ctrl-fwd} are merged (not interleaved with other actions) and the size of \emph{rq} is bounded above  to 1. This way, Kuai restricts the interleavings of the concurrent threads.
		On the other hand, \toolname\ is asynchronous in this regard: it generalises Kuai by derestricting interleavings of transitions. Also, the properties in Kuai do not use assertions over the \emph{wait}-variable, so the \emph{wait}-mode is not projected in any trace. 
		Thus, by Definition \ref{subtrace} it follows that $L(\rho_{\!_K}) \subtrace ~L(\rho_{\!_M})$. 
		
		\item Let $\rho_{\!_K} = s_0 ...  s_{i-1}  \xhookrightarrow{barrier_i(\cdot)}_K s_i ...$. We replace the transition $s_{i-1}  \xhookrightarrow{barrier_i(\cdot)}_K s_i$ with a  contiguous sequence of \emph{add} and \emph{del} transitions corresponding to all the control messages up to the last barrier in the control queue, in the same order they are flushed by the $barrier_i$. This sequence could be a subtrace of $\rho_{\!_M}$ (due to the interposition of \emph{brepl} if there are more than one barriers in the control queue in state $s_{i-1}$) or a trace fragment (if only one barrier). Then this case falls into case (1).
		
	\end{enumerate}
	
	More formally:\\
	
	\[ \infer[\mbox{\small ~$\textup{if}~pre\!-\!\check{\alpha}(\cdot) \in Enable(\check{\alpha}(\cdot))$  } ]{s \xrightarrow[]{\mathit{merge-}\check{\alpha}(\cdot)}_m s'' }{s \xhookrightarrowdbl[]{\mathit{pre-}\check{\alpha}(\cdot)} s'  \xhookrightarrowdbl[]{\check{\alpha}(\cdot)} s''  } \  
	\qquad\qquad
	\infer{s \xrightarrow[]{\alpha(\cdot)}_m s' }{s \xhookrightarrowdbl[]{\alpha(\cdot)} s' }
	\]
	
\end{proof} 
}

\newpage

\noindent If $M$ is a finite multi-set, then a \emph{multi-set permutation} of $M$ is an ordering of elements of $M$ in which each element appears exactly as often as is its multiplicity in $M$. 
A multi-set permutation whose multiple instances are collapsed into one is called a \emph{collapsed permutation}.

\ignore{
\begin{defn}[Order-sensitive Controller Program]
	\label{def:ord-sens}
	Let $A'(s) \subseteq A$ denote the finite multi-set of all actions $ctrl(\cdot)$ and $bsync(\cdot)$ enabled in state $s$, and $G\big(A'(s)\big)$ the set of all collapsed permutations of $A'(s)$. A controller program is order-sensitive if there exists a state $s \in S$ such that the program state varies when changing the input collapsed permutation over $G\big(A'(s)\big)$.
\end{defn}
}

Let $A'(s) \subseteq A$ denote the finite multi-set of all actions $ctrl(\cdot)$ and $bsync(\cdot)$ enabled in state $s$, $G\big(A'(s)\big)$ the set of all multi-set permutations of $A'(s)$, and $g\big(A'(s)\big)$ is obtained by collapsing all the permutations of $G\big(A'(s)\big)$. 
\ignore{
\begin{defn}[Order/Multiplicity-sensitive Controller Program]
	\label{def:mult-sens}
	A controller program {\sc cp} is order or multiplicity-sensitive if there exists a state $s \in S$ such that the program state varies when changing the input multi-set permutation from $G\big(A'(s)\big)$ -- for the sake of brevity, we say ``{\sc cp} is sensitive''.
\end{defn}
}
\begin{defn}[Order-sensitive Controller Program]
	\label{def:ord-sens}
	A controller program {\sc cp} is order-sensitive if there exists a state $s \in S$ and two collapsed permutations in $g\big(A'(s)\big)$ with different orders 
	such that the program state varies when changing the input between the two permutations.
\end{defn}

\begin{defn}[Multiplicity-sensitive Controller Program]
	\label{def:mult-sens}
	A controller program {\sc cp} is multiplicity-sensitive to packet $pkt$ arrived from switch $sw$ if there exists a state $s \in S$, and two multi-set permutations in $G\big(A'(s)\big)$ which differs only on the multiplicity of $ctrl(sw,pkt,s.\gamma.cs)$ such that the program state varies when changing the input between the two multi-set permutations.
	For the sake of brevity, we say ``{\sc cp} is m-sensitive to $(sw,pkt)$ for some state $s$'' or ``{\sc cp} is m-sensitive to $ctrl(sw,pkt,s.\gamma.cs)$". \\
	Similarly, {\sc cp} is multiplicity-sensitive to barrier xid from sw if there exists a state $s \in S$, and two multi-set permutations in $G\big(A'(s)\big)$ which differs only on the multiplicity of $bsync(sw,xid,s.s.\gamma.cs)$ such that the program state varies when changing the input between the two.
	Throughout we write $\widetilde{\alpha}(\cdots)$ for actions the {\sc cp} is m-sensitive to. 
\end{defn}
\noindent If the {\sc cp} is order-sensitive or multiplicity-sensitive to at least a packet, for the sake of brevity, we say informally (but still precisely) that ``{\sc cp} is sensitive".

\noindent \textbf{(0,1) Abstraction }
We define a $(0,1)$ counter abstraction, where the abstract occurrences are 0 when there are no elements in the buffer, and 1 if there is one or more elements in the buffer. The difference between $(0,1)$ and $(0,\infty)$ is that in the former abstraction the elements in the buffer the abstraction is applied to are consumable, i.e., de-bufferable.
Given a multiset $m$, we further define $under(m)(d)$ which
sets the non-zero multiplicity of $d$ in $m$ to $1$, and $over(m)(d)$ which sets the non-zero instances of $d$ to $\infty$.\\
\noindent \textbf{$\mathbf{(0,1)/(0,\infty)}$ Abstraction for \emph{rq}, $\mathbf{(0,1)}$ for \emph{fq}} \\
We explore a more contextually specific level of abstraction for the $rq$.
Intuitively, for every pair $(sw,pkt)$ in $rq$, if the controller program is m-sensitive to $ctrl(sw,pkt,cs)$ then we over-approximate its instances in $rq$ to $\infty$; otherwise the occurrences are under-approximated to 1. For $fq$ we apply $(0,1)$ abstraction.\\
We write $approx(s)$ to denote the state that is obtained from $s$ after under/over-approximating the instances in the $rq$ and all $fq$-s, where for every state $s \in S$, for every $sw \in Switches$, and non-zero instances of $(sw,pkt) \in s.\gamma.rq$, and $(pkt, pts) \in s.\delta(sw).fq$, $approx(s)$ is defined as:

\begin{itemize}
	\item $over(s.\gamma.rq)(sw,pkt) \otimes under(s.\delta(sw).fq)(pkt, pts)$ if {\sc cp} is m-sensitive to $(sw,pkt)$
	\item $under(s.\gamma.rq)(sw,pkt) \otimes under(s.\delta(sw).fq)(pkt, pts)$,  otherwise
\end{itemize}
where $\otimes$ denotes simultaneous operations in buffers in a state.
\ignore{
\[
approx(s) = 
\left\{
\begin{aligned}
&over(s.\gamma.rq)(sw,pkt)    &~\text{if {\sc cp} is m-sensitive to $(sw,pkt)$}& \\
&under(s.\gamma.rq)(sw,pkt)  &~\text{if {\sc cp} is not m-sensitive to $(sw,pkt)$} &\\
&under(s.\delta(sw).fq)(pkt, pts)  &~\text{  }&
\end{aligned}
\right.
\]
}
Let 
$\widetilde{Ctrl}$ the set of all $ctrl(sw,pkt,cs)$ actions such that the {\sc cp} is m-sensitive to $(sw,pkt)$.
Given a context $\textsc{ctx}=(sdn_i, \varphi)$ and a transition system $\mathcal{M}(sdn_i)  = (S, A, \hookrightarrow, s_0, AP, L)$, we define
$\mathcal{M}_{rq}  = (S_{rq}, A_{rq}, \hookrightarrow_{rq}, s_0, AP_{rq}, L_{rq} )$, where
\[
S_{rq} = 
\{approx(s)  \mid s \in S  \}
\]
%
$A_{rq} = A$, $AP_{rq} = AP$,
$L_{rq} = L$, and for $s,s' \in S_{rq}$, $s \xhookrightarrow{\alpha}_{rq} s'$ is defined as:
\begin{align*}\label{rul:rq}
\infer [$R$ ^{s_0}]{approx(s_0) \xhookrightarrow[]{\alpha}_{rq} approx(s') }  {s_0 \xhookrightarrow[]{\alpha} s'     } \
\\[10pt]
\infer [$R$_1 ^{(0,\infty)}]{approx(s) \xhookrightarrow[]{\widetilde{ctrl}(sw,pkt,cs)}_{rq} approx(s')[m_{rq}(pkt) := 1] }  {s \xhookrightarrow[]{\widetilde{ctrl}(sw,pkt,cs)} s'   ~\land pkt \notin^1 s'.\gamma.rq  } \
\\[10pt]
\infer [$R$_2 ^{(0,\infty)}]{approx(s')[m_{rq}(pkt) := 1] \xhookrightarrow[]{\alpha}_{rq} approx(s'') }  {s \xhookrightarrow[]{\widetilde{ctrl}(sw,pkt,cs)} s' \xhookrightarrow[]{\alpha} s''  ~\land pkt \notin^1 s'.\gamma.rq  ~\land \alpha \notin  \widetilde{Ctrl} } \
\\[10pt]
\infer [$R$^{(0,1)}]{approx(s') \xhookrightarrow[]{\alpha_2}_{rq} approx(s'') }  {s \xhookrightarrow[]{\alpha_1} s_1 \xhookrightarrow[]{\alpha_2} s_2   ~\land    \bigwedge\limits_{i=1,2} \alpha_i = \widetilde{ctrl}(sw_i,pkt_i, cs_i)  \implies pkt_i \in^1 s_i.\gamma.rq}   \
%
%
\ignore{
\\[10pt]
\infer [$R.$ ~rq^{(0,1)}]{s \xhookrightarrow[]{nomatch(sw,pkt)}_{rq} s'[m_{rq}(pkt) := 1] }  {s \xhookrightarrow[]{nomatch(sw,pkt)} s'     } \  
\\[10pt]
\infer [$R.$ ~rq^{(0,\infty)}]{s \xhookrightarrow[]{\widetilde{ctrl}(sw,pkt,cs)}_{rq} s'[m_{rq}(pkt) := 1] }  {s \xhookrightarrow[]{\widetilde{ctrl}(sw,pkt,cs)} s'  } \
\\[10pt]
\infer [$R.$ ~rq]{s \xhookrightarrow[]{\alpha}_{rq} s'  }  {s \xhookrightarrow[]{\alpha} s'  ~\land \alpha \notin (Nomatch \cup \widetilde{Ctrl}  )   } \  
%
}
\end{align*}
where 
$s[m_{rq}(pkt) := 1]$ stands for the state obtained from $s$ by updating the number of instances (multiplicity) of $pkt$ in $rq$ to $1$, and $\in^1$ is predicate that means `occurs at least once'.\\
\noindent Intuitively, the rules upper bound the size of $rq$ and $fq$ such that only one concrete instance can occur for each element: all other instances of the same object are abstracted away. 
The second and third rule deal also with all the $\widetilde{ctrl}$-actions which are visible. They abstract any concrete instance $\widetilde{pkt}$ for the packets the program is sensitive to by over-approximating them to $\infty$, ensuring that $\widetilde{pkt} \in^1 \gamma.rq  \implies \Box (\widetilde{pkt} \in^1 \gamma.rq ) $: a non-deletable concrete instance represents $\infty$ abstract instances.

\noindent It is worth noting that the two first rules which deal with $(0,\infty)$ abstraction do not match the safe actions. This is important for the cycle condition in Thm.\ref{thm:POR} to hold.

\noindent For the sake of generality, for every transition $r \xhookrightarrow[]{\alpha} s$ we extend $approx(s)$ to:
\[
\small
\overline{approx}(s) = 
\begin{cases}
approx(s)[m_{rq}(pkt) := 1]    &\text{if $\alpha = \widetilde{ctrl}(sw,pkt,cs) \land pkt \notin^1 s.\gamma.rq $} \\
approx(s)  &otherwise 
\end{cases}
\]
Then, the above rules can be collapsed into a single one:
\[
\infer [$R$ ^{(0,1,\infty)}]{\overline{approx}(s) \xhookrightarrow[]{\alpha}_{rq} \overline{approx}(s') }  {s \xhookrightarrow[]{\alpha} s'    } \
\]
\begin{thm}[Property preserving for $rq, fq$ -Abstraction]
	\label{thm:rq}
	If $\mathcal{M}_{rq} \models \Box \varphi$ then $\mathcal{M} \models \Box \varphi$.
\end{thm}

\begin{proof}
	
	Let $\rho = s_0 \xhookrightarrow[]{\alpha_1} s_1  \xhookrightarrow[]{\alpha_2} ... $ an initial run in $\mathcal{M}$. 
	We will attempt to construct a stutter equivalent run $\rho_{rq}$ in $\mathcal{M}_{rq}$.
We apply the rule R$^{(0,1,\infty)}$ for all transitions 
in $\rho$, whence
	$\rho_{rq} = \overline{approx}(s_0) \xhookrightarrow[]{\alpha_1}_{rq} \overline{approx}(s_1)  \xhookrightarrow[]{\alpha_2}_{rq} ... $.
	
\ignore{
	Let $ctrl(sw,pkt,s.\gamma.cs)$ enabled in $s$ by the first $nomatch(sw,pkt)$ in $\rho$.
	From Fig. \ref{graph:depend} it can be inferred that any $nomatch(sw,pkt)$ in $\rho$ triggers a chain of actions which starts with an unrolling pattern of the form:\\
	
	\includegraphics[scale=.5]{figures/nomatch.pdf}\\
	If the controller is not sensitive to $(sw,pkt)$, then every other $nomatch(sw,pkt)$ in $\rho$ after the first one will stop triggering new transitions beyond the highlighted path, due to the fact that the $cq$ and $pq$ are abstracted by $(0,\infty)$ and as such there is no change of the state caused by the unhighlighted transitions.\\
	The bounding of the instances of $(sw,pkt)$ in $rq$, to which the controller is not sensitive, allows the enabling of one only instance of $ctrl(sw,pkt,cs)$ while all other $ctrl(sw,pkt,cs)$ actions, which are concurrently enabled with the first one, are discarded. By extension, the respective $fwd$ actions which are enabled by $ctrl(sw,pkt,cs)$ in $\mathcal{M} $, are disregarded in $\mathcal{M}_{rq} $. Thus, the first rule removes the $ctrl(sw,pkt,cs)$ and $fwd(sw,pkt,pts)$ which are caused by a $nomatch(sw,pkt)$ which is not the first in $\rho$. In the following we will show that by disregarding these threads no observable behaviour is lost.
	Let 
	\begin{flalign}
	\rho' = & s \xhookrightarrow[]{nomatch(sw,pkt)} ...   s_i \xhookrightarrow[]{\alpha_{i+1}} s_{i+1} \xhookrightarrow[]{ctrl(sw,pkt,s_{i+1}.\gamma.cs)} s_{i+2} \xhookrightarrow[]{\alpha_{i+3}} s_{i+3}...\notag&\\
	 & s_k \xhookrightarrow[]{\alpha_{k+1}} s_{k+1} \xhookrightarrow[]{fwd(sw,pkt,pts)} s_{k+2} \xhookrightarrow[]{\alpha_{k+3}} s'&
	\end{flalign}
	be a subtrace of $\rho$ which shows the states and actions which are affected by the rules: these are related with the transitions enabled by the execution of a $nomatch(sw,pkt)$ which comes after another concurrently enabled $nomatch(sw,pkt)$ in $\rho$. By applying the first rule we get the following subtrace in $\mathcal{M}_{rq} $:
\begin{flalign}	
	\rho_{rq}' = &	s \xhookrightarrow[]{nomatch(sw,pkt)} ...   s_i \xhookrightarrow[]{\alpha_{i+1}} s_{i+1}  \xhookrightarrow[]{\alpha_{i+3}} s_{i+3}...
	 s_k \xhookrightarrow[]{\alpha_{k+1}} s_{k+1} \xhookrightarrow[]{\alpha_{k+3}} s'&
	\end{flalign}
	The states $s_{i+2}$ and $s_{k+2}$ which are collapsed, have the same labelling as the precursors because the atomic propositions can not assert about elements in the $rq$ and $fq$ (while $pq$s in this care are $(0,\infty)$-abstracted both for $sw$ and the switch the $fwd$ action adds $pkt$ to. Additionally, as analysed above, no other transitions are enabled or disabled as a result of ignoring copies of $ctrl$ and $fwd$ in $\rho'$.
}
\noindent There is a one-to-one correspondence between the transitions in $\rho$ and $\rho_{rq}$, while every state $s_i$ in $\rho$ differs from the corresponding state $\overline{approx}(s_i)$ in $\rho_{rq}$ only in the multiplicities of the objects in $rq$ and $fq$. Since the properties in our specification language do not assert about these buffers, it follows immediately that the labelling of the states for both $\rho$ and $\rho_{rq}$ is identical, i.e., $L(\rho) = L(\rho_{rq})$.

\end{proof}

Similarly, we apply contextual abstraction for $brq$; similar proofs can be carried out.
}

\ignore{

\clearpage

\noindent \textbf{Merging of safe actions.} In the following we consider further reductions relying on the fact that one can safely merge every transition of a safe action with its precursory enabling one. 

\noindent Let $\mathit{en-}\check{\alpha}$ denote the enabler action of $\check{\alpha}$, and $merge(\mathit{en-}\check{\alpha},\check{\alpha})$ the successive execution of $\mathit{en-}\check{\alpha}$ and $\check{\alpha}$.

\noindent We start with the following observations which follow directly from the definition of the  semantics of the actions:
\begin{enumerate}
	\item Every safe action $\check{\alpha}$ (see the first column of Table~\ref{tab:safeness}) in a run has a unique enabling action $\mathit{en-}\check{\alpha}$.
	\item if $\alpha$ enables $\check{\beta_1}$ and $\check{\beta_1}$ enables $\check{\beta_2}$ then $merge(\alpha,\check{\beta_1})$ enables $\check{\beta_2}$. This covers the cases where an action is both safe and safe-enabler. For instance, in the sequence $\mathit{nomatch-ctrl-fwd}$ which refers to the same \emph{PacketIn} message, $\mathit{ctrl}$ may be both.
	\item if $\alpha$ enables simultaneously $\check{\beta_1}, \check{\beta_2},\check{\beta_3}...$, then $merge(\alpha,\check{\beta_1})$ enables $\check{\beta_2}, \check{\beta_3}..$. This covers the case when a $\mathit{ctrl}$ action enables multiple safe $\mathit{fwd}$.
\end{enumerate}

\begin{thm}[Merging of safe actions]
	\label{thm:merge} Let $\mathcal{M}^\mathit{ample}_{(\lambda,\textsc{cp})} = (S_a, A, \hookrightarrowdbl, s_0, AP, L_a)$ be the optimised SDN network model from Thm.~\ref{thm:POR}.
	By merging safe actions with their unique enabling action, the transition system 
	$\mathcal{M}_{(\lambda,\textsc{cp})}^{merge}  = (S_m, A_m, \hookrightarrow_m, s_0, AP, L_m)$
	is obtained, which is induced by the transition relation $\hookrightarrow_m$ defined by the following rules:
	
	\begin{itemize}[label=$\blacktriangleright$]
		\item recursively, for every safe-enabler action $\mathit{en-}\check{\alpha} \in A$:
		\\\\\\
		\begin{equation*}\label{rul:merge1}
		\infer[ ]{ \begin{matrix} \\ \vspace{-.5pc} \\ 
			s \xrightarrow[]{\mathit{merge}(\mathit{en-}\check{\alpha},\check{\alpha})}_m   u 
			\text{ \raisebox{.5pc}{ \begin{rotate}{60} $\xhookrightarrowdbl[]{\xi_i}  t_i $\end{rotate}  }  }
			\!\!\!\!\!\!\!\!\!\!\! \text{ \raisebox{-.1pc}{ \begin{rotate}{-60} $\xhookrightarrowdbl[]{\psi_j}  u_j $\end{rotate}  }  }
			\end{matrix} ~~~
		}
		{
			\begin{matrix} 
			s \xhookrightarrowdbl[]{\mathit{en-}\check{\alpha}} t     \!\! \text{ \raisebox{.7pc}{ \begin{rotate}{60} $\xhookrightarrowdbl[]{\xi_i}  t_i $\end{rotate}  }  }       \!\!\!\!\!\!    \xhookrightarrowdbl[]{\check{\alpha}} u           
			\!\!\! \text{ \raisebox{-.1pc}{ \begin{rotate}{-60} $\xhookrightarrowdbl[]{\psi_j}  u_j $\end{rotate}  }  } 
			\\\\\\ \end{matrix}         
		} \ 
		\quad\quad 
		\infer[ ]{ \begin{matrix} \\ \vspace{-.5pc} \\ 
			s \xrightarrow[]{\mathit{merge}(\mathit{en-}\check{\alpha},\check{\alpha})}_m   u 
			\text{ \raisebox{.5pc}{ \begin{rotate}{60} $\xhookrightarrowdbl[]{\xi_i}  t_i $\end{rotate}  }  }
			\!\!\!\!\!\!\!\!\!\!\! \text{ \raisebox{-.1pc}{ \begin{rotate}{-60} $\xhookrightarrowdbl[]{\psi_j}  u_j $\end{rotate}  }  }
			\end{matrix} ~~~
		}
		{
			\begin{matrix} 
			s \xrightarrow[]{\mathit{en-}\check{\alpha}}_m  t     \!\! \text{ \raisebox{.7pc}{ \begin{rotate}{60} $\xhookrightarrowdbl[]{\xi_i}  t_i $\end{rotate}  }  }       \!\!\!\!\!\!    \xhookrightarrowdbl[]{\check{\alpha}} u           
			\!\!\! \text{ \raisebox{-.1pc}{ \begin{rotate}{-60} $\xhookrightarrowdbl[]{\psi_j}  u_j $\end{rotate}  }  } 
			\\\\\\ \end{matrix}         
		} \  
		\end{equation*}
		\\\\\\
		\item for $\alpha \in A$ not safe-enabler and not safe:
		\begin{equation*}\label{rul:merge2}
		\infer{s \xrightarrow[]{\alpha}_m s' }{s \xhookrightarrowdbl[]{\alpha} s' }
		\end{equation*}
	\end{itemize}
	where
	$A_m= A\setminus\{  \mathit{en-}\check{\alpha}, \check{\alpha}  \} \cup \{ \mathit{merge}(\mathit{en-}\check{\alpha},\check{\alpha}) \}$, 
	$S_m$ consists of those states that are reachable (under $\rightarrow_m$) from $s_0$, and $L_m(s) = L(s)$ for every $s \in S_m$.
	Actions $\xi_i, \psi_j$ are representatives for all enabled actions in states $t$ and $u$, respectively.
	\ignore{
		$\mathcal{M}^\mathit{merge}_{(\lambda,\mathrm{CP})} = (S, A_m, \rightarrow_m, s_0, AP, L)$   where  and $ \rightarrow_m$ is defined as follows:
		\[ \infer[\mbox{\small ~$\textup{if}~pre\!-\!\check{\alpha}(\cdot) \in Enable(\check{\alpha}(\cdot))$  } ]{s \xrightarrow[]{\mathit{merge-}\check{\alpha}(\cdot)}_m s'' }{s \xhookrightarrowdbl[]{\mathit{pre-}\check{\alpha}(\cdot)} s'  \xhookrightarrowdbl[]{\check{\alpha}(\cdot)} s''  } \  
		\qquad\qquad
		\infer{s \xrightarrow[]{\alpha(\cdot)}_m s' }{s \xhookrightarrowdbl[]{\alpha(\cdot)} s' }
		\]
	}
	Then, it holds that $\mathcal{M}^\mathit{merge}_{(\lambda,\textsc{cp})} \stutt \mathcal{M}^{\mathit{ample}}_{(\lambda,\textsc{cp})}$.
\end{thm}

\noindent Intuitively, the first two rules merge each enabler rule with its enablee, progressively: as per the observations 2 and 3 above, they are recursively invoked on the merged action of the reduced transition system from earlier steps.

\begin{proof} 
	
	The proof is immediate from the fact that any safe action $\check{\alpha}$ can disable neither of $\xi_i, \psi_j$, so any $\xi_i, \psi_j$ remains enabled in $u$.
\end{proof}

}}
	
\end{document}